\newtheorem{theorem}{Theorem}[section]
\newtheorem{thm-defn}[theorem]{Theorem/Definition}
\newtheorem{prop}[theorem]{Proposition}
\theoremstyle{definition}
\theoremstyle{remark}
\newcommand{\ignore}[1]{}{}
\definecolor{darkred}{rgb}{.7,0,0}
\definecolor{darkgreen}{rgb}{.15,.55,0}
\definecolor{darkblue}{rgb}{0,0,0.7}
\title[{A modified Hegselmann--Krause model with political parties}]{A modified Hegselmann--Krause model for interacting voters and political parties}
\author{Patrick H.~Cahill}
\address{School of Mathematics and Statistics, University of Sydney}
\email{patrick.cahill@sydney.edu.au}
\author{Georg A.~Gottwald}
\address{School of Mathematics and Statistics, University of Sydney}
\email{georg.gottwald@sydney.edu.au}
\begin{document}

\maketitle

\begin{abstract}
The Hegselmann--Krause model is a prototypical model for opinion dynamics. It models the stochastic time evolution of an agent's or voter's opinion in response to the opinion of other like-minded agents. The Hegselmann--Krause model only considers the opinions of voters; we extend it here by incorporating the dynamics of political parties which influence and are influenced by the voters. We show in numerical simulations for $1$- and $2$-dimensional opinion spaces that, as for the original Hegselmann--Krause model, the modified model exhibits opinion cluster formation as well as a phase transition from disagreement to consensus. We provide an analytical sufficient condition for the formation of unanimous consensus in which voters and parties collapse to the same point in opinion space in the deterministic case. Using mean-field theory, we further derive an approximation for the critical noise strength delineating consensus from non-consensus in the stochastically driven modified Hegselmann--Krause model. We compare our analytical findings with simulations of the modified Hegselmann--Krause model.
\end{abstract}

\smallskip

\small{
\noindent
{\bf Keywords:} 
Hegselmann--Krause model, interacting particle systems, sociological modelling, sociophysics 
}

\smallskip



\section{Introduction}
\label{Introduction}
Voting and elections are an essential part of modern democracies. Typically, elections consist of many voters and a few political parties. While the individual behaviour of voters and parties is difficult to predict, simple behavioural rules can produce complex behaviour that resembles known political dynamics. State-based models, such as the voter model, have long been used to model the propagation of opinions across a population \cite{AlbiEtAl17,redner2019reality,lanchier2022consensus}. In reality, voters rarely consider the parties for which they vote to share their opinions exactly. More complex considerations are made about which party's view is closer to their own. In addition, political parties respond to voters through political campaigns and by supporting or opposing policies.  In this paper, we propose an interacting particle system, which treats parties and voters as agents that evolve and interact within an opinion space.

The collective behaviour of interacting particle systems has been extensively studied \cite{AxelrodHamilton81,DeffuantEtAl00,CastellanEtAl09,BlondelEtAl09,motsch2014heterophilious,Chazelle15,Chazelle15b,EasleyKleinbergEtAl,CarrilloEtAl19}. The celebrated Hegselmann--Krause model describes the temporal evolution of opinions of interacting agents in a continuous opinion space \cite{krause1997soziale,HegselmannKrause02}. An agent's opinion evolves towards the average opinion of agents with similar opinions. The Hegselmann--Krause model has been used to model how opinions of experts evolve in small committees \cite{Hegselmann23}. We are interested here in the evolution of opinions of thousands of voters in a political landscape. Such systems are often modelled by agent-based systems such as the Deffuant--Weisbuch model which assumes that each voter interacts at any point in time only with one other randomly chosen voter \cite{DeffuantEtAl00}. The number of voters a single voter interacts with at once has an important impact on the dynamics \cite{UrbigEtAl08}. Both the Hegselmann--Krause model and the Deffuant--Weisbuch model work on the premise of {\em{bounded confidence}}, whereby voters only update their opinion by voters with opinions sufficiently close to theirs. \citet{NugentEtAl24b} recently showed that agent based models such as the Deffuant--Weisbuch model converge to the continuous differential Hegselmann--Krause model, under the assumption that voters interact frequently with only small changes in their opinions with each interaction. Moreover, social media allows for the interaction of many voters through personalized feeds which are a manifestation of bounded confidence, justifying the usage of the Hegselmann--Krause model to represent an electorate of voters in a political landscape. Differential equation models such as the Hegselmann--Krause model offer the advantage of being amenable to mathematical analysis. We hence adopt here the framework of the Hegselmann--Krause model. To account for any, possibly irrational, individual behaviour of voters additive noise is added to the dynamics of the Hegselmann--Krause model. Such dynamics may give rise to clustering dynamics and exhibits a phase transition from disagreement to consensus formation with decreasing noise strength. In recent years, several modifications to the original Hegselmann--Krause model were proposed to include grouped populations, self-belief and heterogeneity \cite{FuEtAl15,ChenEtAl17,HanEtAl19,ZhuEtAl23}, the effect of leadership voters \cite{WongkaewEtAl15,ZhuXie17,AtasEtAl21}, inertial effects \cite{ChazelleWang17} and underlying social network structure \cite{NugentEtAl24}. For recent reviews of opinion spreading models see \cite{CastellanEtAl09,LiuChen15}. These models only describe the mutual interaction of voters and neglect the dynamics of parties. Parties shape the opinion space of voters with a strong impact on voters' behaviour \cite{clive1989,ideology2004,kim2016}, and similarly voters correct and influence parties who are competing for votes \cite{zechmeister2003sheep,adams2012causes}. The competition for votes also leads to parties delineating themselves from other parties and carving out opinion space for themselves \cite{laver2005policy,adams2012causes}. We introduce an extended noisy Hegselmann--Krause model in which the voter dynamics are augmented by a dynamical model for parties which takes into account these interactions. Voters and parties are represented in a $d$-dimensional opinion space, representing their respective political orientations towards $d$ separate political issues \citep{davis1970expository,ray2002new,falck2020measuring,arian2017election,kim2016}. Being able to describe complex voter and party behaviour as dynamics in an opinion space requires the availability of information across the whole electorate, which has been made more effective with the advent of social media \cite{welch2016sheep}.  While opinion dynamics models such as the voter model \cite{redner2019reality} describe the behaviour of voters in a two-party system, the study of multi-party systems is rarer and typically restricted to a static situation. Such static spatial models have been used for data-driven predictions of election outcomes \cite{quinn1999voter,kedar2005moderate}. From statistical physics, the Potts model \cite{wu1982potts} has also been used to model elections with multiple candidates \cite{nicolao2019potts}, but these models represent parties as states that voters may or may not support, instead of agents that evolve in time. Our proposed modified noisy Hegselmann--Krause model allows for the study of the temporal evolution of a multi-party system, giving rise to rich interactions and distinct behaviours. Numerous countries, including Australia and Germany, which traditionally were dominated by two main parties have evolved into a multi-party system \cite{paun2011after}. Other countries, such as the Netherlands, traditionally operate under a multi-party system \cite{blondel1968party}.\\ 

The modified noisy Hegselmann--Krause model will be shown to exhibit known voter and party dynamics such as party-bases, swing voters, disaffected voters as well as a state of voter consensus in which parties and voters decouple and voters assemble around a single opinion state whereas parties arrange themselves in separate areas in opinion space. We show that, depending on the strength of the interactions between voters and parties, there exist metastable states where voters are attracted to different parties which occupy different regions in opinion space, before eventually unanimous consensus occurs and voters and parties collapse to a single localized area in opinion space. Such a state of unanimous consensus is, of course, not realistic. Using linear stability analysis and mean-field theory we analytically provide sufficient conditions for consensus and find an expression for the critical noise strength above which no consensus is possible. We find that the existence of parties is conducive to the formation of clusters. Moreover, the stabilising effect of additional party dynamics increases with the underlying dimension of the opinion space, i.e. with the number of topics that dominate the political debate.\\

The paper is organized as follows: In Section~\ref{sec:model} we propose our new model combining consensus dynamics of voters with party dynamics. In Section~\ref{sec:examples} we show how the inclusion of parties recovers to known political scenarios such as party-base formation, disaffected voters and swing voters. In Section~\ref{sec:consensus} we numerically explore the evolution to consensus. For the deterministic case we provide a sufficient condition for consensus formation in Section~\ref{sec:app1}. In Section~\ref{sec:meanfield} we provide analytical results on a noise-induced phase transition from random voter dynamics to consensus. We conclude in Section~\ref{sec:disc} with a discussion.


\section{A modified Hegselmann--Krause model incorporating party dynamics}
\label{sec:model}

The original Hegselmann--Krause model is concerned with the interaction of $N_v$ voters $v_i$ \cite{HegselmannKrause02,PinedaEtAl09,motsch2014heterophilious}. Voters are represented by their position in some $d$-dimensional opinion space. A point in the opinion space can be thought of as representing a set of views. For $d=2$, the opinion space is referred to as the political compass \citep{davis1970expository,ray2002new,falck2020measuring,arian2017election} with the two coordinates of $v_i$ representing, for example, a voter's political leaning on the spectrum from left-wing to right-wing social views and their economic preferences on the spectrum from libertarian to socialist. For $d>2,$ each dimension might represent an opinion on a specific issue \cite{stoetzer2015multidimensional}. In the noisy Hegselmann--Krause model, the dynamics of the voters $v_i(t)\in \mathbb{R}^d$ in this opinion space is governed by the following weakly interacting particle system
\begin{align}
\label{eq:HK0}
    dv_i = \frac{1}{N_v}\sum_j^{N_v} \phi(v_i,v_j)(v_j - v_i)dt + \sigma dW^i_t.
\end{align}
Here, $W^{i}$ are $d$-dimensional independent standard Brownian motions representing uncertainty with strength $\sigma$. The interaction kernel $\phi$ depends on the Euclidean distance in opinion space between the voters and encodes how voters $v_j$ with different opinions to voter $v_i$ can influence voter $v_i$. We choose here an isotropic kernel which is compactly supported on $[0,R_{vv}]$ with 
\begin{align}
    \phi(v_i,v_j) = \phi\left( x =\frac{||v_i-v_j||}{R_{vv}} \right) = \begin{cases}1 & x< 1 \\ 0 & \text{else}\end{cases},
     \label{eq:phi}
\end{align}
where $R_{vv}$ is the interaction radius of voters. The interaction kernel \eqref{eq:phi} lets a voter $i$ interact equally with all voters that are within a radius of $R_{vv}$ in opinion space and ignores interactions between voters that are too distant in opinion space (often called bounded confidence). Other choices of interaction kernels are used, for example, allowing for heterophilious dynamics \cite{motsch2014heterophilious}. Noting that only differences in opinion enter the dynamics, we restrict the opinion space without loss of generality to the $d$-dimensional unit square $[0,1]^d$. Note that any domain size of the opinion space can be absorbed by scaling the interaction radius $R_{vv}$.

\noindent
One of the key features of Hegselmann--Krause models is the emergence of consensus, which occurs when voters collapse in opinion space to a single opinion cluster, the size of which depends on the strength of the noise $\sigma$ \citep{motsch2014heterophilious, garnier2017consensus, wang2017noisy}. We remark that in the noiseless case $\sigma=0$ unanimous consensus occurs with all voters converging to the same point in opinion space. In particular, the system exhibits a phase transition \citep{Fortunato05,zagli2023dimension}; there exists a critical noise strength $\sigma_c$ such that for $\sigma < \sigma_c$ the noisy Hegselmann--Krause model \eqref{eq:HK0} asymptotically approaches consensus, whereas for $\sigma > \sigma_c$ no consensus is reached and instead voters behave as $N_v$ independent stochastic processes \cite{wang2017noisy}.  

\noindent
The noisy Hegselmann--Krause model \eqref{eq:HK0} only models the dynamics of individual voters and is not designed to model the dynamics of a political landscape involving political parties. Therefore, it cannot be used to serve as a model for elections. We will extend the classical Hegselmann--Krause model to couple the preferences and dynamics in the $d$-dimensional opinion space of $N_v$ individual voters $v_i\in \mathbb{R}^{d}$ and a finite number $N_p$ of parties $p_\alpha\in \mathbb{R}^{d}$. We make the reasonable assumption that voters will vote for political parties that share similar opinions to them and that both voters and parties, share the same $d$-dimensional opinion space \cite{carson2016wedge}. We make the following assumptions about the interactions between voters and parties: voters are affected by their interactions with other voters and also with parties. In particular, voters attract one another according to the right-hand side drift term of the Hegselmann--Krause model \eqref{eq:HK0} which we denote as $F_{vv}$. The effect is that voters move towards the mean of the voter opinion within a $d$-dimensional sphere with radius $R_{vv}$ centred on the voter. Voters are further attracted by parties which are sufficiently close to them in opinion space \cite{kim2016,adams2012causes}. The associated force can be thought of as a leadership effect and will be denoted as $F_{pv}$ \cite{laver2005policy,clive1989}. Voters who have a similar opinion to a political party will shape their beliefs to conform to the leaders with whom they are similar. Conversely, parties are affected by the voters and by other parties \cite{zechmeister2003sheep,welch2016sheep}. It is reasonable to assume that political parties aim to maximise the number of votes they receive. A natural way of achieving this is to minimise the distance between them and their potential voters. However, parties also have ``identities'' - for example, some are considered left-wing and others centrist - which means that the voters they seek to be close to (or best represent) are within a certain distance $R_{vp}$ in opinion space. The smaller the value of $R_{vp}$ the less a party's view is influenced by far-away voters. We denote the associated attracting force between parties and voters by $F_{vp}$. Finally, it is reasonable to assume that political parties want to differentiate themselves from each other \cite{elias2015position}. Why vote for one party if there is another party that is exactly the same? Hence, we assume a repulsive interactive force $F_{pp}$ between political parties. Figure \ref{fig:forcesdiagram} is a schematic of these four interactions.\\

\begin{figure}[!htb]
    \centering
    \includegraphics[width=0.8\textwidth]{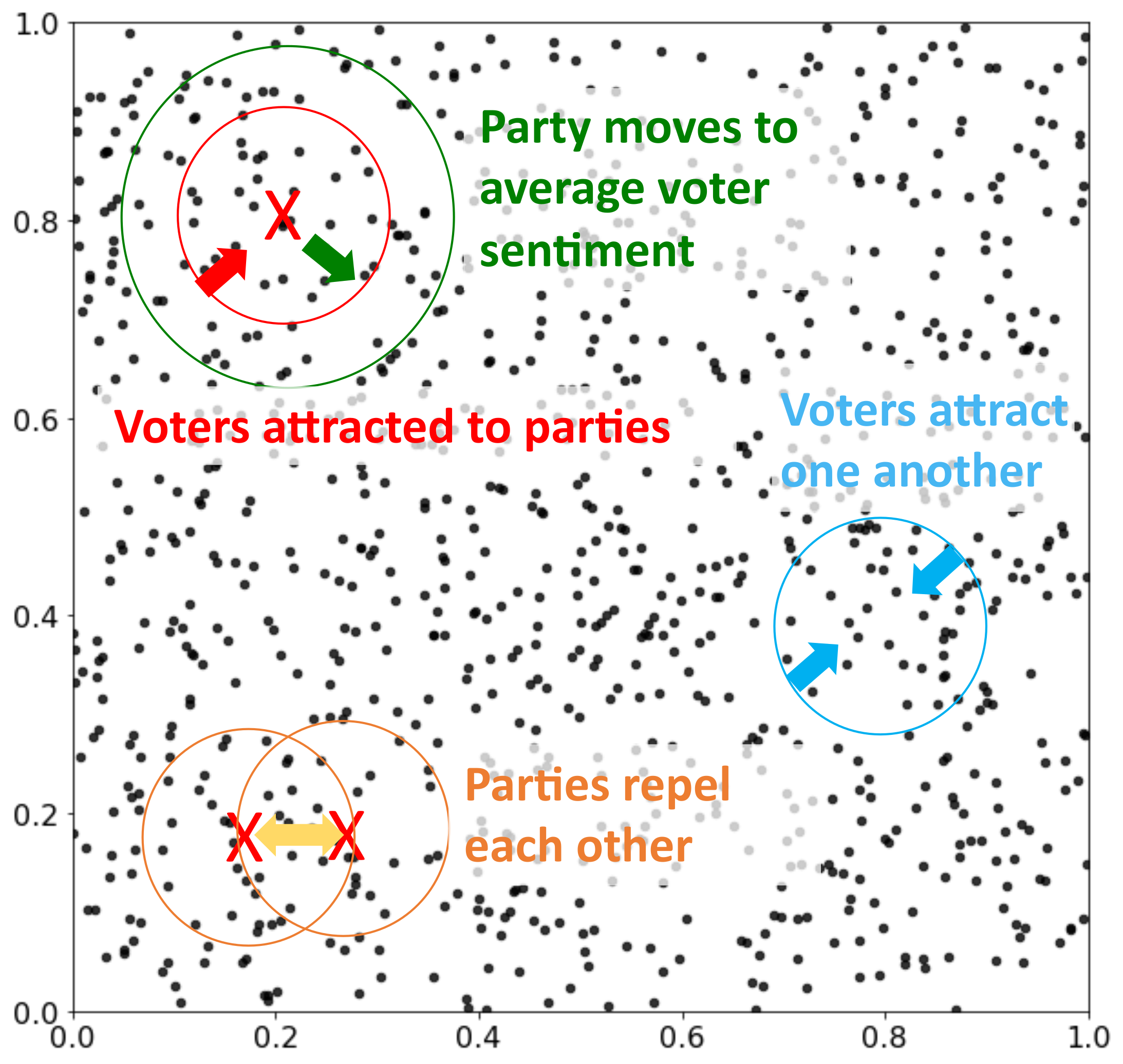}
    \caption{Sketch of the different interaction forces between voters (black dots) and parties (red crosses) in a $2$-dimensional opinion space. The blue force corresponds to $F_{vv}$ with \eqref{eq:F_vv}; the red force corresponds to $F_{pv}$ with \eqref{eq:F_pv}; the green force corresponds to $F_{vp}$ with \eqref{eq:F_vp}; the orange force corresponds to $F_{pp}$ with \eqref{eq:F_pp}.}
    \label{fig:forcesdiagram}
\end{figure}

\noindent
Summarizing the assumed interactions outlined above we propose the following coupled voter-party Hegselmann--Krause model 
\begin{align}
\label{eq:HKa}
    dv_i &= \mu_{vv}F_{vv}(v_i,v) \, dt + \mu_{pv}F_{pv}(v_i,p) \, dt + \sigma_v dW_t ^i\\
    dp_\alpha &= \mu_{vp}F_{vp}(p_\alpha,v)\, dt - \mu_{pp}F_{pp}(p_\alpha,p)\, dt + \sigma_p dW_t ^\alpha,
\label{eq:HKb}
\end{align}
with the interaction forces
\begin{align}
    F_{vv}(v_i,v) &= \frac{1}{N_v}\sum_{j=1}^{N_v}\phi\left(\frac{||v_j - v_i||}{R_{vv}}\right)(v_j-v_i),
    \label{eq:F_vv}\\
    F_{pv}(v_i,p) &= \frac{1}{N_p}\sum_{\beta=1}^{N_p}\phi\left(\frac{||p_\beta - v_i||}{R_{pv}}\right)(p_\beta-v_i),
    \label{eq:F_pv}\\
    F_{vp}(p_\alpha,v) &= \frac{1}{N_v}\sum_{j=1}^{N_v}\phi\left(\frac{||v_j - p_\alpha||}{R_{vp}}\right)(v_j-p_\alpha),
    \label{eq:F_vp}\\
    F_{pp}(p_\alpha,p) &= \, \frac{1}{N_p}\sum_{\beta=1}^{N_p}\phi\left(\frac{||p_\beta - p_\alpha||}{R_{pp}}\right)(p_\beta-p_\alpha).
    \label{eq:F_pp}
\end{align}
Here $W^i(t)$ for $i=1,\hdots, N_v$ and $W^\alpha(t)$ for $\alpha=1,\hdots, N_p$ are independent standard Brownian motion processes which represent unpredictable changes in a particular agent's opinion with the strength given by the diffusion coefficients $\sigma_v$ and $\sigma_p$. In the following, Latin alphabet sub- and superscripts refer to voters and Greek alphabet sub- and superscripts refer to political parties. The forces are assumed to have compact support in a $d$-dimensional sphere with positive radii $R_{vv}$, $R_{pv}$, $R_{vp}$ and $R_{pp}$, respectively (cf. \eqref{eq:phi} for $R_{vv}$). It is reasonable to assume that $R_{vp}>R_{vv}$ as parties typically consider a much larger contingency of voters than individual voters do. Further, we assume that $R_{pp}$ is small as parties only repel each other when they become too close in opinion space to delineate themselves from each other. The strength of the respective forces is controlled by the strength parameters $\mu_{vv},\mu_{pv},\mu_{vp},\mu_{pp}\geq 0$ which are assumed to be constant in time. Note that for $\mu_{vv}=1, \mu_{pv}=0$ the evolution of the voters \eqref{eq:HKa} reduces to the original noisy Hegselmann--Krause model \eqref{eq:HK0}. 

We choose a normalization with $1/N_v$ and $1/N_p$ for the interaction forces \eqref{eq:F_vv}-\eqref{eq:F_pp}. We have done so to allow for a well-defined mean-field limit (see Section~\ref{sec:meanfield}). Other normalisations have been used in the literature and may lead to different behaviour \cite{NugentEtAl24b}.

Here we have chosen force strengths and interaction radii to be equal for all agents. In reality, some parties might be more ``charismatic'' to certain voters, and similarly, some voters might possess especially strong attractiveness to others (such as between family members). Moreover, some voters might be more open to different opinions and so operate with larger interaction radii than others. To include such agent-specific parameters in our model \eqref{eq:HKa}--\eqref{eq:HKb} one can make the force strengths and interaction radii voter and party dependent. For example, we could allow for a force strength $\mu^\alpha_{{pv}_i}$ modelling the leadership effect of party-$\alpha$ on voter-$i$. For simplicity, we do not consider such effects in this work.

We employ here periodic boundary conditions, except in Section \ref{sec:consensus}, where we use reflective boundary conditions. This significantly simplifies the mean-field analysis in Section~\ref{sec:meanfield}. At first sight, periodic boundary conditions do not seem realistic as they imply that a voter can switch from one end of the opinion spectrum to the other. However, this scenario is known in political theory as the horseshoe theory \cite{Faye} which claims that the extreme left and the extreme right share similar opinions. Nevertheless, this theory remains contested, see for example \cite{mayer2011extremes}. We remark that our code, available on the GitHub repository \url{https://github.com/PatrickhCahill/ModifiedHegselmannKrauseModel} allows for reflective and periodic boundary conditions.

Both the periodic boundary conditions and the reflective boundary conditions ensure that the dynamics remains in a bounded region in opinion space. A bounded domain corresponds to the realistic assumption that there is some maximal extremeness that an agent can hold on a given topic. Bounding the dynamics is particularly important in our model, which includes repulsive forces between parties. A strong repulsive force between parties can lead to parties and their attached voter cluster leaving a prescribed domain in opinion space, evolving in ever more extreme regions in opinion space.


\section{Political scenarios in the modified Hegselmann--Krause model}
\label{sec:examples}

We now show that the modified Hegselmann--Krause model \eqref{eq:HKa}-\eqref{eq:HKb} is able to reproduce several types of voter behaviour in a political landscape with competing parties. We restrict here to a $1$-dimensional opinion space.

Let us start from a standard scenario where each party has their own loyal party-base, sometimes referred to as "core voters" \cite{cox200913}. This politically stable situation, which we term \textit{party-base}, occurs when each party occupies a particular region in opinion space and all parties are sufficiently far apart in opinion space from each other with 
\begin{align}
||p_\alpha-p_\beta||\ge 2 {\rm{max}}(R_{pv},R_{vv})
\label{eq:condpartybase}
\end{align}
for all parties $\alpha\neq \beta$ so that parties do not compete for voters and interactions between voters attached to different parties is suppressed. We further require $\sigma_{v} \gg \sigma_p$ to ensure the voter dynamics occurs much faster than the party dynamics. In such situations, clusters may form around each party. The number of voters $N^{(c)}_v$ attached to each cluster $c$ depends on the initial political opinions of the voters. 

The size of a cluster $\delta_{\rm{cl}}$ is defined as the average diameter of a collection of agents that are not interacting with any other agents, allowing (at least temporarily) for a coherent structure. Heuristically, it is clear that increasing the noise will increase the cluster size, while increasing the force strengths will result in a more tightly packed cluster. We will now find an approximation for the cluster size $\delta_{\rm{cl}}$, in the case where the interaction radii $R_{vv}$ and $R_{pv}$ cover the cluster. Suppose that a voter cluster of $N_v^{(c)}$ voters is centred around $w \in [0,1]$. Further, assume that there are $N_{p}^{(c)}$ parties interacting with the voter cluster and that they are also centred around the same $w$. As the dynamics depends only on the differences between the positions of the relative agents, we shift, for simplicity, the domain such that the voter cluster is centred at $w=0$. Finally, assume that no other agents are interacting with the cluster. With these assumptions the forces acting on the voters become linear with
\begin{align}
    F_{vv}(v_i;v) &= \mu_{vv}\frac{1}{N_{v}}\sum_{j=1}^{N_v}\phi\left(\frac{||v_j-v_i||}{R_{vv}}\right)(v_j - v_i) \nonumber \\
    &= -\mu_{vv}\frac{N_{v}^{(c)}}{N_{v}}v_i
\end{align}
and
\begin{align}
    F_{pv}(v_i;p) &= \mu_{pv}\frac{1}{N_{p}}\sum_{\beta=1}^{N_p}\phi\left(\frac{||p_\beta-v_i||}{R_{pv}}\right)(p_\beta - v_i) \nonumber \\
    &= -\mu_{pv}\frac{N_{p}^{(c)}}{N_{p}}v_i.
\end{align}
The voter dynamics \eqref{eq:HKa} reduces to an Ornstein-Uhlenbeck process with
\begin{align} 
dv_i = - \left( \frac{N^{(c)}_v}{N_v}\mu_{vv} + \frac{N^{(c)}_p}{N_p} \mu_{pv} \right) v_i + \sigma_v dW^i_t.
\label{eq:vOU}
\end{align}
The asymptotic mean is zero and the standard deviation is given by
\begin{align}
{\rm{std}}_{\rm{OU}} = \frac{\sigma_v}{\sqrt{2}}\left(\frac{N^{(c)}_v}{N_v}\mu_{vv} + \frac{N^{(c)}_p}{N_p}\mu_{pv}\right)^{-\frac{1}{2}}.
\label{eq:stdOU}
\end{align}
Defining a cluster as the region which covers approximately $95\%$ of the voters, we obtain the heuristic estimate for the cluster size 
\begin{align}
\delta_{\rm{cl}} = 4 \, {\rm{std}}_{\rm{OU}}.
\label{eq:deltacluster}
\end{align}
It is seen from \eqref{eq:deltacluster} that the cluster size decreases for decreasing noise strength. In the limit $\sigma_v\to 0$, the cluster size is $\delta_{\rm{cl}}=0$ corresponding to a Dirac measure in opinion space. In the case that $R_{vv}$ and $R_{pv}$ do not cover the cluster, one may still observe transient short-lived groupings of voters (see Figure~\ref{fig:3party-disaffected_voters} for an example in a three-party system).

In the case of a party-base cluster centred around a single party, we set $N^{(c)}_p=1$. Figure~\ref{fig:2party-political_base} shows an example for such party-base formation with $N_v = 1,000$ voters which initially are distributed uniformly across $[0,1]$ and with two parties which are initially located in opinion space at $p_1(0)=0.25$ and $p_2(0)=0.75$. The strengths of the voter forces were chosen as $\mu_{vv}=\mu_{pv}=1$ and those of the party forces as $\mu_{vp}=0.03$ and $\mu_{pp}=0.01$. The interaction radii are $R_{vv}=0.15$, $R_{pv}=0.1$, $R_{pp}=0.1$ and $R_{vp}=0.3$. The noise strengths are $\sigma_v = 0.04$ and $\sigma_p = 0.005$, which ensures that the voter dynamics occurs much faster than the party dynamics. Figure~\ref{fig:2party-deltacluster_check} shows the standard deviations of voters in the two-party-base clusters corresponding to the party-base case depicted in Figure~\ref{fig:2party-political_base}. We show results of the numerical simulation of the modified Hegselmann--Krause model \eqref{eq:HKa}-\eqref{eq:HKb} as well as the prediction of the cluster size $\delta_{\rm{cl}}$ given by \eqref{eq:deltacluster}. Since the voters form two approximately equal clusters we set $N_{v}^{(c)} \approx \frac{N_v}{2}$ and $ N_{p}^{(c)}=1$ implying $\delta_{\rm{cl}}=4\, \rm{std}_{\rm OU}=0.113$. Note that here the cluster is covered by the interaction radii with $\delta_{\rm{cl}}<2R_{pv}<2R_{pv}=0.3$. To numerically estimate $\delta_{\rm cl}$ for the full modified Hegselmann--Krause model, we set $N_v^{(c)}$ to be the number of voters which are within a distance of $0.3$ away from the respective parties $p_\alpha$. The threshold $0.3$ is chosen because the region $[p_\alpha-0.3, p_\alpha+0.3]$ for $\alpha = 1,2$ contains the cluster around $p_\alpha$ since the interaction radii are sufficiently small with $R_{vv}=0.15, R_{pv}=0.1$. The region is also sufficiently small that it does not contain any other clusters. Figure~\ref{fig:2party-deltacluster_check} shows that our expression \eqref{eq:deltacluster} well approximates the observed cluster size as measured by the standard deviations with ${\rm{std}}_{\rm{OU}}=0.113/4\approx 0.028$.

Such stable party-bases break down if parties increase their interaction radius $R_{pv}$ and hence are competing with each other over voters which are affected by two or more parties, or equivalently by parties moving towards each other in opinion space, induced by their nonzero diffusion $\sigma_p$. In particular, consider two parties $p_\alpha$ and $p_\beta$ that satisfy 
\begin{align}
||p_\alpha-p_\beta|| < {\rm{max}}(R_{pv},R_{vv}),
\label{eq:condswing}
\end{align}
and where any other parties are sufficiently far away in opinion space such that they do not interact with any voters attracted to parties $p_\alpha$ and $p_\beta$. In this case, clusters of swing voters which are confined between two parties will emerge. In elections such swing voters vote for either of the two close parties, depending on small hard to predict preferences \cite{cox200913}. The state of swing voters is illustrated in Figure~\ref{fig:2party-swing_voters}. Here the parameters are as for the party-base scenario but for a larger party interaction radius $R_{pv}=0.35$. The size of swing voter clusters $\delta_{\rm{cl}}$ is also given by \eqref{eq:deltacluster} with $N^{(c)}_p=2$ because the party above and below are both within the interaction radius of the voter cluster. We find $\delta_{\rm{cl}}=0.094$ which approximates well the observed cluster size with ${\rm{std}}_{\rm{OU}}=0.094/4$. 

When the party dynamics is faster - or alternatively the voter dynamics is slower - we may observe more competitive behaviour between the parties. Figure~\ref{fig:2party-political_comp_1} shows clusters of swing voters that are slowly entrained by one of the two parties. We further note the coexistence of party-bases and swing voter clusters, although neither is stable. Here $\sigma_v = 0.017$ ensures the voter behaviour is slower than in Figure~\ref{fig:2party-political_base}. Swing voters decide to join a particular party-base either by their individual stochastic slow exploration of the opinion space or by parties moving towards them on a faster time scale. The latter scenario may be viewed as a form of party competition to attract more voters. 

Political competition can lead to intricate complex transitions between the scenarios depicted in Figures~\ref{fig:2party-political_base}-\ref{fig:2party-political_comp_1}. 
In particular, if two parties $p_\alpha$ and $p_\beta$ satisfy $\max(R_{pv},R_{vv}) < ||p_\alpha - p_\beta || < 2 \max(R_{pv},R_{vv})$, neither the criterion for the existence of party-base clusters \eqref{eq:condpartybase} nor the criterion of the existence of swing voter clusters \eqref{eq:condswing} are met and it is possible for both states to coexist and interact. For example, Figure~\ref{fig:2party-political_comp_2} 
shows a party-base forming around each party while two swing voter clusters form between the two parties around $0.5$ and around $0$ in opinion space. The cluster around $0$ exists due to the periodic boundary conditions. 

Yet another important political scenario occurs when parties leave a region in opinion space unoccupied and this region is larger than the interaction radius $R_{pv}$ of any party. In this case voters can occupy this region, attracted by voter-voter interactions only. The voter dynamics degenerates to the original noisy Hegselmann--Krause model \eqref{eq:HK0} and a voter-only cluster forms of size $\delta_{\rm{cl}}\approx 4\sigma_v/\sqrt{2\mu_{vv}N_{v}^{(c)}/N_v}$ provided that $R_{vv}>\delta_{\rm{cl}}/2$ \cite{wang2017noisy} (cf \eqref{eq:deltacluster} and \eqref{eq:stdOU} with $N_p^{(c)}=N_p=0$). 
In this scenario voters may be viewed as disaffected voters who feel unrepresented by political parties, and whose support the parties are not interested in earning. In trying to attract those politically far away voters parties may risk losing voters that are more aligned with a party's current views. Figure~\ref{fig:2party-disaffected_voters} shows an example with disaffected voters. Here the same parameters are used as for the party-base clusters in Figure~\ref{fig:2party-political_base} but with different initial conditions for the parties allowing for unoccupied political space. The cluster of disaffected voters depicted in Figure~\ref{fig:2party-disaffected_voters} contains $N_v^{{c}}\approx 370 $ voters. The estimated cluster size is hence $\delta_{\rm{cl}}\gtrapprox 4\sigma_v/\sqrt{2\mu_{vv} 0.37}\approx 0.19$ which is consistent with the observed cluster size seen in Figure~\ref{fig:2party-disaffected_voters} of $0.21$.

Lastly, we report on a scenario in which voters abandon their party-base and aggregate to a single cluster. This occurs when $R_{vv} > R_{pv}$ and $\mu_{vv} \gtrapprox \mu_{pv}$. In this case the attraction between voters is more prominent and the political leadership effect of the parties becomes irrelevant to the behaviour of the voters. Voters collapse to a single state, typically around a single party. We coin this collapsed state {\em{voter consensus}}. Note that this is achieved with all parties having the same force strength $\mu_{pv}$ and without one party having a significantly higher attraction force than the other parties. This may be viewed as voter consensus because the effect occurs when the force between voters becomes more prominent and the political leadership effect of the parties becomes irrelevant to the behaviour of the voters, which collapse to a single state, typically around a single party. An example of such a voter consensus behaviour is seen in Figure~\ref{fig:2party-voter consensus} where a single party-base cluster forms around party $p_1$. We remark that if the condition for swing voters \eqref{eq:condswing} is satisfied, the final state will be a swing voter cluster. We have presented here results for a traditional  two-party system. The same scenarios also occur in multi-party systems. We present in the Appendix an example for a three-party system.

We remark that typically the scenarios described above are transitory and the stochasticity and/or party dynamics lead to a break up of these structures. The inclusion of heterogeneous and possible time-varying equation parameters, in particular of the force strengths, would allow for an even richer set of scenarios and transitions between them. We further note that neither the party-base behaviour, nor the swing voter behaviour are realistic. It is highly unlikely that all voters will be loyal to a single party or will not be captured by a single party. Rather a mix of behaviours -- such as in the political competition scenario -- seems more realistic. In the next Section we will see how and when our model allows for a final state of unanimous consensus, which however, unlike the transient scenarios described above, is not a likely political scenario. 

\begin{figure}[htpb!]
    \centering
        \begin{subfigure}[htbp]{0.48\textwidth}
            \includegraphics[width=\linewidth]{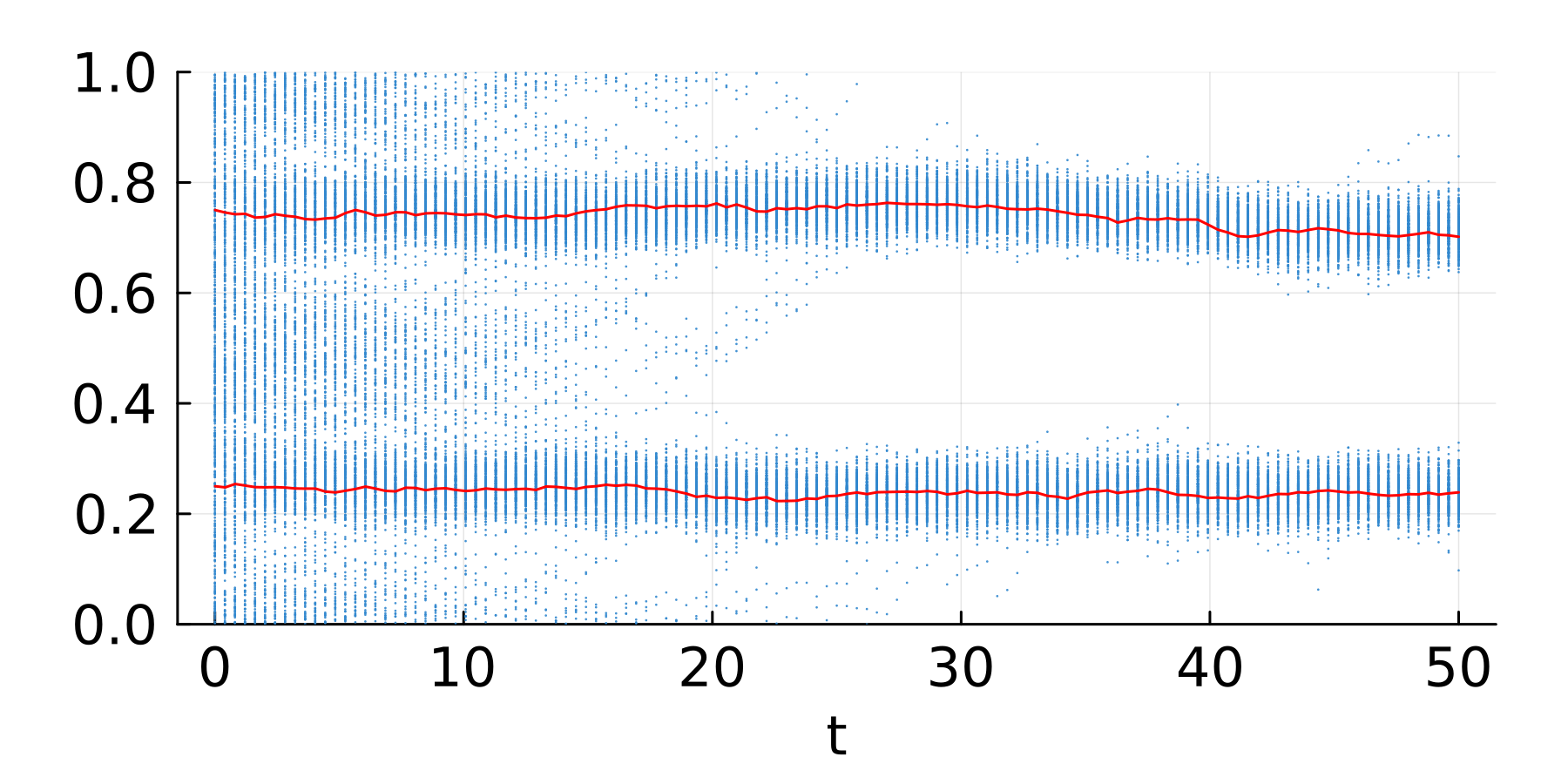}
            \caption{\textit{Party-base}}
            \label{fig:2party-political_base}
    \end{subfigure}
    \hfill
    \begin{subfigure}[htbp]{0.48\textwidth}  
        \centering
        \includegraphics[width=\textwidth]{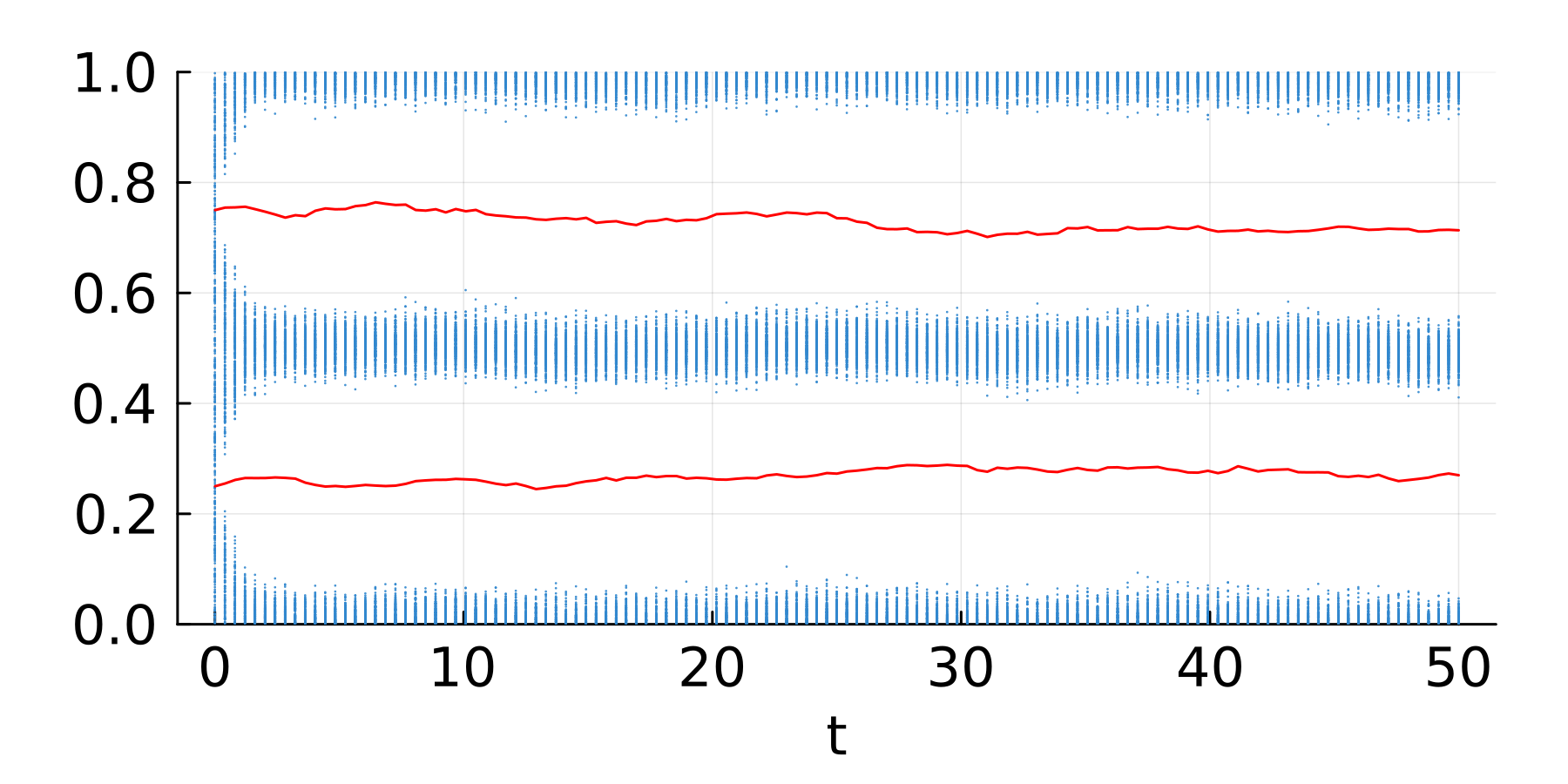}
        \caption{\textit{Swing voters}. Here $R_{pv} = 0.5$.}
        \label{fig:2party-swing_voters}
    \end{subfigure}
    \vskip\baselineskip
    \begin{subfigure}[htbp]{0.48\textwidth}
        \centering
        \includegraphics[width=\textwidth]{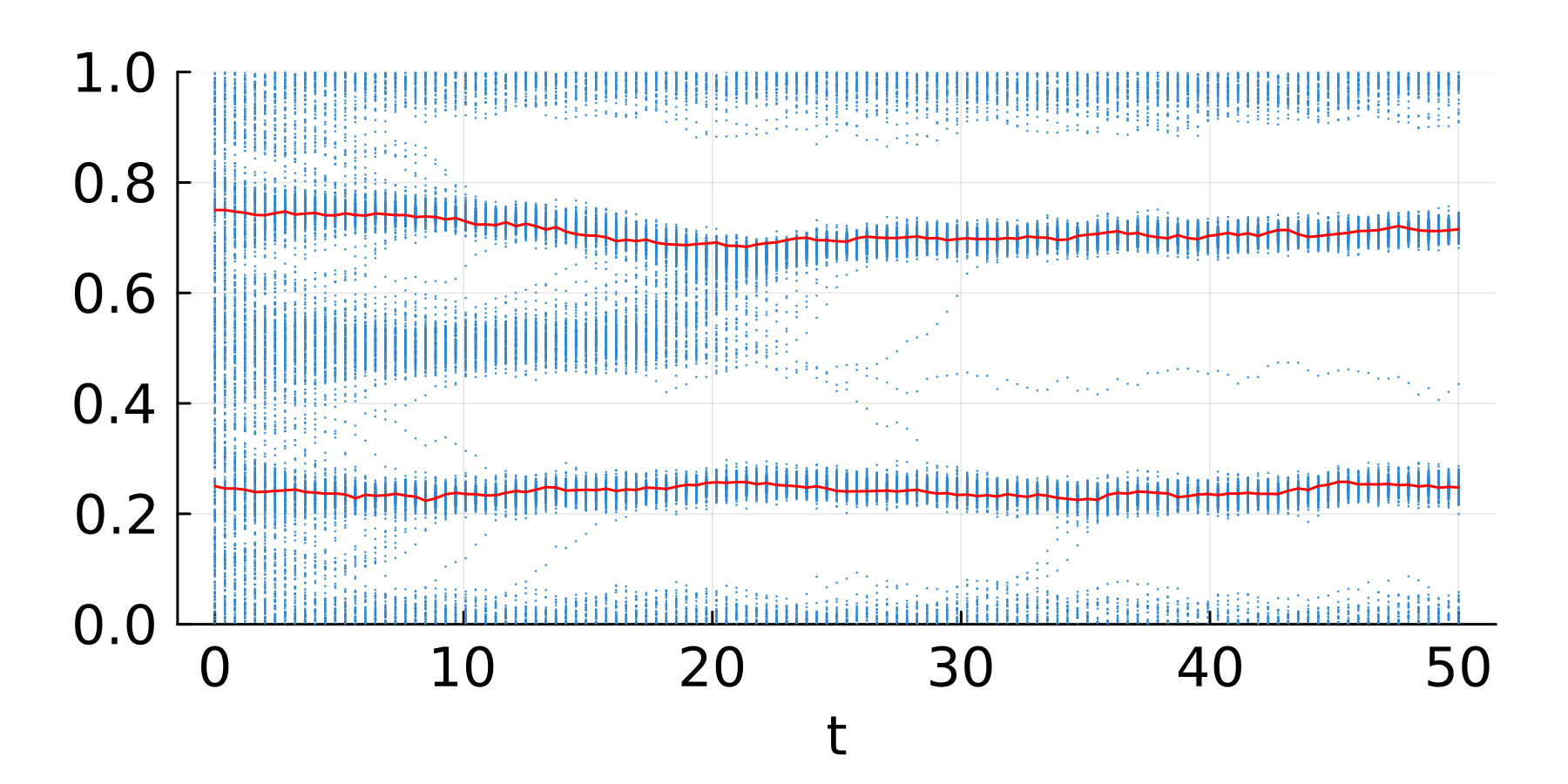}
        \caption{\textit{Political competition}. Here $\sigma_v = 0.017.$}
        \label{fig:2party-political_comp_1}
    \end{subfigure}
    \hfill
    \begin{subfigure}[htbp]{0.48\textwidth}  
        \centering
        \includegraphics[width=\textwidth]{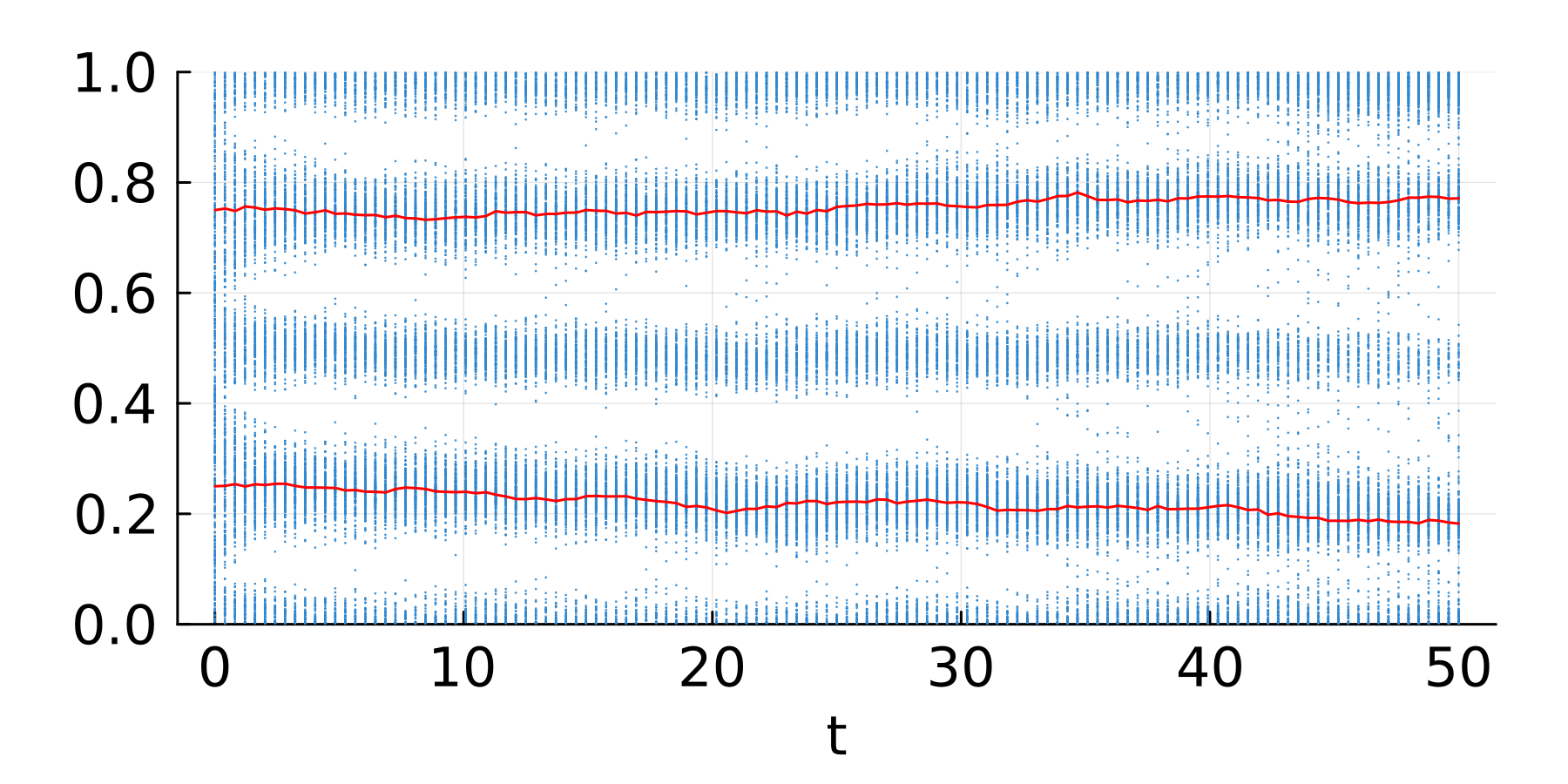}
        \caption{\textit{Political competition}: Here $R_{pv}= 0.35$.}
        \label{fig:2party-political_comp_2}
    \end{subfigure}
    \vskip\baselineskip
    \begin{subfigure}[htbp]{0.48\textwidth}   
        \centering 
        \includegraphics[width=\textwidth]{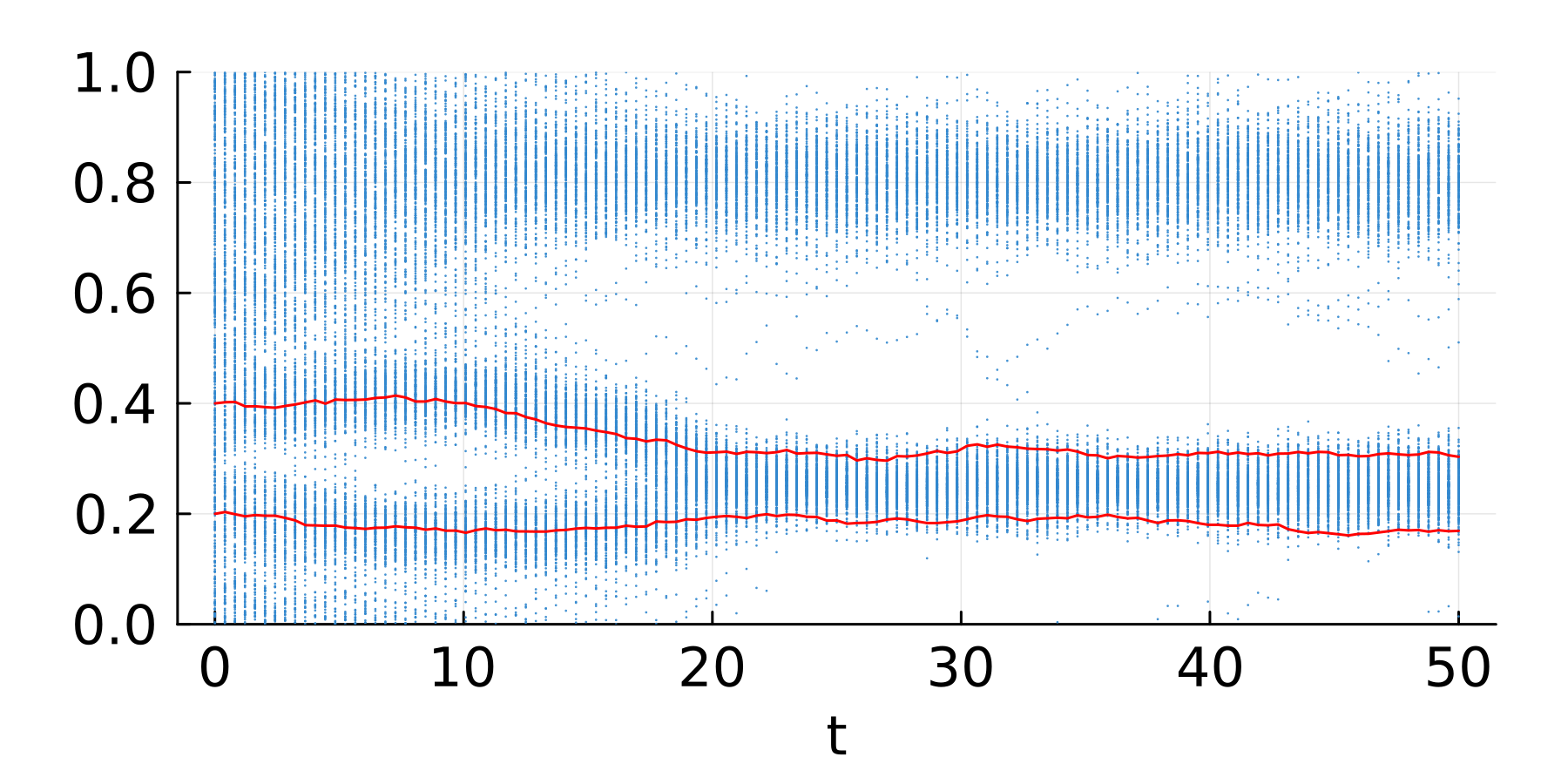}
        \caption{\textit{Disaffected voters}: Here $p_1(0)=0.2$ and $p_2(0)=0.4$.} 
        \label{fig:2party-disaffected_voters}
    \end{subfigure}
    \hfill
    \begin{subfigure}[htbp]{0.48\textwidth}   
        \centering 
        \includegraphics[width=\textwidth]{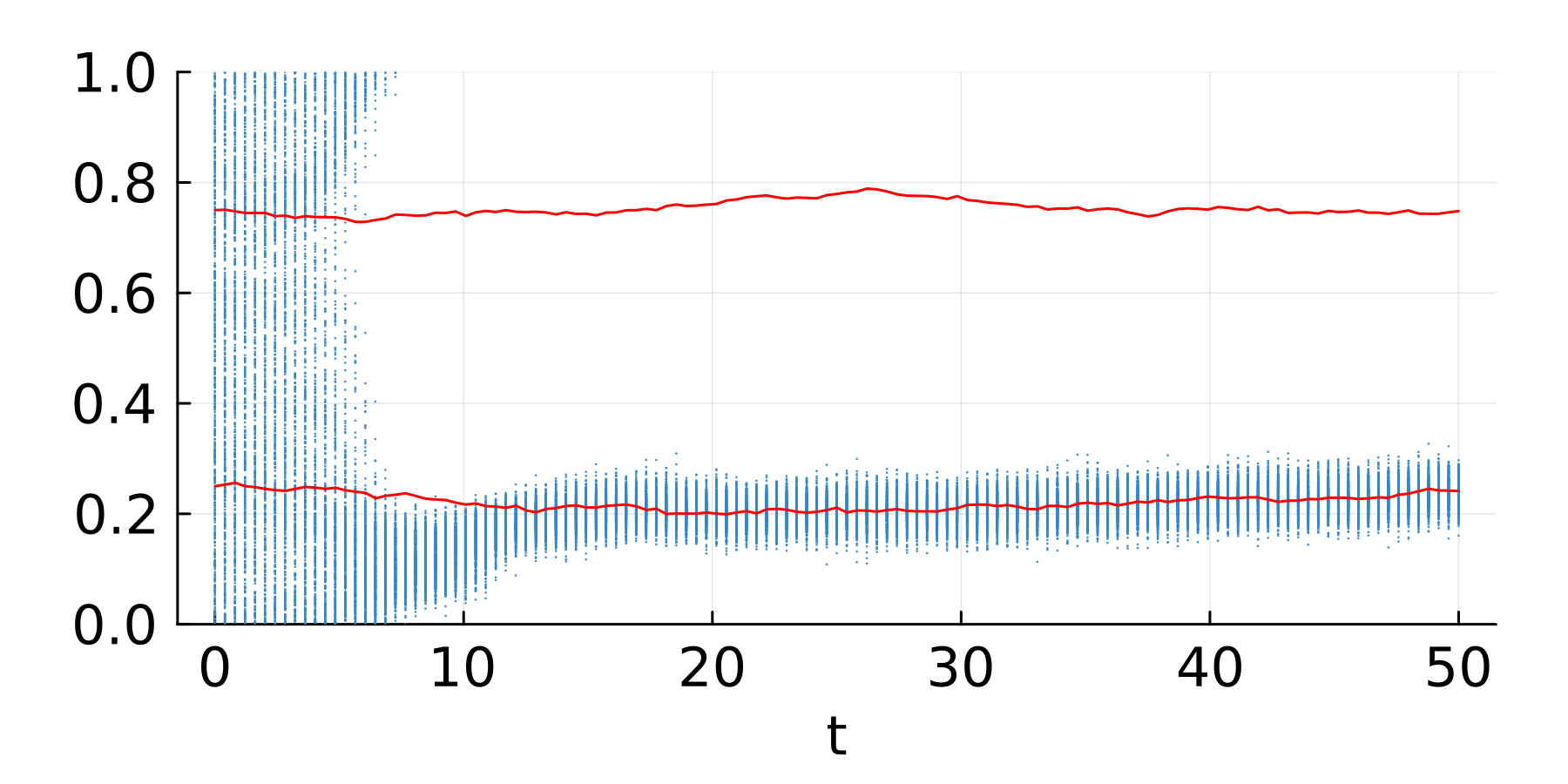}
        \caption{\textit{Voter consensus}. Here $R_{vv} = 0.5$} 
        \label{fig:2party-voter consensus}
    \end{subfigure}
    \caption{Prototypical political scenarios for a two-party system. At time $t=0$, $N_v = 1,000$ voters are distributed uniformly across $[0,1]$ and parties are initially at $p_1(0)=0.25$ and $p_2(0)=0.75$, except for Figure~\ref{fig:2party-disaffected_voters}. The strengths of the voter forces are $\mu_{vv}=\mu_{pv}=1$. The strengths of the party forces are $\mu_{vp}=0.03$ and $\mu_{pp}=0.01$. The interaction radii are $R_{vv}=0.15$, $R_{pv}=0.1$, $R_{vp}=0.3$ and $ R_{pp}=0.1$. The noise strengths are $\sigma_v = 0.04$ and $\sigma_p = 0.005$.}
    \label{fig:2party-example_behaviours}
\end{figure}
\begin{figure}[htbp]
    \centering
    \includegraphics[width=0.5\linewidth]{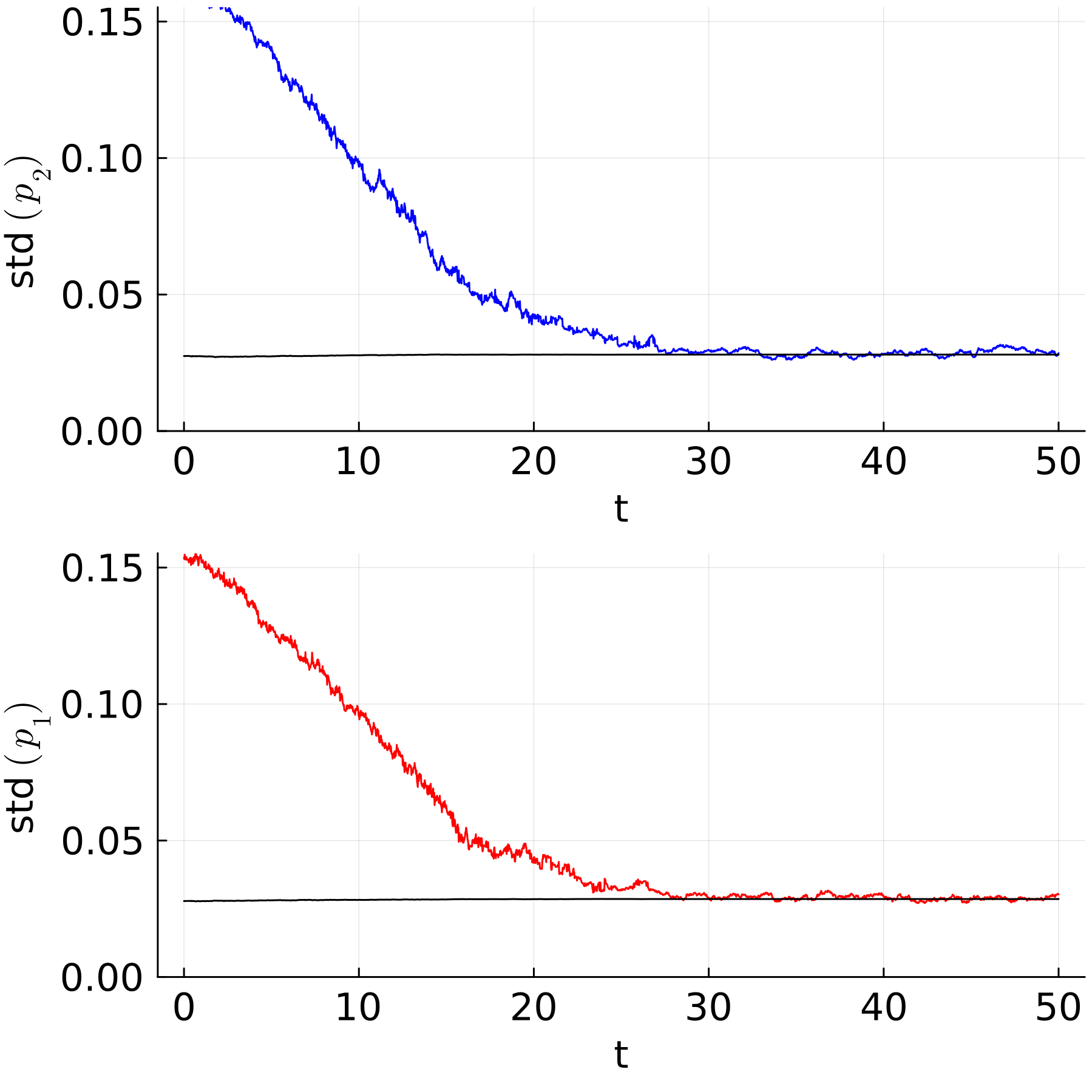}
    \caption{Standard deviations of the voter cluster centred around each party in Figure~\ref{fig:2party-political_base}. The black line denotes the analytical approximation \eqref{eq:stdOU}. Top: party $p_2$ with $p_2(0)=0.75$. Bottom: party $p_1$ with $p_1(0)=0.25$.}
    \label{fig:2party-deltacluster_check}
\end{figure}

\section{Consensus in the modified Hegselmann--Krause model}
\label{sec:consensus}
As we have seen in the previous Section the modified Hegselmann--Krause model (\ref{eq:HKa})--(\ref{eq:HKb}) allows for the formation of opinion clusters. Consensus in our model refers to all voters and parties collapsing to a cluster the size of which is determined by the noise strengths $\sigma_{v}$ and $\sigma_p$. We will see that if $\sigma_{v},\sigma_p$ are sufficiently small and $\mu_{pp} < \mu_{vp}$ the system approaches consensus. The route from a disordered state of uniformly distributed voters and parties to a state of consensus in which voters and parties collapse to a localized region in opinion space is typically via partially ordered states in which voters and parties form several clusters. We show in Figure~\ref{fig:noisyconvergence} an example for a $d=1$-dimensional opinion space and in Figure~\ref{fig:2devolution} an example for a $d=2$-dimensional opinion space. The model recreates some of the core features of the classical noisy Hegselmann--Krause model. In particular, in both cases, the voters and parties form smaller clusters before eventually reaching a consensus state. Figure~\ref{fig:noisyconvergence} shows that initially uniformly distributed voters begin to form clusters centred around parties. These clusters constitute what we encountered in Section~\ref{sec:examples} as the party-base of a party. The top two parties $p_3$ and $p_4$ develop well separated party-base clusters since the distance between them is larger than the interaction-radius $R_{vv}$, prohibiting interactions between voters of different party-bases. The bottom two parties $p_1$ and $p_2$ located near $0.25$ and $0.35$ are closer to one another and a cluster of swing voters forms. By around $t\approx 140$ all voter clusters have merged to form a single cluster with mean $v=0.5$. The outer two parties $p_1$ and $p_4$ slowly move into the cluster as the attraction of the voters overcomes their repulsion of the other parties and they enter the cluster of voters. 

Similarly, in the two-dimensional case depicted in Figure~\ref{fig:2devolution}, we observe clusters of voters forming around parties. Due to the increased size of the opinion space, many more voters are not part of a cluster loyal to a single party. Instead as seen at $t> 120$ there is a cluster of disaffected voters in the bottom right of the opinion space which are not centred around any party. Note that in contrast to the disaffected voters depicted in Figure~\ref{fig:2party-disaffected_voters} here $R_{vv}$ is sufficiently large to allow for a cluster of disaffected voters of size $\delta_{\rm{cl}}$. In the model, over time these disaffected voters randomly evolve into one of the attractive areas of the parties. Parties evolve to minimise their distance from voters within their interaction radius, and so move towards the single largest cluster. This process of consensus formation can be slowed down by decreasing the interaction radius $R_{vp}$. In this case, parties are not affected by far away voters, requiring slow diffusion to form consensus. Depending on the strength and interaction radii of the relative forces, parties may be attracted to voters when $\mu_{vp}>\mu_{pv}$, whereas in the case that $\mu_{pv}>\mu_{vp}$, illustrated in Figure~\ref{fig:2devolution}, voters will cluster around parties before the voter-voter interactions lead to global consensus. Figure~\ref{fig:t1000} shows that at $t=1,000$ most agents are located in a single cluster with only a few voters not yet entrained. We remark that in bounded opinion spaces, Brownian motion is recurrent even in high dimensions, and hence voters, which are not yet entrained by the main cluster at the final time in Figures~\ref{fig:noisyconvergence} and \ref{fig:2devolution}, will eventually join the cluster. The final asymptotic state as well as the dynamical behaviours encountered in the dynamics evolving towards this state crucially depend on the initial conditions of the voters and parties.\\

\begin{figure}[htbp]
    \centering
     \includegraphics[width=\textwidth]{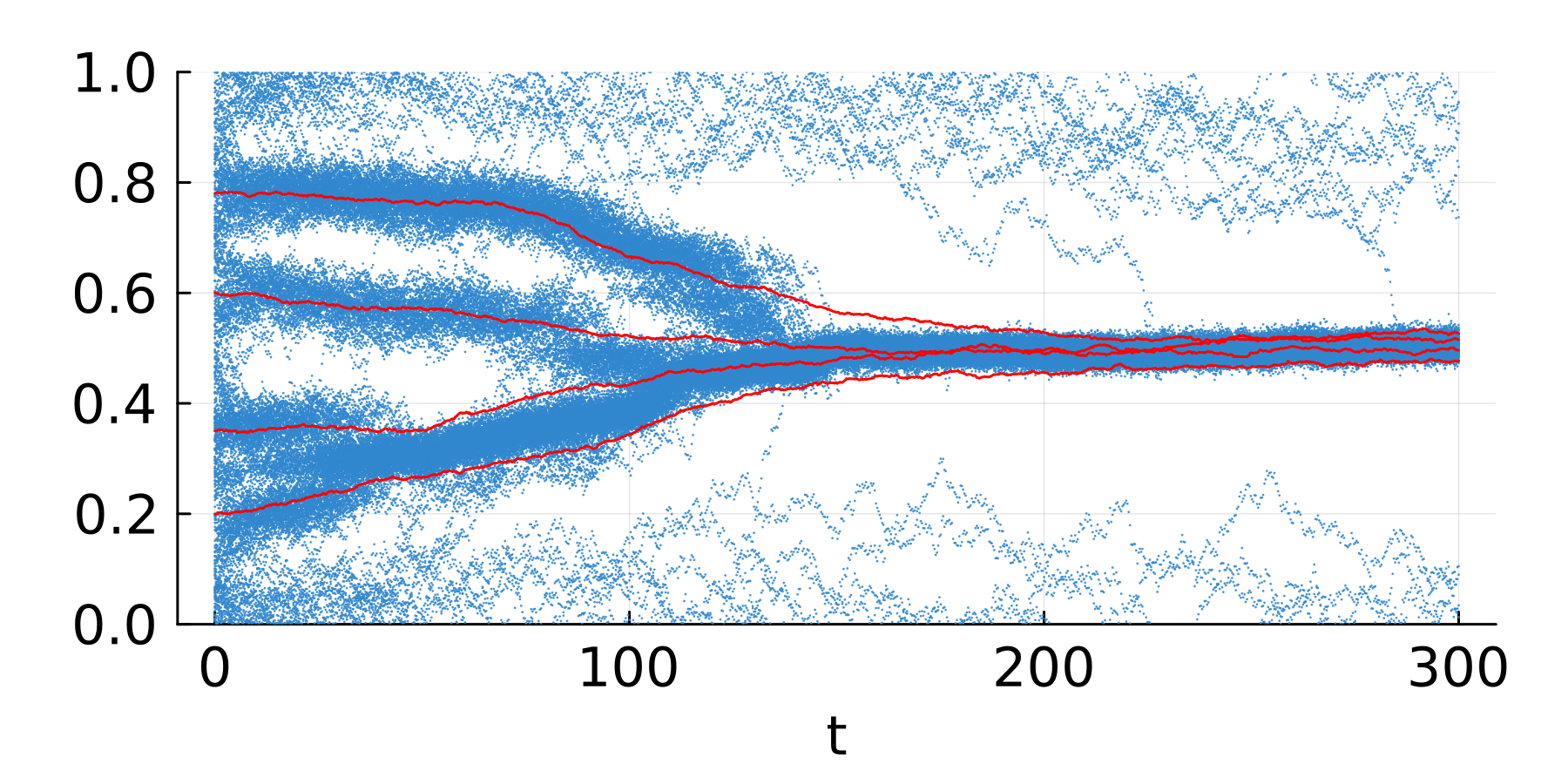}
    \caption{Evolution of the modified Hegselmann--Krause model (\ref{eq:HKa})--(\ref{eq:HKb}) in a $d=1$-dimensional opinion space for $500$ voters (blue) and $4$ parties (red). Initially voters are distributed uniformly across $[0,1]$ and parties are initially at $p_1(0)=0.2$, $p_2(0)=0.35$, $p_3(0)=0.6$ and $p_4(0)=0.78$.} Parameters are $R_{vv}=0.05$, $R_{pv}=0.1$, $R_{vp}=0.4$, $R_{pp}=0.05$ and $\mu_{vv}=0.5$, $\mu_{pv}=0.8$, $\mu_{vp}=0.02$, $\mu_{pp}=0.02$. The noise strengths are $\sigma_v = 0.02$ and $\sigma_{p} = 0.002$.
    \label{fig:noisyconvergence}
\end{figure}
\begin{figure}[htpb]
    \centering
    \begin{subfigure}[htbp]{0.42\textwidth}
        \centering
        \includegraphics[width=\textwidth]{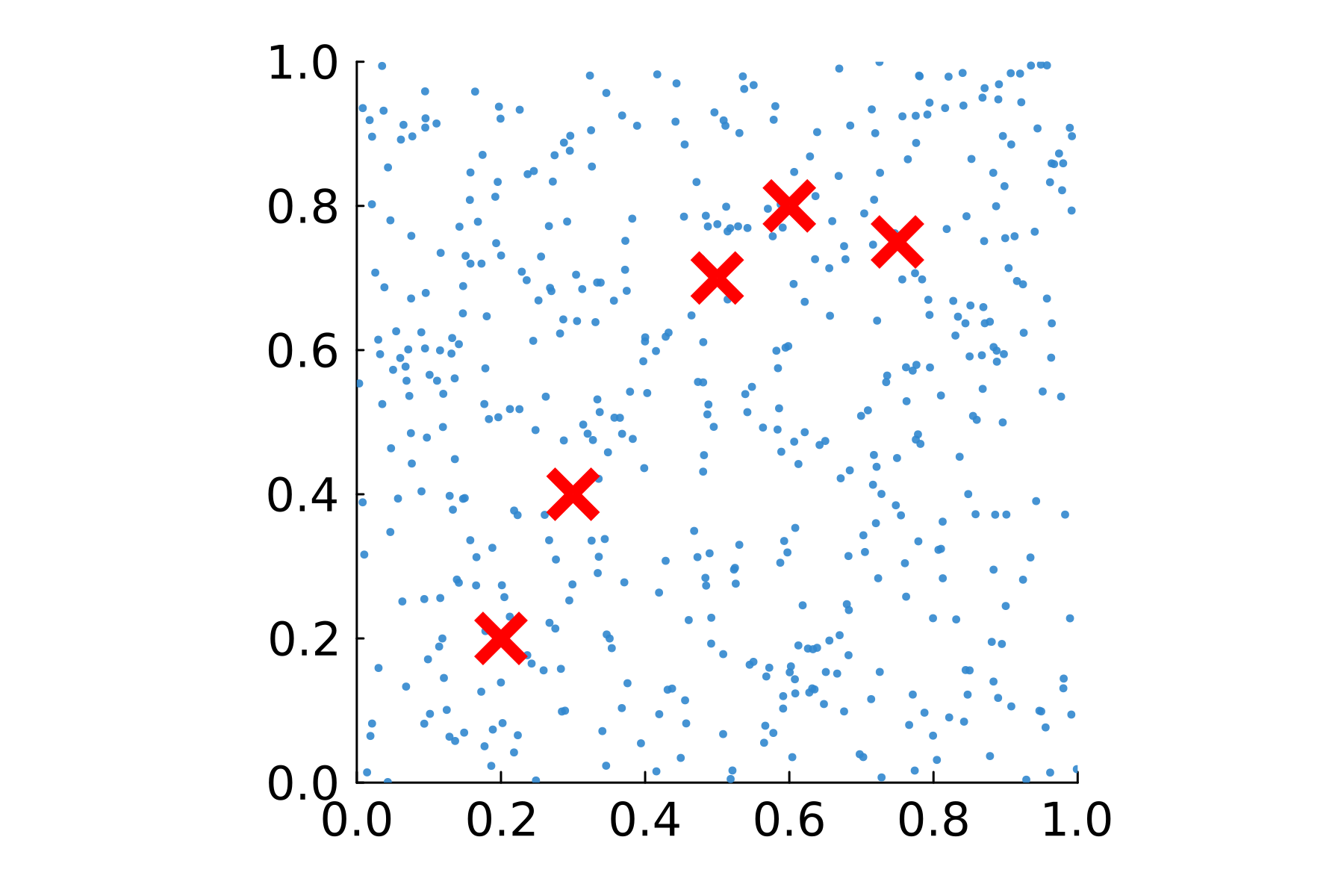}
        \caption{t=0} 
    \end{subfigure}
    \hfill
    \begin{subfigure}[htbp]{0.42\textwidth}  
        \centering 
        \includegraphics[width=\textwidth]{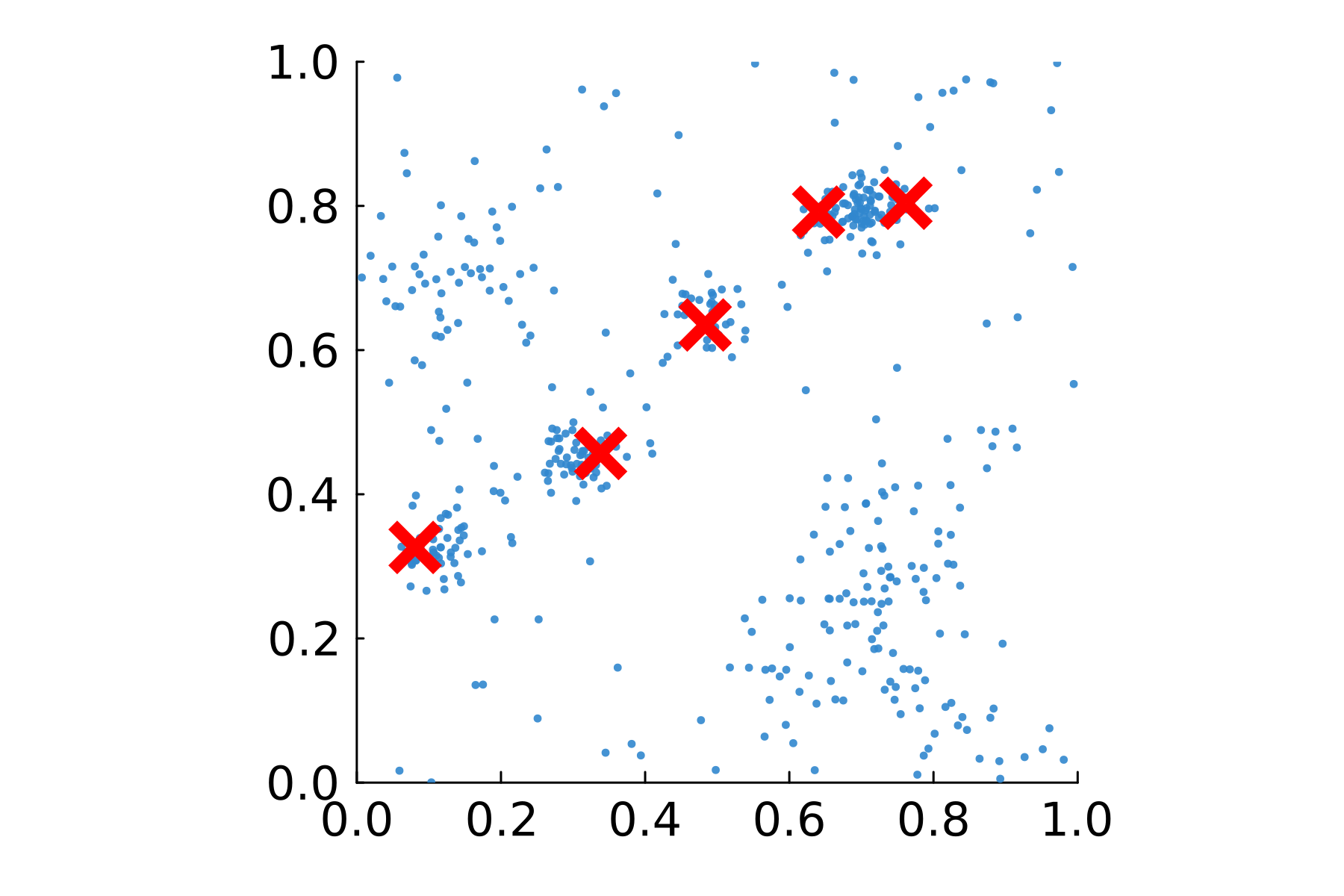}
        \caption{t=80} 
    \end{subfigure}
    \vskip\baselineskip
    \begin{subfigure}[htbp]{0.42\textwidth}   
        \centering 
        \includegraphics[width=\textwidth]{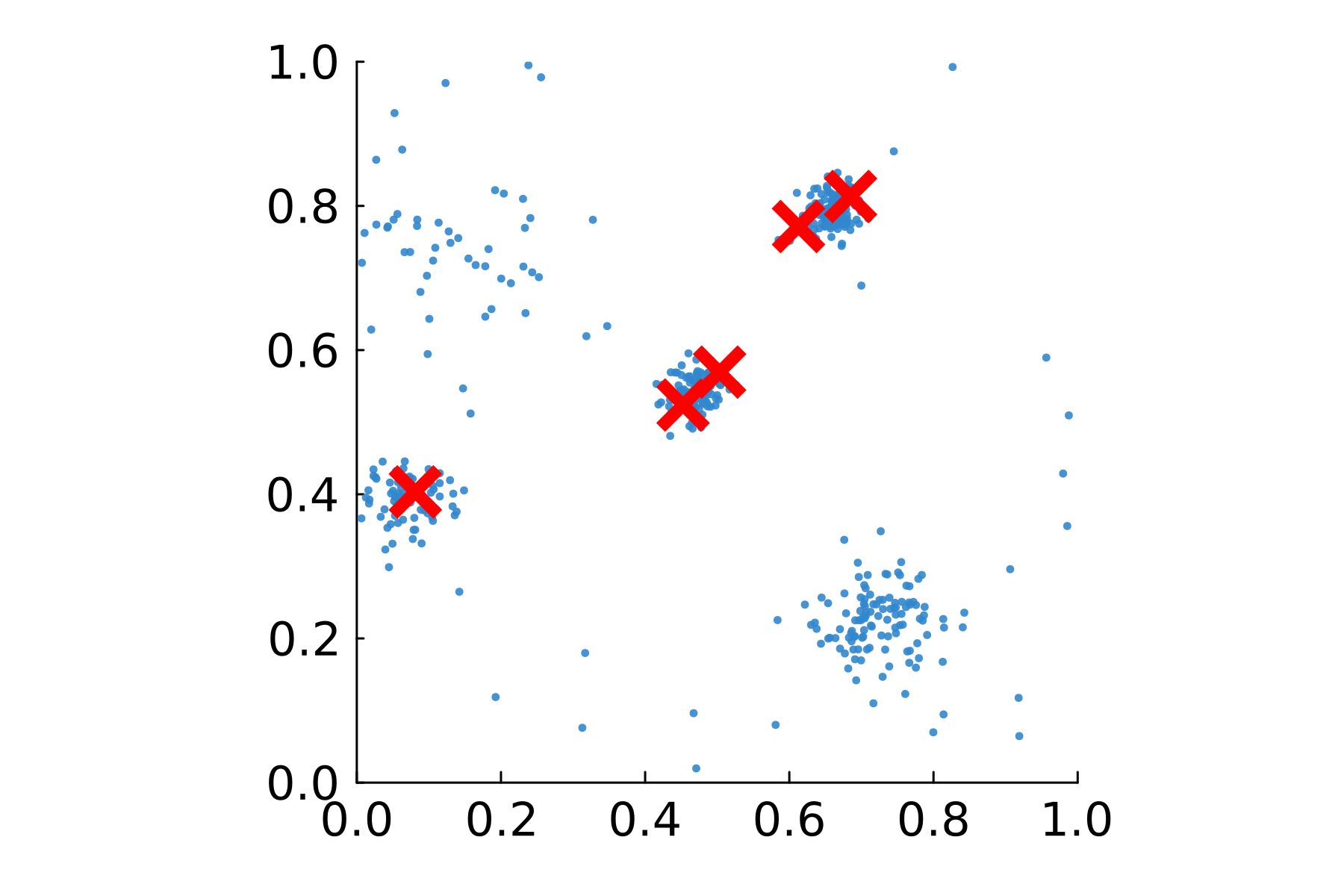}
        \caption{t=120} 
    \end{subfigure}
    \hfill
    \begin{subfigure}[htbp]{0.42\textwidth}   
        \centering 
        \includegraphics[width=\textwidth]{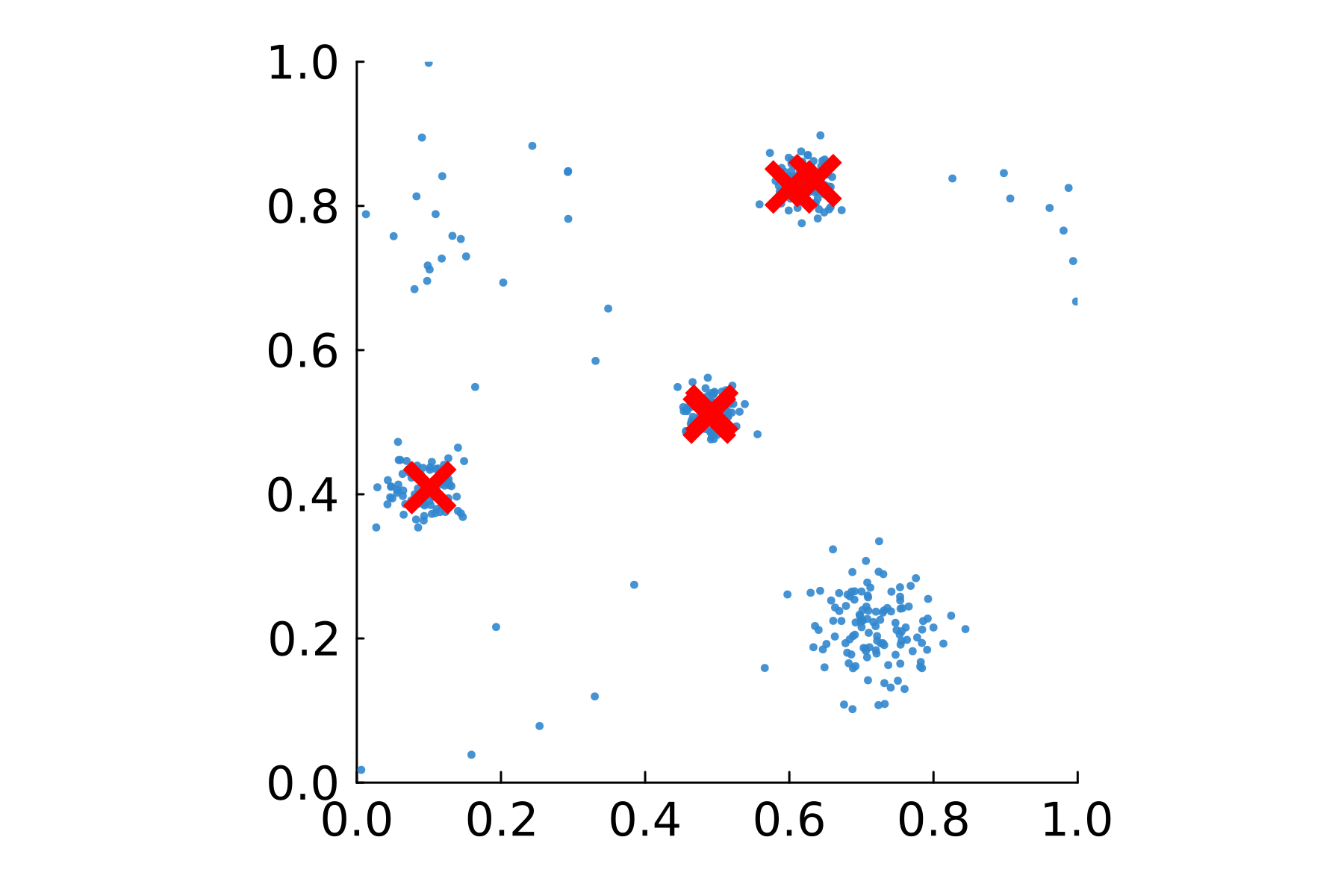}
        \caption{t=160} 
    \end{subfigure}
    \begin{subfigure}[htbp]{0.42\textwidth}   
        \centering 
        \includegraphics[width=\textwidth]{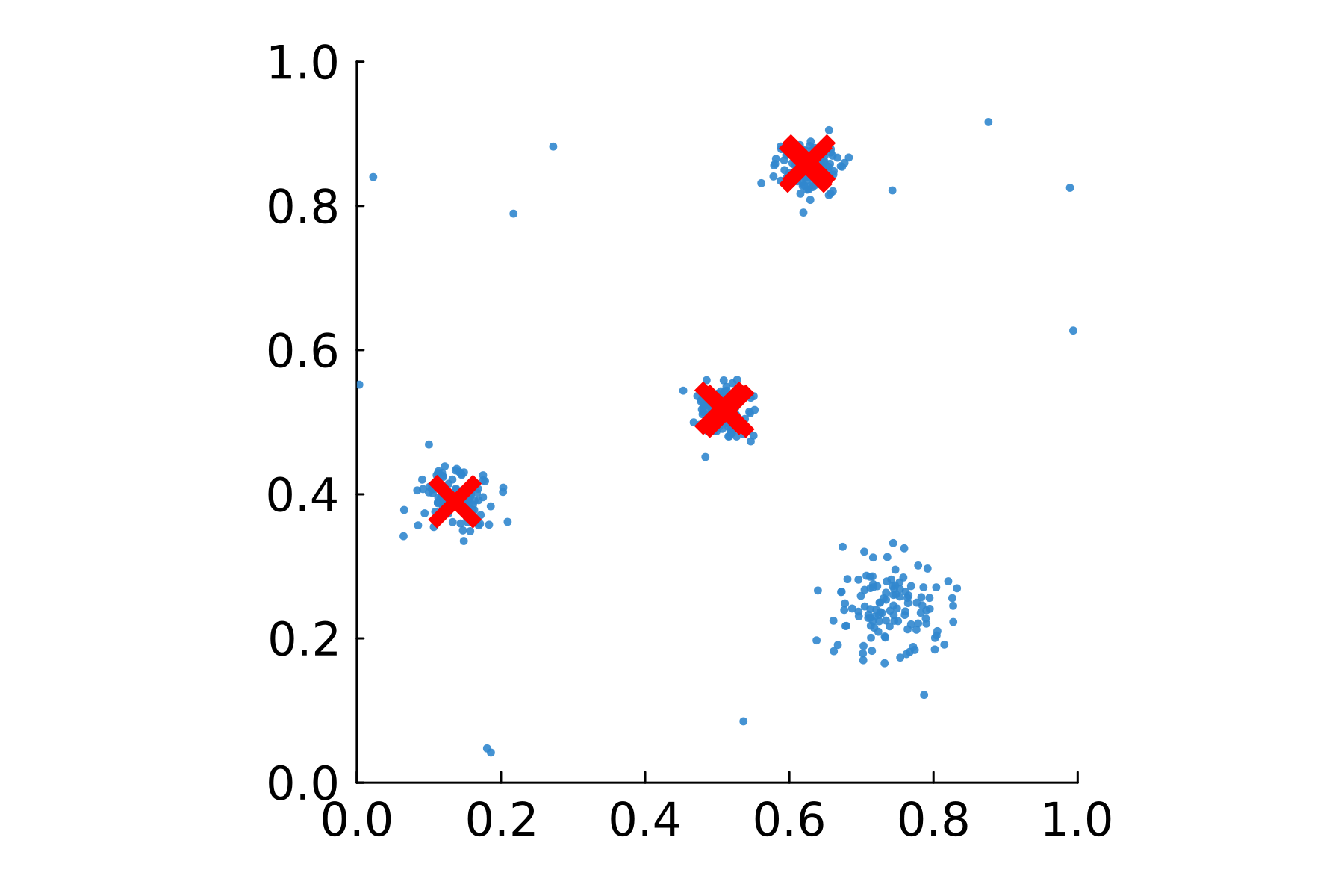}
        \caption{t=300} 
    \end{subfigure}
    \hfill
    \begin{subfigure}[htbp]{0.42\textwidth}   
        \centering 
        \includegraphics[width=\textwidth]{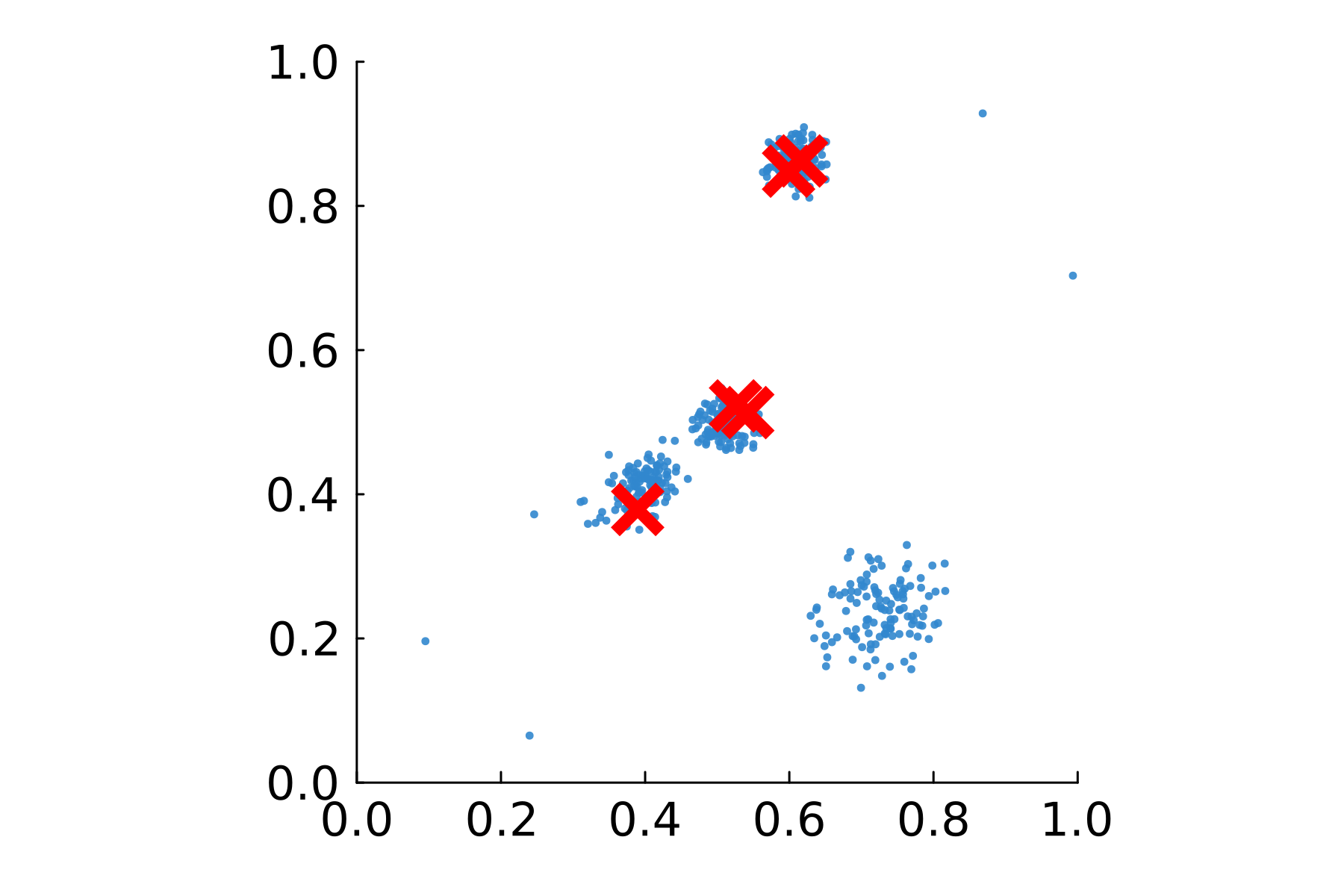}
        \caption{t=410} 
    \end{subfigure}
    \begin{subfigure}[htbp]{0.42\textwidth}   
    \centering 
    \includegraphics[width=\textwidth]{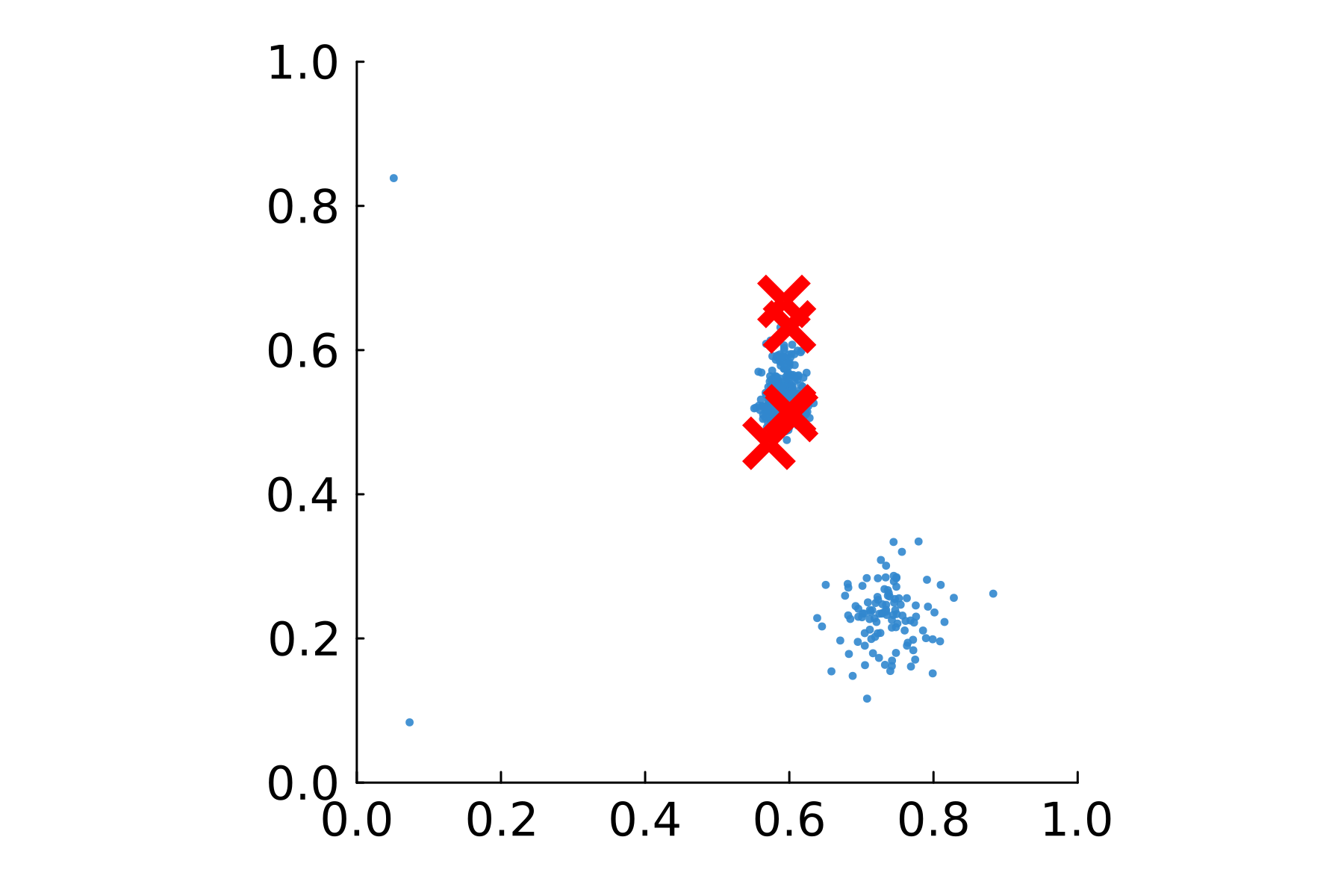}
    \caption{t=450} 
\end{subfigure}
\hfill
\begin{subfigure}[htbp]{0.42\textwidth}   
    \centering 
    \includegraphics[width=\textwidth]{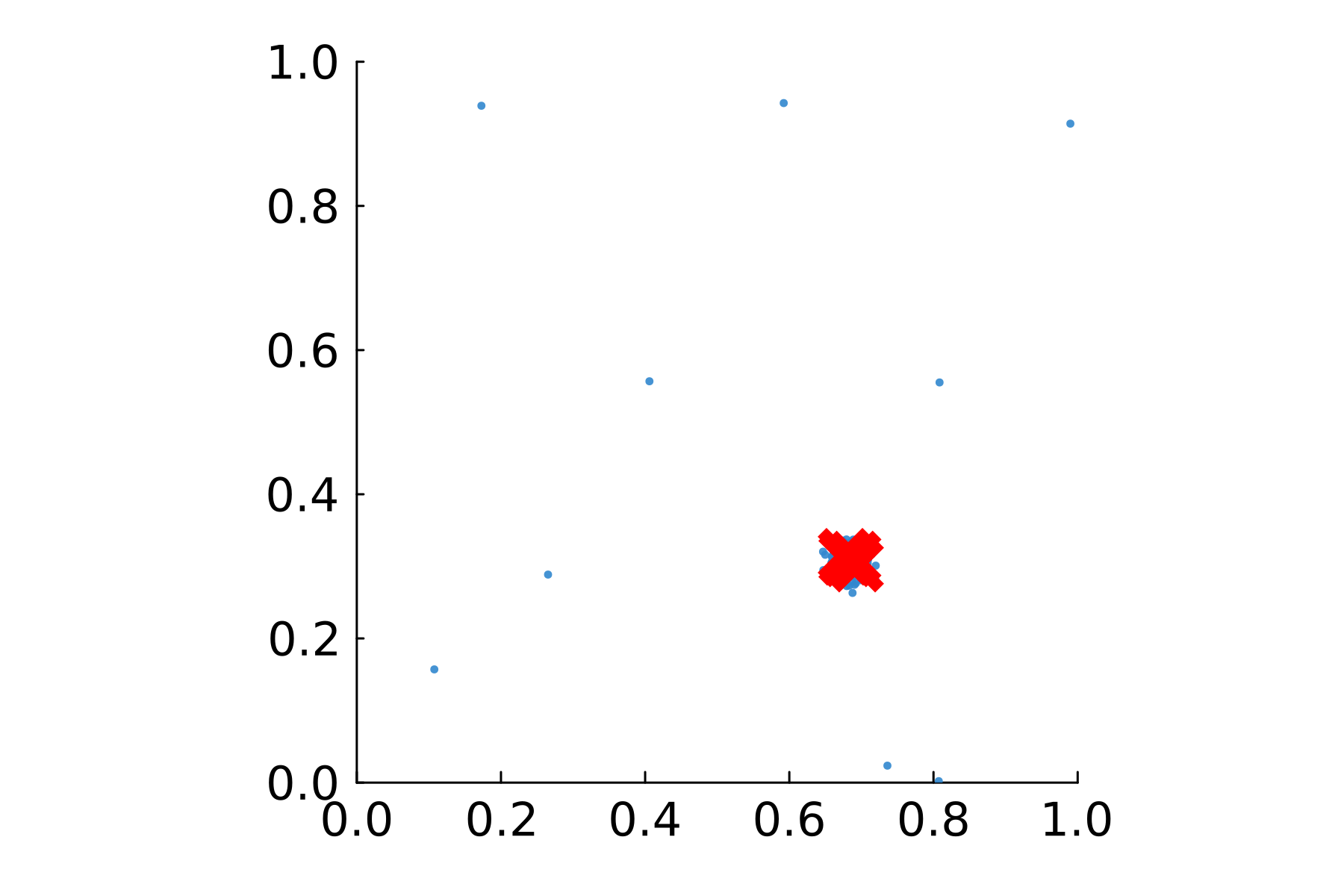}
    \caption{1,000} 
    \label{fig:t1000}
\end{subfigure}
    \caption{Evolution of the modified Hegselmann--Krause model (\ref{eq:HKa})--(\ref{eq:HKb}) in a $d=2$-dimensional opinion space for $500$ voters (blue dots)  and $5$ parties (red crosses). At time $t=0$ voters are distributed uniformly across $[0,1]^2$ and parties are initially at $p_1(0)=(0.2,0.2)$, $p_2(0)=(0.3,0.4)$, $p_3(0)=(0.5,0.7)$, $p_4(0)=(0.6,0.8)$, and $p_5(0)=(0.75,0.75)$.   Parameters are $R_{vv}=0.2$, $R_{pv}=0.1$, $R_{vp}=0.5$, $R_{pp}=0.05$ and $\mu_{vv}=0.5$, $\mu_{pv}=1.0$, $\mu_{vp}=0.05$, $\mu_{pp}=0.02$. The noise strengths are $\sigma_v = 0.02$ and $\sigma_{p} = 0.002$.}
    \label{fig:2devolution}
\end{figure}

To quantify the degree of consensus of voters in the original Hegselmann--Krause model \eqref{eq:HK0} \citet{wang2017noisy} introduced the order parameter 
\begin{align}
Q_{vv} = \frac{1}{N_{v}^2}\sum_{i=1}^{N_v}\sum_{j=1}^{N_v}\phi\left(\frac{||v_j-v_i||}{R_{vv}}\right).
\end{align}
This voter-centric parameter is, however, not well suited to quantify consensus in the presence of parties. One may envisage a situation of no consensus in which voters form a tight cluster in opinion space due to the voter-voter force proportional to $\mu_{vv}$ but the parties evolve unaffected by the voters if their distance is larger than the respective interaction radii. This, admittedly, unrealistic situation would be classified as consensus with $Q_{vv}=1$. A more common situation is the ordered state when voters are strongly attracted to a party by the party-voter force with strength $\mu_{pv}$, forming a cluster around the party. This cluster forms even if the voter-voter force $F_{vv}$ is weak due to a small interaction radius $R_{vv}$. The smallness of the interaction radius $R_{vv}$ implies $Q_{vv}\ll 1$ misclassifying the ordered state as disordered. Depending on the parameter space and initial conditions, it is non-trivial to determine a single order parameter that may be used to quantify the order. Indeed, in the scenarios outlined in Section~\ref{sec:examples}, $Q_{vv}$ can decrease in the party-base or the swing voter scenarios, compared to a uniform distribution, even though there is clear clustering occurring.

To account for the presence of parties, we introduce an alternative consensus diagnostic which is given by the weighted average distance between voter-voter pairs, voter-party pairs and party-party pairs, which measures a deviation from the disordered state in which voters and parties are uniformly distributed over the opinion space. In particular, we define
\begin{equation}
    \hat{D} = 1 - \frac{1}{3 k_d}\left(\frac{1}{N_{v}^2}\sum_{v_i,v_j}||v_i - v_j|| + \frac{1}{N_{v}N_p}\sum_{v_i,p_\alpha}||v_i - p_\alpha|| + \frac{1}{N_{p}^2}\sum_{p_\alpha,p_\beta}||p_\alpha - p_\beta||\right),
\end{equation}
where
\begin{align}
k_d = 2^d \int_{\textbf{0}}^{\textbf{1/2}} \sqrt{x_1^2 + x_2^2 + ... + x_d^2}\; d\textbf{x}
\end{align}
is the expected distance between two random draws from a uniform distribution on the hypercube $[0,1]^d$ with periodic boundary conditions. The parameter $\hat D$ is not a strict order parameter in the sense of statistical mechanics but instead a diagnostics measuring how far the distribution of voters and parties differs from the maximally disordered state of uniformly distributed voters. The proposed consensus diagnostic satisfies $0 \le \hat{D} \le 1$ with equality given only by uniformity and unanimous consensus, respectively. 

The two diagnostics $Q_{vv}$ and $\hat D$ capture different aspects of consensus formation. Whereas $Q_{vv}$ counts the number of interacting voter pairs (within $R_{vv})$, $\hat{D}$ measures the distance of the voters and parties from a uniform distribution. This makes $Q_{vv}$ more sensitive to changes in clusters (mergers and formations), while $\hat{D}$ is less sensitive to clusters but more sensitive to departures from a uniform state.\\

Figure~\ref{fig:noisyconvergenceorders} (left) shows the evolution of $Q_{vv}$ and $\hat{D}$ for the case of the $1$-dimensional opinion space depicted in Figure~\ref{fig:noisyconvergence}. Near consensus is achieved at $t=300$ with $Q_{vv}>\hat{D}\approx 0.9$. The consensus diagnostic $\hat{D}$ exhibits a monotonic increase. It initially grows slowly during the observed clustering of voters into party-bases and then increases more strongly when clusters coalesce. The growth in $\hat{D}$ is fastest at $t\approx 100$ when two voter clusters merge into a single cluster. By contrast, $Q_{vv}$ better captures the periods of coexisting clusters which are characterized by plateaus in $Q_{vv}$. The period when there are $4$ distinct party-base clusters for $10 \lessapprox t \lessapprox 30$ and the period  $50 \lessapprox t\lessapprox 100$ when there are two party-base clusters and one cluster of swing clusters are well captured by plateaus in $Q_{vv}$. The existence of nonentrained voters at $t=300$ implies that both $\hat D$ and $Q_{vv}$ are not equal to $1$. The nonentrained voters will eventually diffuse into the main cluster.

Figure~\ref{fig:noisyconvergenceorders} (right) shows the evolution of $Q_{vv}$ and $\hat{D}$ corresponding to the dynamics in the $2$-dimensional opinion space shown in Figure~\ref{fig:2devolution}. The consensus diagnostic $\hat D$ exhibits a plateau for $120 \lessapprox t \lessapprox 380$ corresponding the state of four weakly interacting cluster (three party-base clusters and one disaffected voter cluster). When these clusters begin to interact more strongly $\hat D$ increases monotonically. The order parameter $Q_{vv}$ captures the ordered state of four weakly interacting clusters less clearly. However, it better captures the merger of two groups (three parties) at $t\approx 420$ and a further merger of all party-base clusters around $t\approx 450$. The slow increase of $\hat D$ for $t\gtrapprox 450$ is due to the slowed down merger of parties caused by their mutual repulsion. All voters and parties coalesce in a single cluster around $t\approx 1,000$.

\begin{figure}[htbp]
    \centering
     \includegraphics[width=0.49\linewidth]{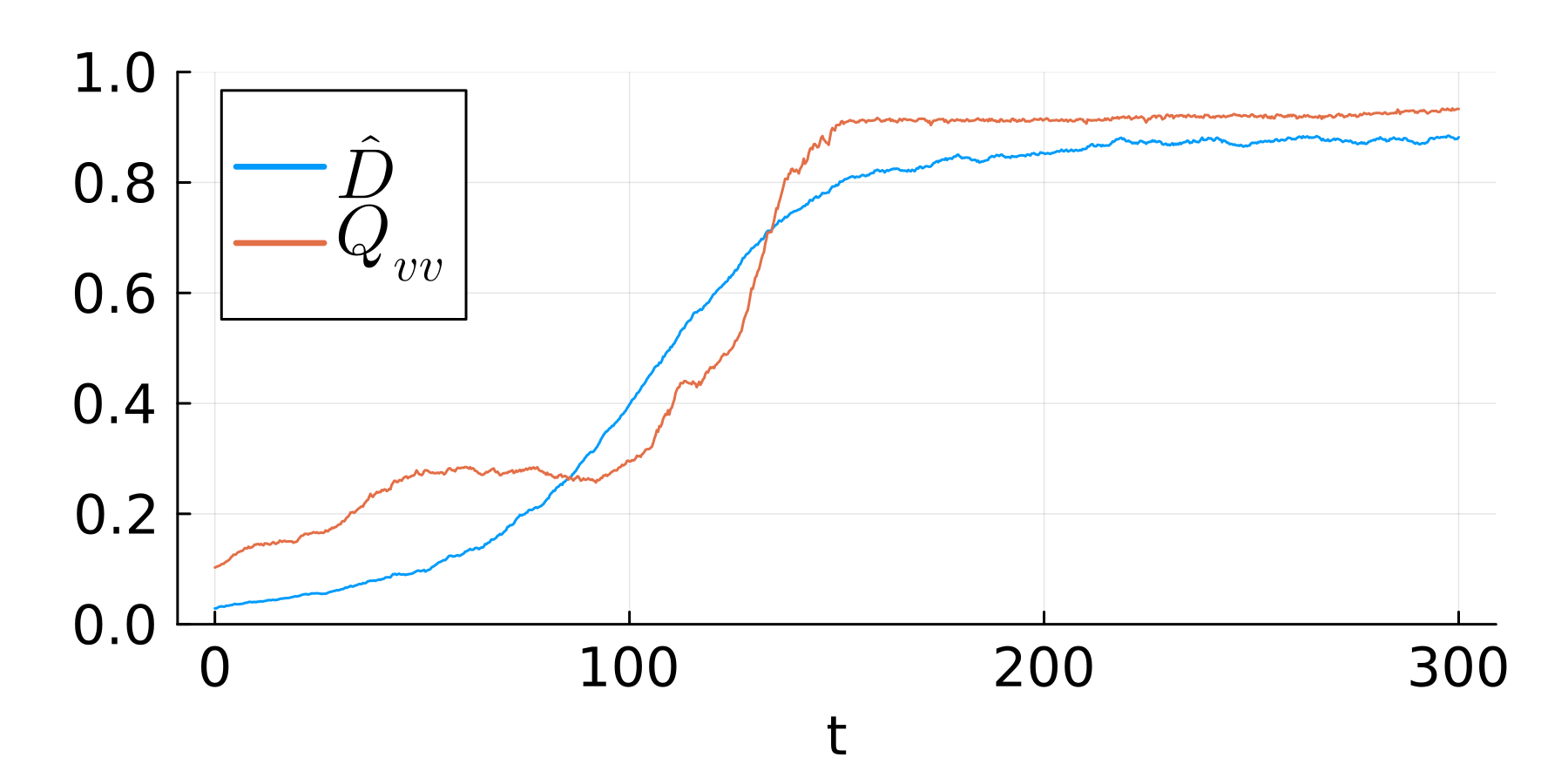}
     \includegraphics[width=0.49\linewidth]{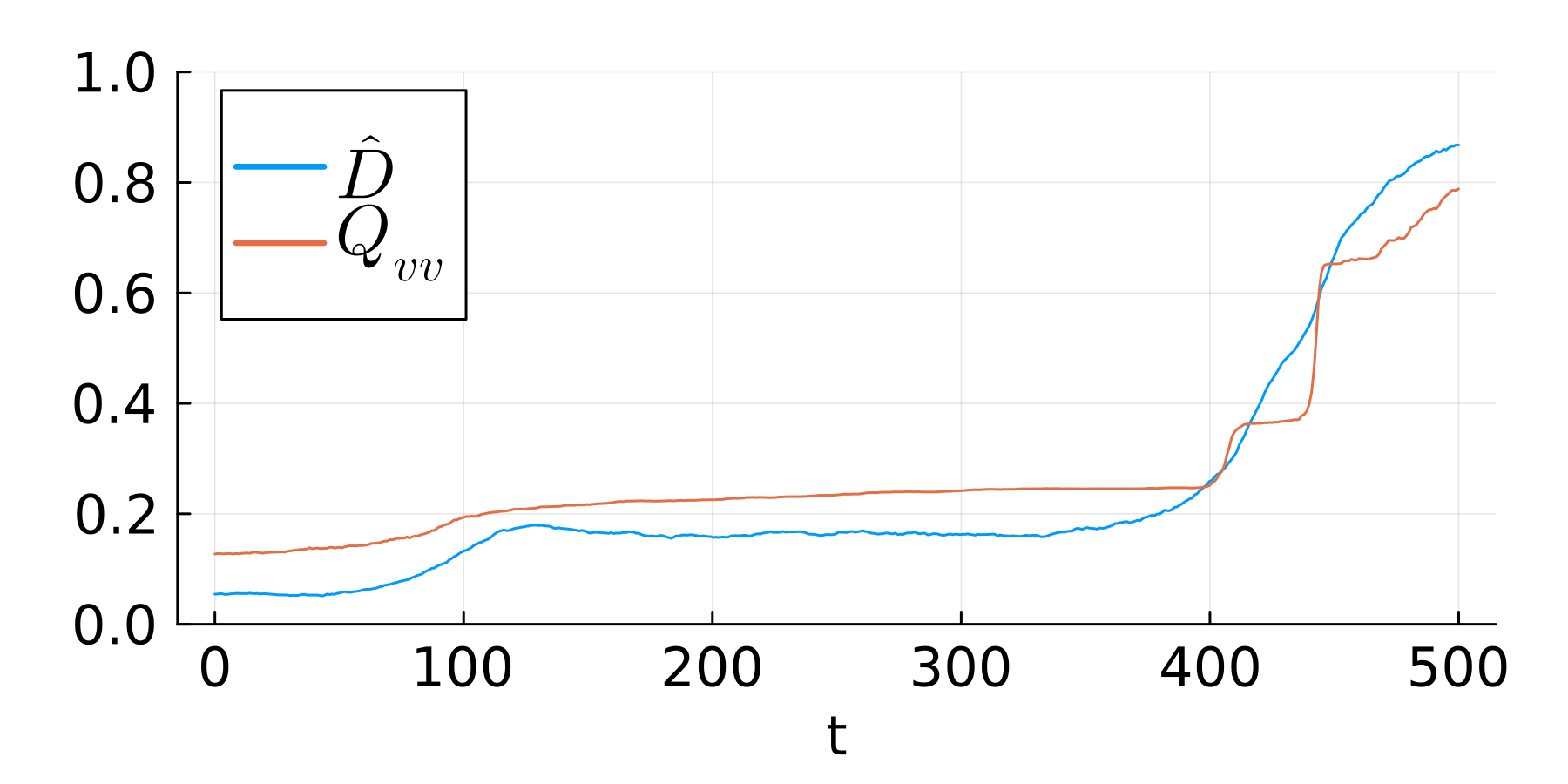}
     \caption{Order parameter $Q_{vv}$ and consensus diagnostic $\hat{D}$ corresponding to the simulations in the $1$-dimensional opinion space shown in Figure~\ref{fig:noisyconvergence} (left) and in the $2$-dimensional opinion space shown in Figure~\ref{fig:2devolution} (right).}
    \label{fig:noisyconvergenceorders}
\end{figure}
A remarkable feature of the Hegselmann--Krause model (\ref{eq:HK0}) is the existence of a phase transition: there exists a critical noise strength $\sigma_c$ such that for noise strength $\sigma>\sigma_c$ voters are not able to evolve towards consensus but instead diffuse as independent noisy agents \citep{Fortunato05,garnier2017consensus, wang2017noisy, delgadino2023phase, lucarini2020response,GoddardEtAl21}. It is intuitive that the modified Hegselmann--Krause model \eqref{eq:HKa}-\eqref{eq:HKb} exhibits a similar phase transition. In Figure~\ref{fig:noisyconvergence} we have seen an example when the system organizes in clusters before eventually reaching consensus with $Q_{vv}>0.95$ around $t\approx 150$ when the voters collapse to a single cluster. The remaining voters which have not yet been entrained by the cluster at $t=200$ will eventually diffuse into the consensus cluster. If we keep all parameters the same but increase the noise strength from $\sigma_v=0.002$ to $\sigma_v=0.1$ final consensus cannot be reached. As shown in Figure~\ref{fig:statecomparison} the noise term dominates over the attractive interaction forces, and the system effectively behaves like a set of $N_v$ independent Brownian motions. In this case, voters do not form clusters and both $Q_{vv} \approx 2R_{vv}=0.1$ and $\hat{D}\approx 0.028$ correspond to values when the voters are drawn from a uniform distribution. We note that with periodic boundary conditions, a set of $N_v$ independent Brownian processes are uniformly distributed.

\begin{figure}[htbp]
    \centering
     \includegraphics[width=0.49 \textwidth]{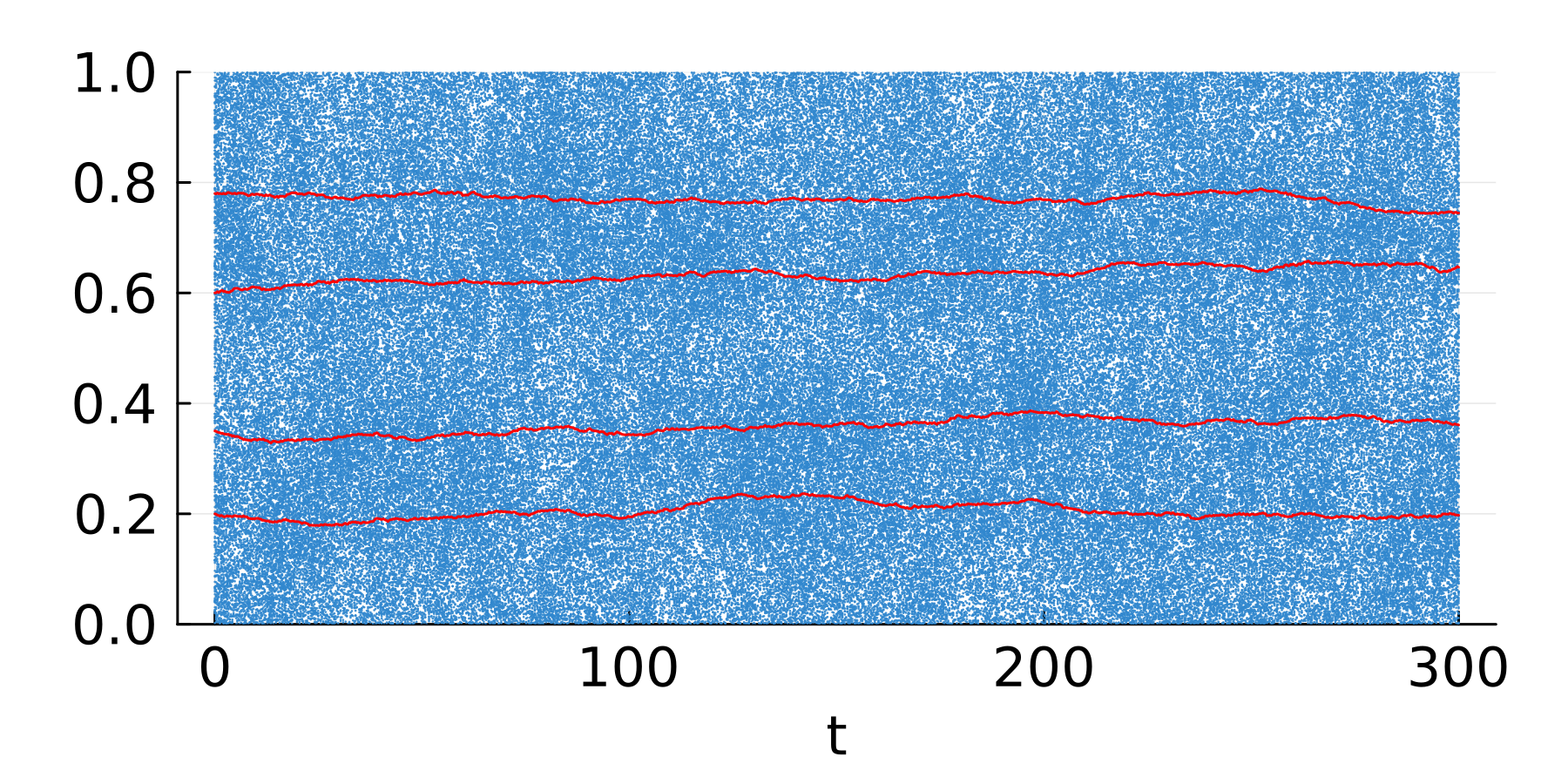}
     \includegraphics[width=0.49 \textwidth]{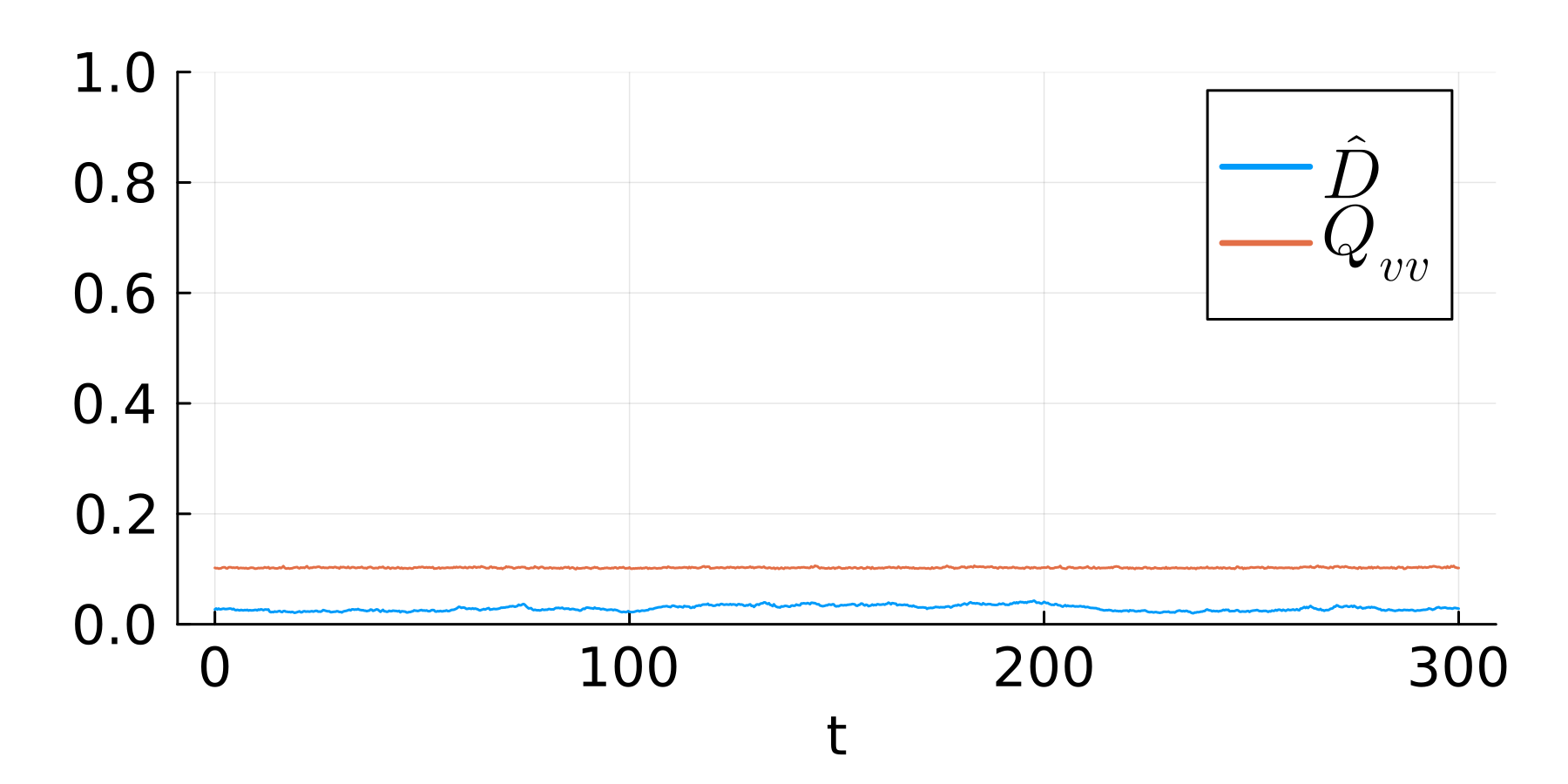}
    \caption{Evolution of the modified Hegselmann--Krause model (\ref{eq:HKa})--(\ref{eq:HKb}) in a $d=1$-dimensional opinion space for $500$ voters and $4$ parties for $\sigma_v=0.1$ leading to a stable uniform distribution of voters. The initial conditions and the rest of the parameters is as in Figure~\ref{fig:noisyconvergence}. Left: Actual voter (blue dots) and party (red crosses) dynamics. Right: Corresponding evolution of the order parameter $Q_{vv}$ and the consensus diagnostic $\hat D$.}
    \label{fig:statecomparison}
\end{figure}

In the following two Sections, we investigate the phase transition of the modified Hegselmann--Krause model \eqref{eq:HKa}--\eqref{eq:HKb}. In Section~\ref{sec:app1} we consider the noiseless case and provide a sufficient condition for the occurrence of consensus. In Section~\ref{sec:meanfield} we determine the conditions for the phase transition in the noisy case, employing the mean-field limit of the model.
  

\section{Criterion for consensus in the noiseless modified Heg\-sel\-mann--Krause model}
\label{sec:app1}
We consider here the deterministic case with $\sigma_v = \sigma_p = 0$. In the deterministic case, unanimous consensus is defined as the situation when all voters and all parties occupy the same position in opinion space. For notational convenience, we introduce the state of the system at time $t$ as $\varphi (t) = (v(t), p(t))$ with the voter opinion profile $v(t) := (v_1(t), \hdots, v_{N_v}(t)) \in \mathbb{R}^{dN_v}$ and the party opinion profile $p(t)=(p_1(t), \hdots,p_{N_p}(t))  \in \mathbb{R}^{dN_p}$. If there is no repulsive force between the parties with $\mu_{pp} =0$, global unanimous consensus occurs if all interaction radii are sufficiently large to cover the convex hull $\Omega(\varphi(0))$ of the initial distribution of voters and parties. In this case, all agents, voters and parties, are mutually and attractively interacting.  In the case that the smallest support of the forces does not cover the convex hull, consensus is not guaranteed and typically non-interacting clusters form. The size and number of these clusters will depend on the interaction radii and force strengths.

If the repulsive party-party interaction force is included with $\mu_{pp}>0$, more complex interactions are possible. Unanimous consensus can still occur provided the strength of the forces exerted by voters on parties dominates over the repulsion of the party-party interaction. This is formulated in the following Proposition which provides a sufficient condition for consensus.
\begin{prop}
\label{prop:E}
     The state $\varphi(t)$ approaches unanimous consensus if 
     \begin{align}
     \mu_{pp} < \mu_{vp}
     \label{eq:Prop1}
     \end{align}
     and if 
     \begin{align}
        R>\frac{1}{2}\rm{diam}\left(E(0)\right),
         \label{eq:Prop2}
     \end{align}
     where $R={\rm{min}}(R_{vv},R_{vp},R_{pv},R_{pp})$ is the smallest of all interaction radii and $\rm{diam}(E(0))$ is the diameter of the set
    \begin{equation}
        E(t):=\Omega\left(\varphi(t)\cup \{\psi(t)\}\right)
\label{eq:consensusrepulsion}
\end{equation}
at initial time $t=0$. The set $E(t)$ is the convex hull covering the voters and parties with 
\begin{equation}
 \psi(t) = \frac{\mu_{vp}\langle v(t) \rangle - \mu_{pp}\langle p(t) \rangle}{\mu_{vp} - \mu_{pp}}.
\end{equation} 
Here the angular brackets denote averages over voters and parties. Additionally, for $t_2 > t_1 \ge 0$, we have
\begin{equation}
E(t_2) \subset E(t_1).    
\end{equation}
\end{prop}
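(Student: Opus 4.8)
\emph{Proof strategy.} The plan is to exploit the fact that the virtual agent $\psi$ is constructed so that, as soon as every agent interacts with every other, the net force on a party, $\mu_{vp}F_{vp}(p_\alpha,v)-\mu_{pp}F_{pp}(p_\alpha,p)$, collapses to $(\mu_{vp}-\mu_{pp})\bigl(\psi-p_\alpha\bigr)$, i.e.\ an attraction towards $\psi$ whose rate is positive precisely because of \eqref{eq:Prop1}. Thus on the relevant convex hull the repulsive model behaves like a purely attractive consensus model with $\psi$ adjoined as an extra agent, and for such models the hull can only shrink and must collapse to a point.

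First I would reduce to a linear regime by a bootstrap. Let $[0,T)$ be the maximal interval on which every agent lies within the relevant interaction radius of every other; \eqref{eq:Prop2} makes it nonempty, and on it all kernels in \eqref{eq:F_vv}--\eqref{eq:F_pp} equal $1$, so \eqref{eq:HKa}--\eqref{eq:HKb} with $\sigma_v=\sigma_p=0$ becomes the linear system
\[
\dot v_i=(\mu_{vv}+\mu_{pv})\bigl(\chi(t)-v_i\bigr),\qquad \dot p_\alpha=(\mu_{vp}-\mu_{pp})\bigl(\psi(t)-p_\alpha\bigr),
\]
where $\chi(t)=\bigl(\mu_{vv}\langle v\rangle+\mu_{pv}\langle p\rangle\bigr)/(\mu_{vv}+\mu_{pv})$ lies on the segment $[\langle v\rangle,\langle p\rangle]$ and $\psi(t)$ is exactly the point in the statement. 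Averaging gives $\tfrac{d}{dt}\langle v\rangle=\mu_{pv}(\langle p\rangle-\langle v\rangle)$ and $\tfrac{d}{dt}\langle p\rangle=\mu_{vp}(\langle v\rangle-\langle p\rangle)$, and substituting shows $\dot\psi$ is a nonnegative multiple of $\langle v\rangle-\psi$ when $\mu_{pp}>0$, and of $\langle p\rangle-\psi$ when $\mu_{pp}=0$. Hence on $[0,T)$ each of the points $v_1,\dots,v_{N_v},p_1,\dots,p_{N_p},\psi$ obeys $\dot x_k=\lambda_k(t)\bigl(y_k(t)-x_k(t)\bigr)$ with $\lambda_k(t)\ge0$ (this is where $\mu_{vp}-\mu_{pp}\ge0$ from \eqref{eq:Prop1} is needed for the parties) and target $y_k(t)\in E(t)$ (for voters $y_k=\chi\in[\langle v\rangle,\langle p\rangle]\subset E(t)$; for parties $y_k=\psi\in E(t)$ by definition of $E(t)$; for $\psi$ the target is $\langle v\rangle$ or $\langle p\rangle$, again in $E(t)$).

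Next I would prove $E(t_2)\subset E(t_1)$, which also closes the bootstrap. For a fixed unit vector $e$ put $s_e(t)=\max_k\langle e,x_k(t)\rangle$, the value of the support function of $E(t)$ in direction $e$. At any time, for an index $k^*$ realizing the maximum, $\tfrac{d}{dt}\langle e,x_{k^*}\rangle=\lambda_{k^*}\bigl(\langle e,y_{k^*}\rangle-\langle e,x_{k^*}\rangle\bigr)\le0$ because $\langle e,y_{k^*}\rangle\le s_e(t)$; the standard one-sided-derivative argument for a maximum of finitely many differentiable functions then gives that $s_e$ is non-increasing, and since $e$ is arbitrary, $E(t_2)\subset E(t_1)$ for $t_2>t_1\ge0$. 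In particular $\operatorname{diam}(E(t))\le\operatorname{diam}(E(0))$ for all $t<T$, so by continuity no interaction kernel can switch off at $t=T$, forcing $T=\infty$; and this is exactly the final inclusion claimed in the Proposition.

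Finally I would extract convergence from the now global linear system. Subtracting equations yields $\tfrac{d}{dt}(v_i-v_j)=-(\mu_{vv}+\mu_{pv})(v_i-v_j)$, $\tfrac{d}{dt}(p_\alpha-p_\beta)=-(\mu_{vp}-\mu_{pp})(p_\alpha-p_\beta)$ and $\tfrac{d}{dt}(\langle v\rangle-\langle p\rangle)=-(\mu_{pv}+\mu_{vp})(\langle v\rangle-\langle p\rangle)$, all decaying exponentially thanks to $\mu_{vp}>\mu_{pp}$ from \eqref{eq:Prop1} (which in turn gives $\mu_{pv}+\mu_{vp}>0$) together with the mild non-degeneracy $\mu_{vv}+\mu_{pv}>0$ that keeps the voters from being inert. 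Since $\|v_i-\langle v\rangle\|$, $\|p_\alpha-\langle p\rangle\|$ and $\|\psi-\langle v\rangle\|$ are each bounded by these differences, $\operatorname{diam}(E(t))\to0$; combined with the nesting, $\bigcap_{t\ge0}E(t)$ is a single point $\ell$ and every voter and party converges to $\ell$, which is unanimous consensus. I expect the main obstacle to be precisely the bootstrap: making rigorous the interplay between the shrinking of $E(t)$ and the requirement that the interaction graph never breaks, so that the linear reduction is valid for all $t\ge0$; once full interaction is granted, the algebra of the linear regime and the support-function lemma are routine.
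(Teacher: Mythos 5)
Your proof is correct and follows the same skeleton as the paper's: reduce to the all-to-all linear regime, observe that each voter is attracted to $\chi=(\mu_{vv}\langle v\rangle+\mu_{pv}\langle p\rangle)/(\mu_{vv}+\mu_{pv})$ and each party to $\psi$ with positive rate $\mu_{vp}-\mu_{pp}$, close the $2\times2$ system for the means, and conclude exponential collapse. Where you genuinely improve on the paper is in the two steps it handles only by assertion. First, the paper simply states that ``$E(t)$ is a boundary on the party dynamics, as well as the voter dynamics for all $t>0$''; your bootstrap on the maximal interval $[0,T)$ of full interaction, combined with the continuity argument that the shrinking of $\mathrm{diam}(E(t))$ prevents any kernel from switching off, is what actually makes the linear reduction valid globally. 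Second, for the nesting $E(t_2)\subset E(t_1)$ the paper writes down an explicit solution formula for ``$v(t)$'' that is at best garbled, whereas your support-function/Dini-derivative argument --- checking that every point of the augmented configuration, including the virtual agent $\psi$ (whose velocity you correctly compute to be a nonnegative multiple of $\langle v\rangle-\psi$, resp.\ $\langle p\rangle-\psi$ when $\mu_{pp}=0$), moves toward a target inside $E(t)$ --- is the clean and complete way to get it, and is exactly the mechanism the authors borrow from Motsch--Tadmor without spelling out. One caveat you inherit from the statement rather than introduce: condition \eqref{eq:Prop2} only guarantees pairwise distances below $2R$, while the kernels \eqref{eq:phi} require distances below $R$ to be switched on, so strictly one needs $R>\mathrm{diam}(E(0))$ for the opening claim that all attractive kernels equal $1$ at $t=0$; the paper's proof makes the identical unjustified step, so this is a flaw of the Proposition as stated, not of your argument.
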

\begin{proof}
    To see that \eqref{eq:Prop1} and \eqref{eq:Prop2} are sufficient conditions to guarantee convergence, we present an adapted argument from \cite{motsch2014heterophilious} for the classical Hegselmann--Krause model. Note that all interaction forces are positive, in particular $\mu_{pp}>0$. By construction, all attractive interactions are initially nonzero with
\begin{align}
\phi\left(\frac{||v_i - v_j||}{R_{vv}}\right) = \phi\left(\frac{||v_j-p_\alpha||}{R_{vp}}\right) = \phi\left(\frac{||p_\alpha - v_i||}{R_{pv}}\right) = 1
\end{align}
for all $i,j \le N_v$ and for all $\alpha \le N_p$. Note that there is no condition on the interaction radius for the repulsive party-party interaction kernel. Hence, at $t=0$ \eqref{eq:HKa} may be written as,
\begin{align}
    \frac{dv_i}{dt} &= \frac{\mu_{vv}}{N_v}\sum_{j}(v_j-v_i) + \frac{\mu_{pv}}{N_p}\sum_{\beta}(p_\beta-v_i)\notag\\
    &= \mu_{vv} \left(\langle v \rangle - v_i \right) + \mu_{pv}\left(\langle p\rangle - v_i \rangle \right)\notag\\
    &= (\mu_{vv} + \mu_{pv})\left(\frac{\mu_{vv}\langle v \rangle + \mu_{pv}\langle p\rangle}{\mu_{vv} + \mu_{pv}} - v_i\right).\label{eq:dvisufficient}
\end{align}
Thus, $v_i$ exponentially decays to the weighted voter and party average, where the weights are determined by the strengths of the forces acting on voters. It is clear that $\frac{\mu_{vv}\langle v \rangle + \mu_{pv}\langle p\rangle}{\mu_{vv} + \mu_{pv}} \in \Omega(\varphi(t)) \subseteq E(t)$. The evolution of the parties, \eqref{eq:HKb}, can be written for $t=0$ as
\begin{align}
    \frac{dp_\alpha}{dt} = \left(\mu_{vp} - \mu_{pp}\right)\left(\frac{\mu_{vp}\langle v \rangle - \mu_{pp}\langle p \rangle}{\mu_{vp} - \mu_{pp}} - p_\alpha\right) .
\label{eq:dpsufficient}
\end{align}
Hence, if $\mu_{vp} > \mu_{pp}$, $p_\alpha$ decays exponentially to $\psi(t)=\frac{\mu_{vp}\langle v \rangle - \mu_{pp}\langle p \rangle}{\mu_{vp} - \mu_{pp}} \in E(t)$ by construction. Hence, $E(t)$ is a boundary on the party dynamics, as well as the voter dynamics for all $t>0$. Next, observe that $\langle v(t) \rangle, \langle p(t)\rangle \in E(t)$. The evolution of the mean positions of the voters and parties are given by
\begin{align}
    \frac{d\langle v \rangle}{dt} &= \mu_{pv}\left(\langle p \rangle - \langle v \rangle \right)\notag,\\
    \frac{d\langle p \rangle}{dt} &= \mu_{vp}\left(\langle v \rangle - \langle p \rangle \right)\notag,
\end{align}    
implying
\begin{align}    
   \frac{d \left(\langle v \rangle - \langle p \rangle\right)}{dt} &= -(\mu_{pv} + \mu_{vp})\left(\langle v \rangle - \langle p \rangle\right),
\end{align}
which is readily solved to yield
\begin{align}
    \langle p(t) \rangle = \langle v(t) \rangle + e^{-(\mu_{pv} + \mu_{vp})t}A,
\end{align}
where $A=\langle v(0) \rangle - \langle p(0)\rangle$ is constant. This implies that $\langle p(t) \rangle \to \langle v(t) \rangle$ for $t\to \infty$ with exponential rate of convergence $\mu_{pv} + \mu_{vp}$. As $\langle p(t) \rangle \to \langle v(t) \rangle$ this implies
\begin{align}
    \frac{\mu_{vp}\langle v(t) \rangle - \mu_{pp}\langle p(t) \rangle}{\mu_{vp} - \mu_{pp}} \to  \langle v(t) \rangle 
    \qquad {\rm{and}} \qquad  
    \frac{\mu_{vv}\langle v(t) \rangle + \mu_{pv}\langle p(t)\rangle}{\mu_{vv} + \mu_{pv}} \to \langle v(t)\rangle,
\end{align}
which implies according to (\ref{eq:dvisufficient}) and (\ref{eq:dpsufficient}) that the system approaches consensus, as desired.

Additionally, \eqref{eq:dvisufficient} and \eqref{eq:dpsufficient} imply that $\Omega(\varphi({t_2})) \subset E(t_1)$ for some $t_2 > t_1 \ge 0$. We may write $v(t)$ explicitly as
\[v(t) = \langle v(t)\rangle - \frac{\mu_{pp}}{\mu_{vp}-\mu_{pp}}e^{-(\mu_{pv}+\mu_{vp})t}A,\]
which implies that $v(t)$ monotonically converges to $\langle v(t)\rangle$ and hence we have,
\begin{align}
    E(t_2) \subset E(t_1).
\end{align}
\end{proof}

The result of Proposition~\ref{prop:E} may be interpreted in an intuitive way. For a given set of initial conditions with parameters that lead to consensus, increasing the repulsive force strength $\mu_{pp}$ or decreasing the interaction force strength $\mu_{pv}$ will lead to a lack of consensus. We illustrate the result in Figure~\ref{fig:repulsiveconvergence} where we show how the two conditions \eqref{eq:Prop1} and \eqref{eq:Prop2} affect the formation of unanimous consensus. The top subfigure shows $100$ voters uniformly distributed across $[0,1]$ with $4$ parties placed at $p_1(0)=0.1$, $p_2(0)=0.3$, $p_3(0)=0.5$ and $p_4(0)=0.7$. The parameters are chosen such that they conform with both conditions of the Proposition. We observe that voters and parties reach unanimous consensus, with the parties converging at a slower rate due to their repulsive forces $\mu_{pp}=0.4$ being smaller than the attractive party-voter force with $ \mu_{vp}=0.5$. In the middle subfigure of Figure~\ref{fig:repulsiveconvergence} we show the effect of breaking condition \eqref{eq:Prop1} in preventing convergence to unanimous consensus. The initial conditions and all parameters are kept the same as above except now $\mu_{pp}= \mu_{vp}=0.5$, violating condition \eqref{eq:Prop1}. Here, voters still converge to a single cluster under their attractive forces since the interaction radii cover all the voters initially, i.e. condition \eqref{eq:Prop2} is satisfied. However, the parties do not converge, but instead move slowly to 4 different stationary positions that lead to a net-zero force on the parties due to the attractiveness of the voters equalling the repulsiveness of the parties. Finally, unanimous consensus can also be prohibited by reducing the interaction radii, i.e. violating the second condition \eqref{eq:Prop2}. This is shown in the bottom subfigure of Figure~\ref{fig:repulsiveconvergence}. Here condition \eqref{eq:Prop1} is satisfied, however, the interaction radii are all $0.1$ violating condition \eqref{eq:Prop2}. This leads to the formation of distinct non-interacting clusters and a lack of unanimous consensus. Note that to maintain a simple interpretation of the convex hull, we have produce these figures with reflective boundary conditions as opposed to periodic shown elsewhere.

Note that the result of Proposition~\ref{prop:E} is independent of the dimension $d$. Also note that in \cite{motsch2014heterophilious} it is shown that the classical Hegselmann--Krause model \eqref{eq:HK0} satisfies $\Omega(t_2) \subset \Omega(t_1)$. This is not the case in our model because it is possible for parties near the boundary of the convex hull to be repelled beyond the boundary before being attracted back to consensus. $E(t)$ can be thought of as the area containing the convex hull with a buffer region by which the voters and parties are bounded.  Indeed, \eqref{eq:consensusrepulsion} is in practice a weak condition for the emergence of consensus, and consensus can occur with smaller interaction radii $R$ so long as the support of the strongest force covers a sufficiently large portion of the initial distribution. In an aside, we note that in the middle subfigure of Figure~\ref{fig:repulsiveconvergence}, $Q_{vv}=1$ throughout the whole time period, which is not representative of the behaviour of the system. Alternatively, $\hat{D}$ increases from $0$ to $~0.45$ before stabilising and remaining there - which correctly detects that the parties have not converged to uniform consensus (not shown).

\begin{figure}[!htpb]
    \centering
     \includegraphics[width=0.6\textwidth, height=0.3\textwidth]{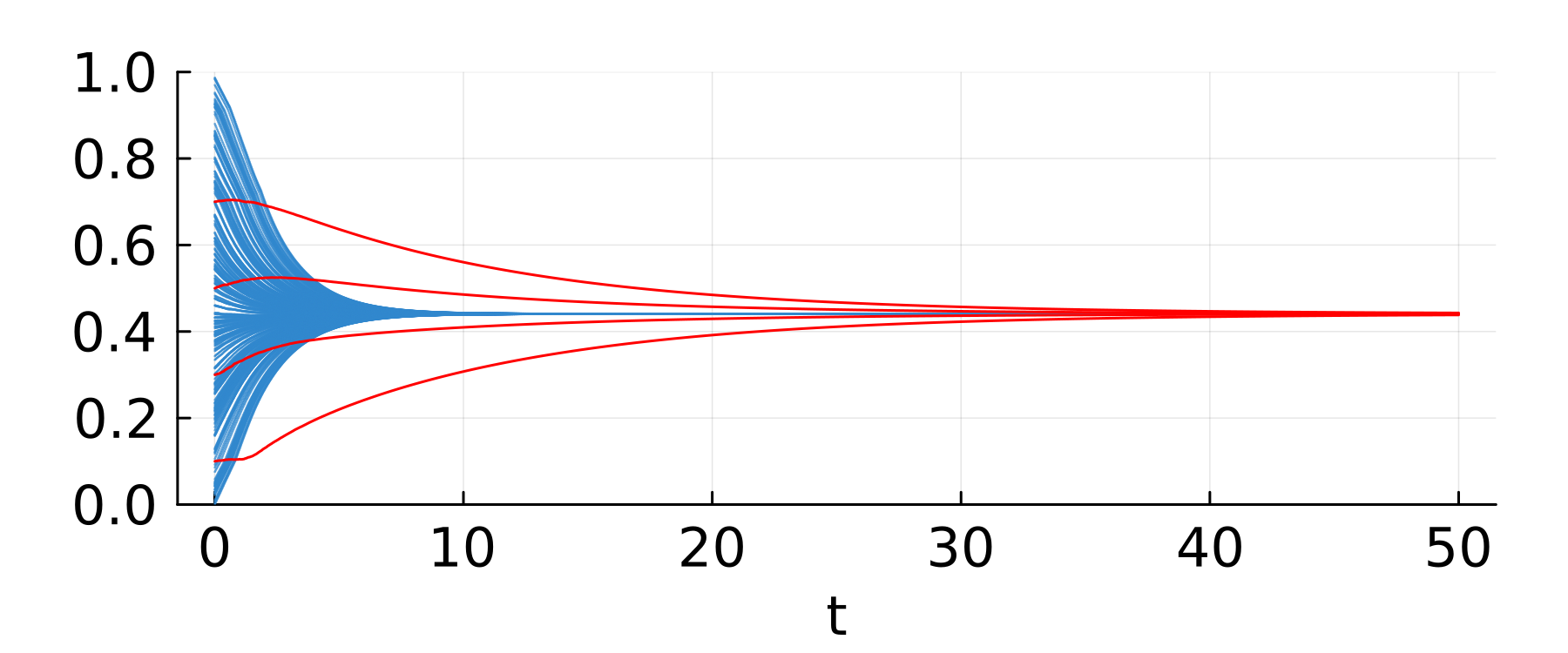}
     \includegraphics[width=0.6\textwidth, height=0.3\textwidth]{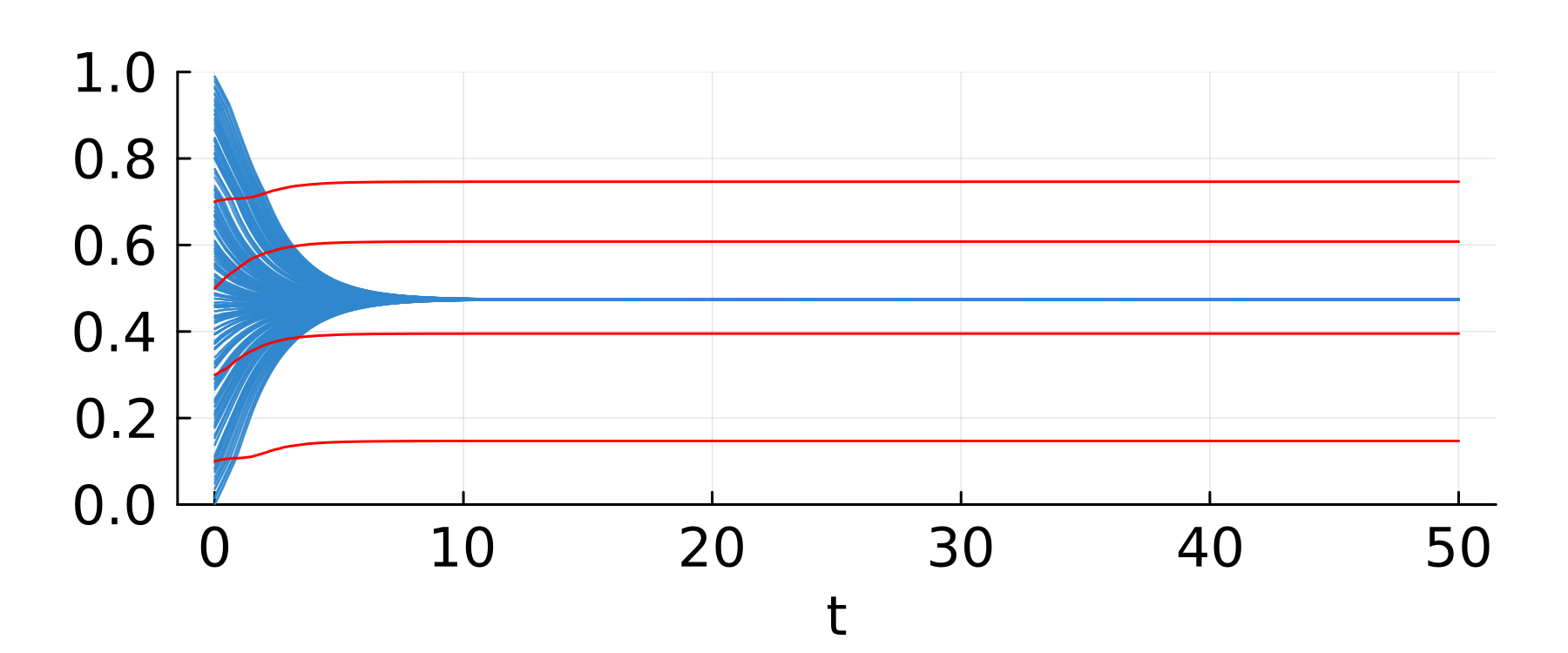}
      \includegraphics[width=0.6\textwidth, height=0.3\textwidth]{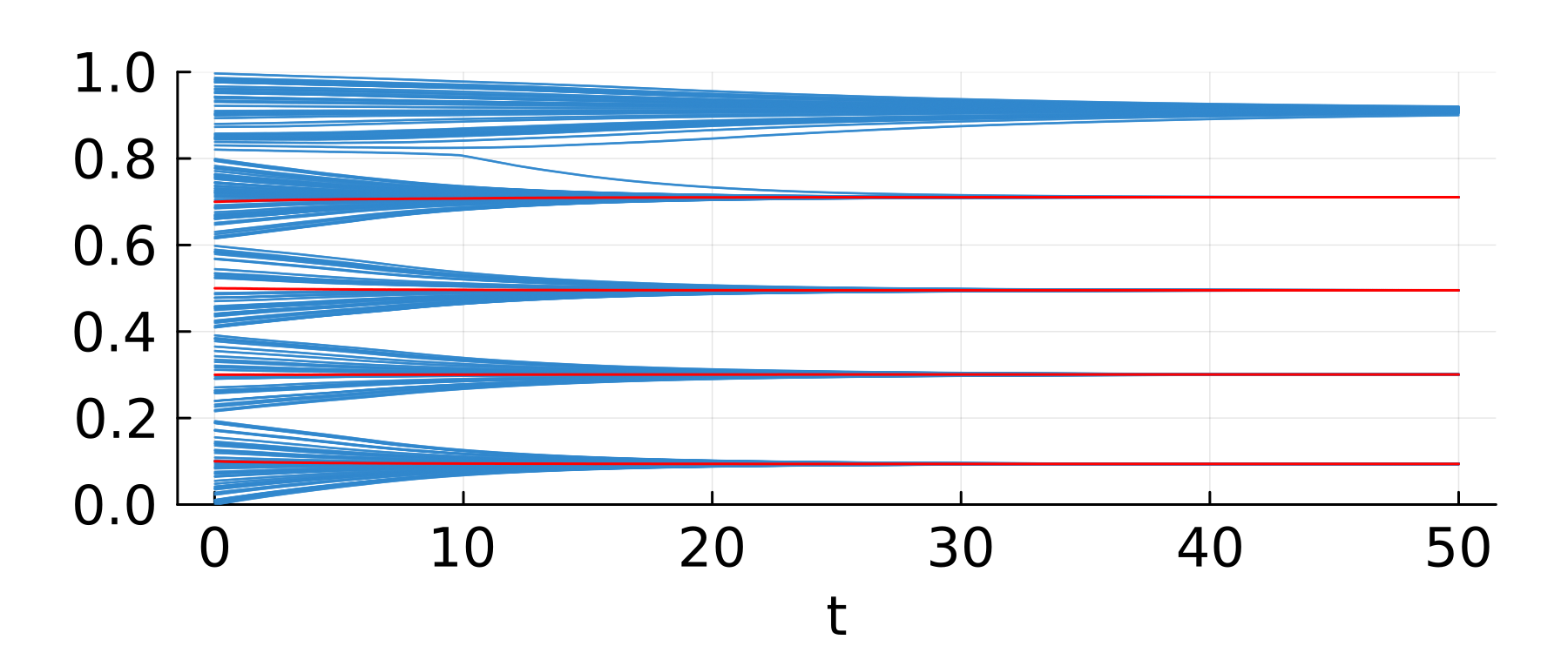}
    \caption{Evolution of the noise-less modified Hegselmann--Krause model (\ref{eq:HKa})--(\ref{eq:HKb}) in a $d=1$-dimensional opinion space with $\sigma_v=\sigma_p=0$ for $100$ voters (blue) and $4$ parties (red). Initially voters are distributed uniformly across $[0,1]$ and parties are initially at $p_1(0)=0.1$, $p_2(0)=0.3$, $p_3(0)=0.5$ and $p_4(0)=0.7$. Parameters are $\mu_{vv}=\mu_{pv}=0.3$ and $\mu_{vp}=0.5$. Here we use reflective boundary conditions. Top: unanimous consensus when both conditions \eqref{eq:Prop1} and \eqref{eq:Prop2} are satisfied with $\mu_{pp}=0.4$ and $R_{vv}=R_{pv}=R_{vp}=R_{pp}=0.6$. Middle: no unanimous consensus when condition \eqref{eq:Prop1} is not satisfied with $\mu_{pp}=0.5$ and condition \eqref{eq:Prop2} is satisfied with $R_{vv}=R_{pv}=R_{vp}=R_{pp}=0.6$. Bottom: no unanimous consensus when condition \eqref{eq:Prop2} is not satisfied with $R_{vv}=R_{pv}=R_{vp}=R_{pp}=0.1$, and condition \eqref{eq:Prop2} is satisfied with $\mu_{pp}=0.4$.}
    \label{fig:repulsiveconvergence}
\end{figure}
%


\section{The mean-field limit and phase transitions}
\label{sec:meanfield}

The original noisy Hegselmann--Krause model undergoes a phase transition as the noise strength increases \cite{wang2017noisy} whereby instead of the formation of clusters, voters approach a uniform distribution. In this section, we show that the inclusion of parties does not affect the presence of the phase transition, and we derive an analytical approximation of the critical noise strength, $\sigma_c$. This transition from an ordered state of voters to a uniform distribution of voters occurs when $\sigma_v>\sigma_c$. While large noise is unrealistic -- as voters would change their opinions in a random walk -- this exercise allows us to study the effect of parties on consensus formation. Perhaps unsurprisingly, we show that parties enhance the formation of voter clusters.

To study the stability of the consensus state when noise is added and to study the phase transition discussed in Section~\ref{sec:consensus}, we consider the mean-field limit $N_v \to \infty$  and derive an equation for the voter density $\rho(v,t)$ \cite{Toscani06,AlbiEtAl17,PareschiEtAl19,Zanella23}. To determine the phase transition we perform a linear stability analysis on the mean-field limit equation. As stated above, for convenience we consider periodic boundary conditions with $v_i={\rm{mod}}(v_i,1)$. 

The classical Hegselmann--Krause model (\ref{eq:HK0}), which does not contain interactions with political parties, allows for a straight forward derivation of the evolution equation for the voter density $\rho(v,t)$. The voter density is the limit for $N_v \to \infty$ of the empirical measure 
\begin{align}
\rho^{(N_v)}(t,dv)= \frac{1}{N_v}\sum_{j=1}^{N_v} \delta_{v_j(t)}(dv).
\end{align}
The derivation of the corresponding equation for the evolution of the limiting density $\rho(v,t)$ is achieved by the Bogoliubov-Born-Green-Kirkwood-Yvon (BBGKY) hierarchy, and hinges on the exchangeability of voters. For the modified Hegselmann--Krause model (\ref{eq:HKa})-(\ref{eq:HKb}), which includes party dynamics, the mean-field limit $N_v\to\infty$ can still be taken, however, one typically only has a small number of political parties and hence the limit $N_p\to\infty$ is unrealistic.

To derive an equation for the voter density associated with our modified Hegselmann--Krause model we make the following assumption about the dynamics of the political parties $p_\alpha$. It is reasonable to assume that voters are much more willing to change their opinion on certain issues than parties. Political parties have a higher inertia than voters and therefore change in opinion space more slowly than individual voters. Indeed, \cite{cameron20222022} showed that as of the 2022 Australian Federal election only $37\%$ of voters have supported the same party at every election, which suggests that individual voters are quite willing to change their opinions. Political parties, on the other hand, attempt to reinforce the issues they are traditionally seen as strong in, as well as engage in ideological politics \citep{ames2001electoral}. Hence, we assume that the party dynamics is slow compared to the voter dynamics and we assume  $\mu_{vp},\mu_{pp}\ll \mu_{vv}, \mu_{pv}$ and $\sigma_p \ll \sigma_v$. This allows us to treat for some time $0\le t<T$ the parties as constant parameters in the voter dynamics (\ref{eq:HKa}) leading to an evolution equation for voters with frozen parties with  
\begin{align}
    dv_i =  \mu_{vv}\sum_j \phi\left(\frac{||v_j-v_i||}{R_{vv}}\right)(v_j - v_i)\, dt + \mu_{pv}F_{pv}(v_i;p)\,dt + \sigma_v dW^i(t),
    \label{eq:HKap}
\end{align}
where $F_{pv}(v_i;p) = \sum_\beta \phi{\left(\frac{||p_\beta-v_i||}{R_{pv}}\right)}(p_\beta-v_i)$ is the effect of the parties on the voter, and $p=\begin{pmatrix}p_1 & \hdots & p_{N_p}\end{pmatrix} \in \mathbb{R}^{dN_p}$ is frozen. The voter dynamics (\ref{eq:HKap})  allows for an application of the BBGKY hierarchy to derive the following equation for the one-voter density $\rho(v,t;p)$, conditioned on the party parameters $p$, 
\begin{align}
    \partial_t \rho(v,t;p) &= -\nabla \cdot [\rho(v,t;p)\left(\mu_{vv} K \star\rho + \mu_{pv}F_{pv}(v;p)  \right)] + \frac{\sigma^2_v}{2}\Delta \rho(v,t;p) 
\label{eq:mckean}
\end{align}
with initial data
\begin{align}    
\rho(v,0;p) &= \rho_0(v;p)
\end{align}    
and interaction kernel 
\begin{align}
K\star\rho =\int \phi\left(\frac{||w-v||}{R_{vv}}\right)(w-v)\rho(w)dw.
\end{align}
To study the phase transition from global consensus to a near-uniform distribution of voters upon increasing the noise strength $\sigma_v$ we follow the pipeline used by \citet{garnier2017consensus} and \citet{wang2017noisy} for the classical Hegselmann--Krause model \eqref{eq:HK0}. 

Equation~\eqref{eq:mckean} for the stationary-party Hegselmann--Krause model allows for the stationary uniform voter distribution $\rho_0(v;p) = 1$ for $v\in[0,1]^d$ provided that the parties are equidistantly distributed in opinion space. In the case that they are not equidistantly spaced, $\rho_0$ still may serve as a good approximation for sufficiently small values of $\mu_{pv}$.

To identify the critical noise strength $\sigma_c$ below which consensus occurs, we perform a linear stability analysis around the stationary uniform state. We consider an expansion in small $\mu_{pv}$
\begin{align}
\rho(v,t) = 1 + \mu_{pv} \rho_1(v,t) + \mathcal{O}(\mu_{pv}^2).\label{eq:small_mupv}
\end{align}
We note that a small $\mu_{pv}$ may not be realistic when parties exert strong leadership effects over voters. Substituting \eqref{eq:small_mupv} into \eqref{eq:mckean} yields
\begin{align}
    \frac{\partial \rho_{1}}{\partial t}=-\nabla \cdot \left[\left(  \mu_{vv}K \star\rho_{1} + \rho_{1}(v,t)\mu_{pv}F_{pv}(v;p) \right)\right]
       + \frac{\sigma^2_v}{2}\Delta \rho_{1}(v,t), 
\label{eq:linearized}
\end{align}
where we used $K \star \rho_0=0$. Employing periodic boundary conditions, we can solve this linear equation for $\rho_1$ by a Fourier ansatz $\rho_{1}(v,t) = \sum_{\boldsymbol{k}} T_{\boldsymbol{k}}(t)e^{2\pi i \boldsymbol{k} \cdot v}$ for $\boldsymbol{k} \in \mathbb{Z}^d$ with $k:=||\boldsymbol{k}||_2\neq 0$ to ensure that $\rho$ is a probability density. The convolution term can be simplified using
\begin{align}
    -\nabla \cdot (K\star e^{2\pi i \boldsymbol{k} \cdot v}) 
    &= -\nabla \cdot \int_{||\mathrm{z}||\le R_{vv}} \mathrm{z} e^{2\pi i \boldsymbol{k} \cdot (\mathrm{z}+v)}d\mathrm{z}\notag\\
    &= sR_{vv}e^{2\pi i \boldsymbol{k}\cdot v}\int_{||\mathrm{z}||\le 1}z\left(\sin(sz) - i\cos(sz)\right)d\mathrm{z}, 
\label{eq:convolutionintegral}
\end{align}
where we introduced the scalar $z = \frac{\boldsymbol{k} \cdot z}{k}$ and the scaled wave number modulus $s = s(k) = 2\pi R_{vv}k$. Restricting to a single Fourier mode, the linearized  equation \eqref{eq:linearized} is written as
\begin{align}
T_{\boldsymbol{k}}^\prime(t) = A(k;v) T_{\boldsymbol{k}}(t),
\end{align}
with
\begin{align}        
      A(k;v)  &= -\mu_{pv}\left[\nabla \cdot F_{pv}(v;p) + 2\pi i \boldsymbol{k} \cdot F_{pv}(v;p)\right] - \frac{s^2}{2R_{vv}^2} \sigma_v^2 \notag \\
  &\;  + \mu_{vv} sR_{vv}\int_{||\mathrm{z}||\le 1}z \left(\sin(sz) - i\cos(sz)\right)d\mathrm{z} .
\label{eq:A}
\end{align}
Note that since $A(k;v)$ depends on $v$ the differential equation for $T$ has non-constant coefficients. We define the largest possible growth rate as 
\begin{align}
\gamma(k) = \max_{v} \text{Re} A(k;v).
\end{align}
We now set out to determine $\gamma(k)$, which will allows us to determine the critical noise strength $\sigma_c$ such that $\gamma(k)\le 0$ for all $k$ for all $\sigma_v<\sigma_c$.  
We make the additional assumption that each voter is affected by all parties, i.e. $R_{pv}$ is sufficiently large. For convenience, we will estimate the growth rate in the rescaled variable $s$. The maximal growth rate, expressed in terms of $s=s(k)$, is evaluated as 
\begin{align}
    \gamma (s) &= \max_v \text{Re}A(v) \notag\\
               &= \max_v  \left[ -\mu_{pv}[\nabla \cdot F_{pv}(v;p)] - \frac{\sigma_v^2}{2R_{vv}^2}s^2 + \mu_{vv} sR_{vv} \int_{-1}^{1} z\sin(sz)\, dz \right] \nonumber\\
               &= \mu_{pv}d  - \frac{\sigma_v^2}{2R_{vv}^2}s^2 + \mu_{vv} sR_{vv} \int_{-1}^{1} z\sin(sz)dz,
\label{eq:gammaS}
\end{align}
where we used that $||-\nabla \cdot F_{pv}(v;p)||\le d$ with equality when voters are affected by all $N_p$ parties at all times, which we assumed to be true. We first present results for a $1$-dimensional opinion space and then for higher-dimensional opinion spaces.

\subsection{One-dimensional case \texorpdfstring{$d=1$}{d=1}}
For $d=1$ the integral term in \eqref{eq:gammaS} can be explicitly evaluated as 
\begin{align}
sR_{vv} \int_{-1}^{1} z\sin(sz)dz = 2R_{vv}\left(\frac{\sin(s)}{s}-\cos(s)\right), 
\end{align}
and \eqref{eq:gammaS} becomes
\begin{equation}
    \gamma(s) = \mu_{pv} - \frac{\sigma_v^2}{2R_{vv}^2} s^2  + 2\mu_{vv}R_{vv}\left(\frac{\sin(s)}{s}-\cos(s)\right) .
\label{eq:gammak}
\end{equation}

\begin{figure}[htbp]
    \centering
     \includegraphics[width=0.8\textwidth]{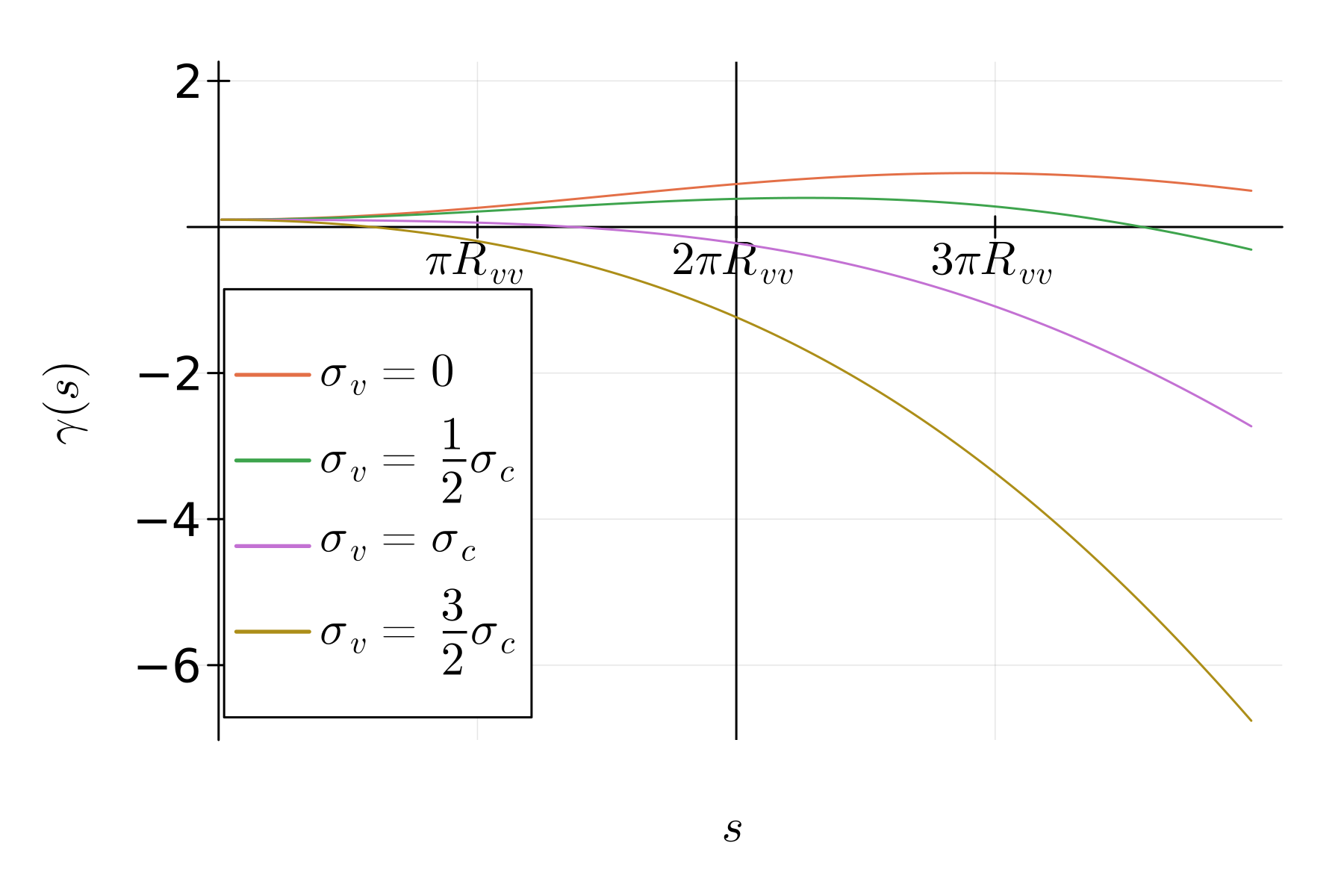}
    \caption{Growth rate $\gamma(s)$, \eqref{eq:gammak}, for different values of the noise strength $\sigma_v$. The vertical black line denotes the smallest occurring wave number $s_{\rm{min}}=2\pi R_{vv}$ for $k=1$.  The requirement for linear stability is that $\gamma(s) < 0$ for all $s\geq s_{\rm{min}}$. $R_{vv} = 0.3, \mu_{vv}=1, \mu_{pv} = 0.1$ which corresponds to $\sigma_c=0.2$.}
    \label{fig:gammakplots}
\end{figure}

The negative term quadratic in $s$ ensures that (scaled) high wave numbers $s$ are linearly stable. We hence focus on small wave number instability. Since the smallest wave number is $k = 1$, we have $s \ge 2\pi R_{vv}$, and small $s$ is achieved for small $R_{vv}$. We can find an explicit small wave number expansion of $\gamma(s)$ by performing a Taylor expansion of (\ref{eq:gammak}) in $s$, leading to
\begin{align}
    \gamma(s) 
    &= \mu_{pv} + s^2 \left(\frac{2}{3}\mu_{vv}R_{vv}-\frac{\sigma_v^2}{2R_{vv}^2}\right).
    \label{eq:gamma_Taylor}
\end{align}
The uniform state is linearly unstable if for any $s\ge 2\pi R_{vv}$ the growth rate is positive with $\gamma(s)>0$. In particular, we require small wave number instability at $s=s_{\rm{min}}=2\pi R_{vv}$. We deduce from \eqref{eq:gamma_Taylor} that small wave number instability occurs for noise strength with
\begin{align}
    \sigma_v^2 < \frac{\mu_{pv}}{2\pi^2}  + \frac{4}{3}\mu_{vv}R_{vv}^3, 
\end{align}
which implies the critical noise strength
\begin{align}
    \sigma_c^2 = \frac{\mu_{pv}}{2\pi^2}+\frac{4}{3}\mu_{vv}R_{vv}^3 .
\label{eq:smallRsigmac}
\end{align}

Note that for the standard noisy Hegselmann--Krause model (\ref{eq:HK0}) with $\mu_{pv} = 0$ and $\mu_{vv} = 1$, we recover the known critical noise strength $\sigma_c^2 = \frac{4}{3}R_{vv}^3$ \cite{wang2017noisy,garnier2017consensus}. When $\mu_{pv} >0,$ the inclusion of party dynamics shifts the phase transition to higher values of the noise strength. This can be understood heuristically as parties provide additional stability to a large group of like-minded voters. We remark that the stationary party model (\ref{eq:HKap}) is unable to recover unanimous consensus. However, there is a clear phase transition, as we will see below, from an ordered state of party-base clusters or voter consensus clusters to a disordered state of uniformly distributed voters. Figure \ref{fig:gammakplots} shows the growth rate $\gamma(s)$ for various values of the noise strength $\sigma_v$. It is seen that for small values of $\sigma_v$ the growth rate $\gamma(s)$ is positive for a range of values of $s$. For larger values of $s$, the growth rate can again increase obtaining positive values (not shown).\\

Figure~\ref{fig:improvedtransitions} shows a phase diagram obtained from a long simulation of the stationary-party model (\ref{eq:HKap}) in $(\sigma_v,R_{vv})$-space. We consider $2,000$ voters that are initially distributed uniformly on $[0,1]$ and stationary parties, $p_1=0.86, p_2 = 0.53$ and $p_3=0.2$. We simulated until time $t=500$ with $\Delta t= 0.1$. We tested for statistical equilibrium of the voter dynamics using an Augmented Dickey-Fuller (ADF) test \cite{cheung1995adftest}, testing for stationarity over the last $20\%$ of the simulation. A phase transition is clearly seen, quantified by the consensus diagnostic $\hat{D}$. To best visualize the phase transition and the departure from uniformity, which for $2,000$ voters uniformly sampled across $[0,1]$ yields $\hat{D}=0.037$, we employ a colour map with the colour transition occurring at $0.039$. The slightly larger value of $\hat D=0.039$ was chosen to allow for the detection of sufficiently large deviations from uniformity. The phase transition is well approximated by our approximation (\ref{eq:smallRsigmac}) for small values of $R_{vv}\lessapprox 0.15$, consistent with the approximation of $R_{vv}\ll 1$ we made to derive (\ref{eq:smallRsigmac}). Note that the transition in our finite-size system with $2,000$ voters is gradual rather than abrupt as suggested by the mean-field theory. The orange region of high level consensus in the phase diagram in Figure~\ref{fig:improvedtransitions} occurring for $R_{vv}>0.12$ and $\sigma > 0.03$ is reached via a transition from a state of three party-base clusters to a single swing voter cluster of smaller size akin to the voter consensus scenario discussed in Section~\ref{sec:examples} (not shown).  

Figure~\ref{fig:muphasetransition} shows the corresponding phase diagram in $(\mu_{pv},\sigma_v)$-space. The critical noise strength $\sigma_c$ exhibits a $\sqrt{\mu_{pv}}$ dependency, consistent with our approximation (\ref{eq:smallRsigmac}), illustrating the effect of the parties on the phase transition. Note that although the approximation \eqref{eq:smallRsigmac} is based on an expansion in small $\mu_{pv}$ (cf \eqref{eq:small_mupv}), it is valid up to values of $\mu_{pv}\approx 0.6$.

\begin{figure}[htbp]
    \centering
     \includegraphics[width=0.8\textwidth]{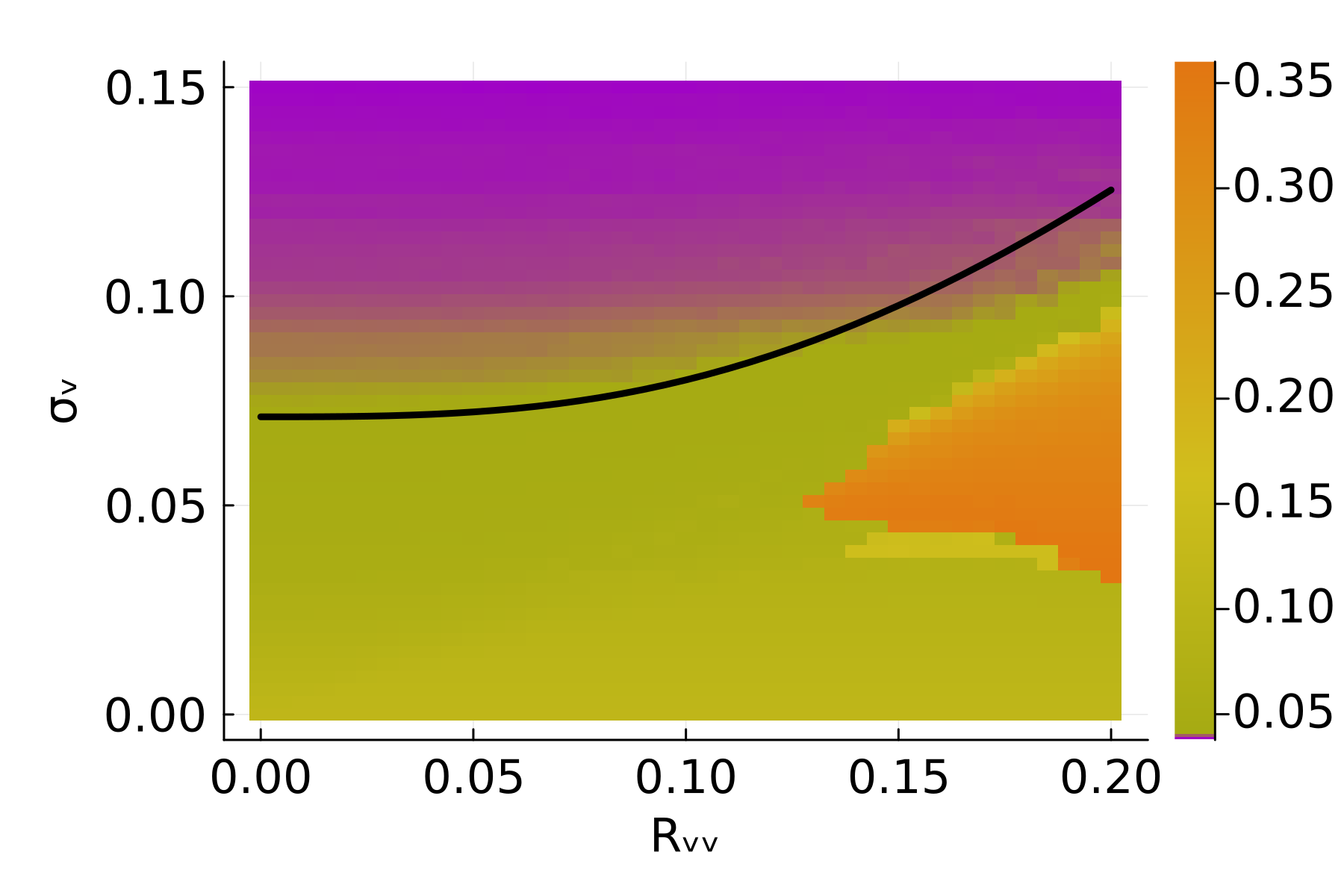}
    \caption{Phase diagram for the stationary-party-model \eqref{eq:HKap} for $2,000$ voters randomly distributed across $[0,1]$ and with three parties, $p_1=0.86, p_2 = 0.53$ and $p_3=0.2$. Parameters are $\mu_{pv} = 0.1$, $\mu_{vv}=1$ and $R_{pv}=0.5$ with  $R_{vv}$ varying from $0$ to $0.2$. Colours show the value of the consensus diagnostic $\hat D$, averaged from $t=400$ to $t=500$. The black line shows the analytical approximation \eqref{eq:smallRsigmac} of the critical curve.}
    \label{fig:improvedtransitions}
\end{figure}
\begin{figure}[htbp]
    \centering
     \includegraphics[width=0.8\textwidth]{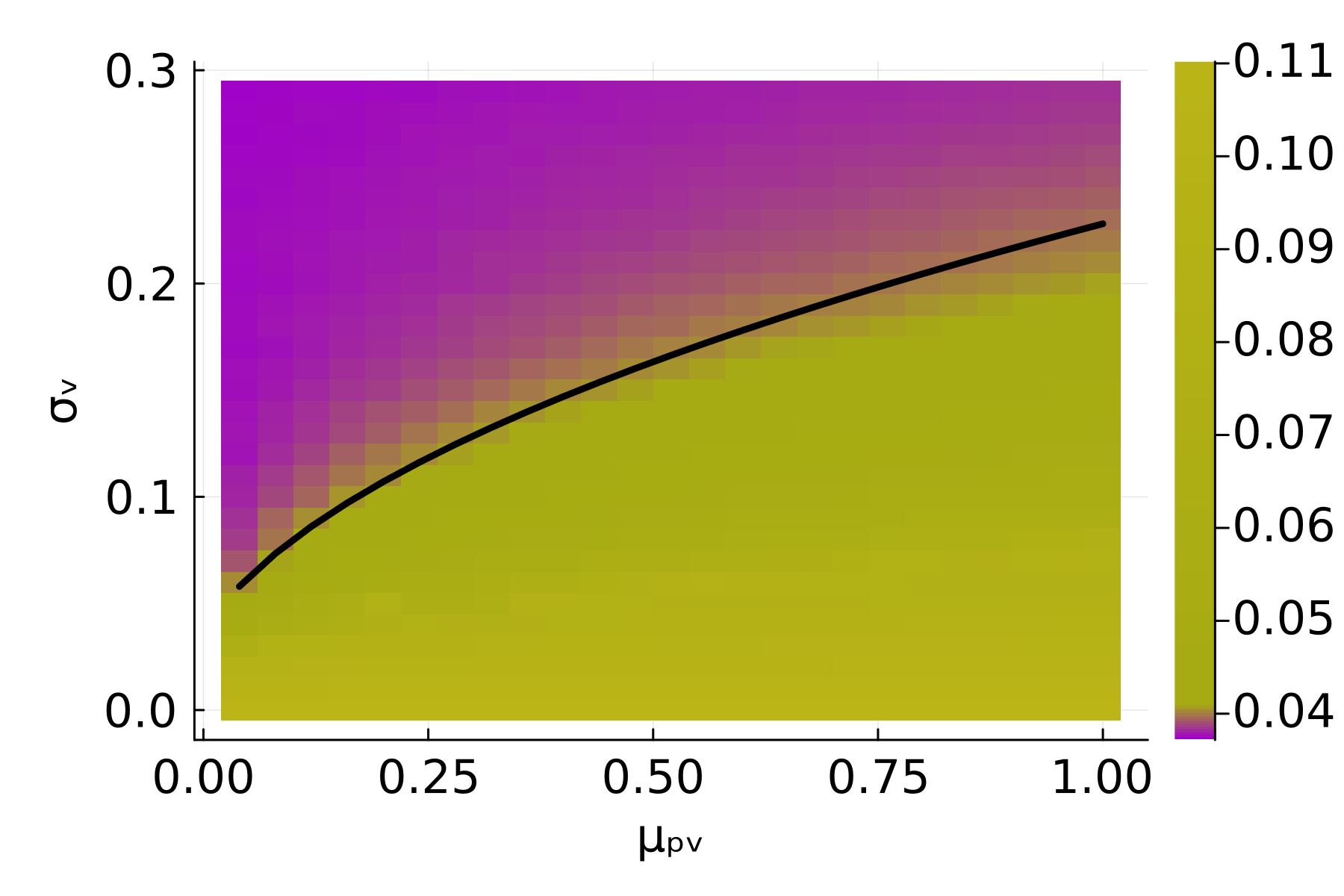}
    \caption{Phase diagram for the stationary-party-model \eqref{eq:HKap} for $2,000$ voters randomly distributed across $[0,1]$ and with three parties, $p_1=0.86, p_2 = 0.53$ and $p_3=0.2$. Parameters are $R_{vv} = 0.1$, $\mu_{vv}=1$ and $R_{pv}=0.5$ with $\mu_{pv}$ varying from 0 to 1. Colours show the value of the consensus diagnostic $\hat D$, averaged from $t=400$ to $t=500$. The black line shows the analytical approximation \eqref{eq:smallRsigmac} of the critical curve.} 
    \label{fig:muphasetransition}
\end{figure}
%


\subsection{Higher dimensional cases}
For $d\ge 2$, the maximal growth rate can be found from \eqref{eq:A} as
\begin{align}
    \gamma(s) = d \mu_{pv} + \mu_{vv}sR_{vv}\int_{||\mathrm{z}||\le 1}z\sin(sz)d\mathrm{z}- \frac{s^2}{2R_{vv}^2} \sigma_v^2 ,
\label{eq:gammakhigherdim}
\end{align}
where we assumed again that $R_{pv}>1/2$. Recall $z = \hat{\boldsymbol{k}}\cdot \mathrm{z}$ with unit vector $\hat{\boldsymbol{k}} = \boldsymbol{k}/k$.  For $s\ll 1$ we Taylor expand the integral term to obtain
\begin{align}
   sR_{vv}\int_{||z'||\le 1}z''\sin(sz'')dz'  &\approx \frac{R_{vv}}{d}s^2 \int_{||z'||\le 1}||z'||^2 dz' \notag\\
    &=  \frac{R_{vv}}{d}s^2 \frac{1}{d+2}S_{d-1},
\end{align}
where $S_{d-1} = \frac{2\pi^{d/2}}{\Gamma(d/2)}$ is the surface area of the $d$-dimensional unit sphere with $\Gamma(t)=\int_0^\infty x^{t-1}e^{-x}dx$.  
Hence, for $d\ge 2,$ we obtain 
\begin{align}
    \gamma(s) = d \mu_{pv} + \frac{2\pi^{d/2}}{d(d+2)\Gamma(d/2)}\mu_{vv}R_{vv}s^2 - \frac{\sigma_v^2}{2R_{vv}^2}s^2.
\end{align}
Applying a further approximation for $R_{vv}\ll 1$, we obtain the critical noise strength
\begin{align}
    \sigma_c^2 = d\frac{\mu_{pv}}{2\pi^2} + \frac{4 \pi^{d/2}}{d(d+2)\Gamma(d/2)}\mu_{vv}R_{vv}^3. 
\label{eq:higherdimsigma_c}
\end{align}
For the classical noisy Hegselmann--Krause model (\ref{eq:HK0}) with $\mu_{vv}=1$ and $\mu_{pv} = 0$, (\ref{eq:higherdimsigma_c}) reduces to the approximation obtained by \citet{wang2017noisy}. Note that in the classical noisy Hegselmann--Krause model (\ref{eq:HK0}) the critical noise strength $\sigma_c$ decreases as the dimension decreases. The inclusion of political parties, reflected in the linear contribution $d \mu_{pv}$, will be dominant for sufficiently large dimension $d$, consistent with our premise that parties have a stabilizing effect on the opinion dynamics of voters.

In summary, we have shown that the modified Hegselmann-Krause model displays a similar phase transition as the one found for the original noisy Hegselmann--Krause model \cite{garnier2017consensus,wang2017noisy}. We find that political parties act to enhance the formation of voter clusters, providing a stabilising force against voter noise. A remarkable finding is that this effect increases with dimension. This suggests that as the number of political topics increases, parties become more effective at stabilising the dynamics compared to the case when only a few topics dominate the political debate.


\section{Discussion}
\label{sec:disc}
We introduced and analyzed a modified Hegselmann--Krause model which describes the interactions of voters and parties in a $d$-dimensional opinion space. The model exhibits cluster formation and a phase transition from unstructured dynamics to unanimous consensus when all voters and parties collapse into the same region in opinion space. The model exhibits rich dynamical behaviour depending on the interaction radii of the voters and parties and the strength of the mutual interactions.

We established a sufficient condition for consensus in the deterministic version which states that consensus is guaranteed if the interaction radii are sufficiently large allowing for the interaction of all voters and parties and if the attractive forces dominate over the repulsive forces exerted by parties to delineate themselves from each other. We further employed mean-field theory to find the critical noise strength below which consensus occurs. Our analytical formula reflects a stabilizing effect of parties on consensus formation. 

The proposed model recreates important and complex political dynamics such as party-base clusters, swing voters, disaffected voters and transitions between those states.

The model typically exhibits clusters of voters around individual parties. These clusters form what is known as the party-base of a party. The party-base represents an important feature of politics as it is a core group of voters who consistently support a specific political party \cite{miller20202020}. The model further demonstrates the emergence of swing voters as a cluster of voters situated in opinion space between two parties. The parties then compete for the preference of the voters mediated by the repulsive force between them. The distances in opinion space between a voter from the swing cluster and each of the two parties are similar, so small changes in relative party positions can change which party is closest to a particular voter, and hence determines what party they would vote for. The model further supports clusters of disaffected voters that are not aligned with any political party. Such clusters of disaffected voters are generated when political parties evolve into regions in opinion space with large voter mass potentially leaving behind disaffected voters which do not experience any attracting force to the party if their distance is sufficiently large. This latter phenomenon is of increasing importance in modern political science \cite{e2001individual}. More extreme political scenarios, such as a sudden collapse of voters to a single party, can be found in the modified Hegselmann--Krause model. These scenarios occurred in our model in bounded domains of dimensions $d=1$ and $d=2$. In our simulations, the model approached consensus after a sufficiently long period of time. We remark that in unbounded domains and opinion spaces of dimension $d\ge 3$ recurrence of a random walker is not guaranteed \cite{doyle1984recurrence} and merging of clusters into unanimous consensus is unlikely.

While the model assumes that the evolution of voter opinion is due to the relative positions in opinion space, the reality is far more complex. Media consumption and lack of information play an important role in contributing to voters' decisions \cite{prior2013media}, which is not covered by the model. Another limitation of the model is the assumption that all parties and all voters have the same interaction radii $R_{pp}$ and $R_{vv}$, respectively, and exert the same force on the other agents. Political parties are clearly not all equal - it is conceivable that some parties exert stronger attractive forces on voters for example, due to political charisma or effective advertising campaigns. Similarly, some voters may be more open to the thoughts and opinions of others, so their interaction radii might be larger than those of other voters. In our model parties can freely meander through opinion space, either due to their own stochastic driving force or due to attraction to far away voter clusters (for a sufficiently large interaction radius $R_{vp}$). To ensure that parties remain in some bounded region in opinion space one may include memory in the model, or apply party specific boundary conditions. Inclusion of such agent-specific force strengths and interaction radii is planned for further research.

Another aspect of political dynamics not examined here is that some political issues may be more controversial than others. While we chose to examine the problem of a domain $[0,1]^d$ in this paper, it may be that some issues require more distance to be traversed than others in order for consensus to be achieved. For example, a rectangular domain $[0,L_1]\times [0,L_2],$ with $L_2\gg L_1$ would mean that convergence is easier along the smaller dimension because there is less space to traverse. Investigating whether altering the domain significantly alters the model behaviour is left for future work.

\section*{Data availability statement}
All code and data used to produce the figures are available from the GitHub repository: \url{https://github.com/PatrickhCahill/ModifiedHegselmannKrauseModel}.

\bibliographystyle{elsarticle-num-names}



\appendix
\section{Prototypical scenarios for a three party system}
Here we show that the 5 political scenarios of party-base, swing voters, political competitions, disaffected voters, and voter consensus, which we observed for the two-party system shown in Figure~\ref{fig:2party-example_behaviours}, are also observed in a system consisting of multiple parties. Figure~\ref{fig:3party-example_behaviours} shows simulations for a three-party system. 

\begin{figure}[htbp!]
    \centering
        \begin{subfigure}[htbp]{0.48\textwidth}
            \includegraphics[width=\linewidth]{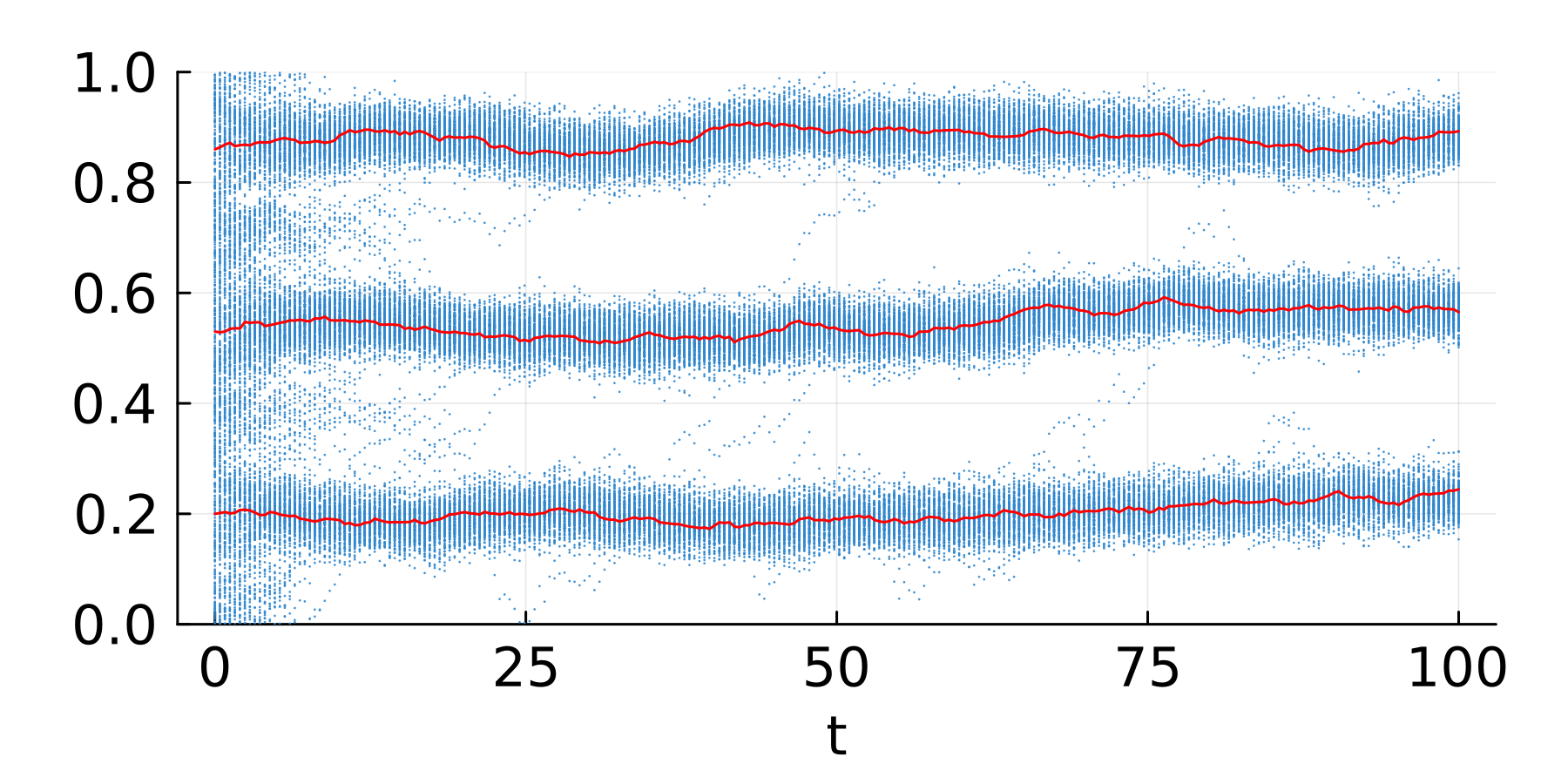}
            \caption{\textit{Party-base}}
            \label{fig:3party-political_base}
    \end{subfigure}
    \hfill
    \begin{subfigure}[htbp]{0.48\textwidth}  
        \centering
        \includegraphics[width=\textwidth]{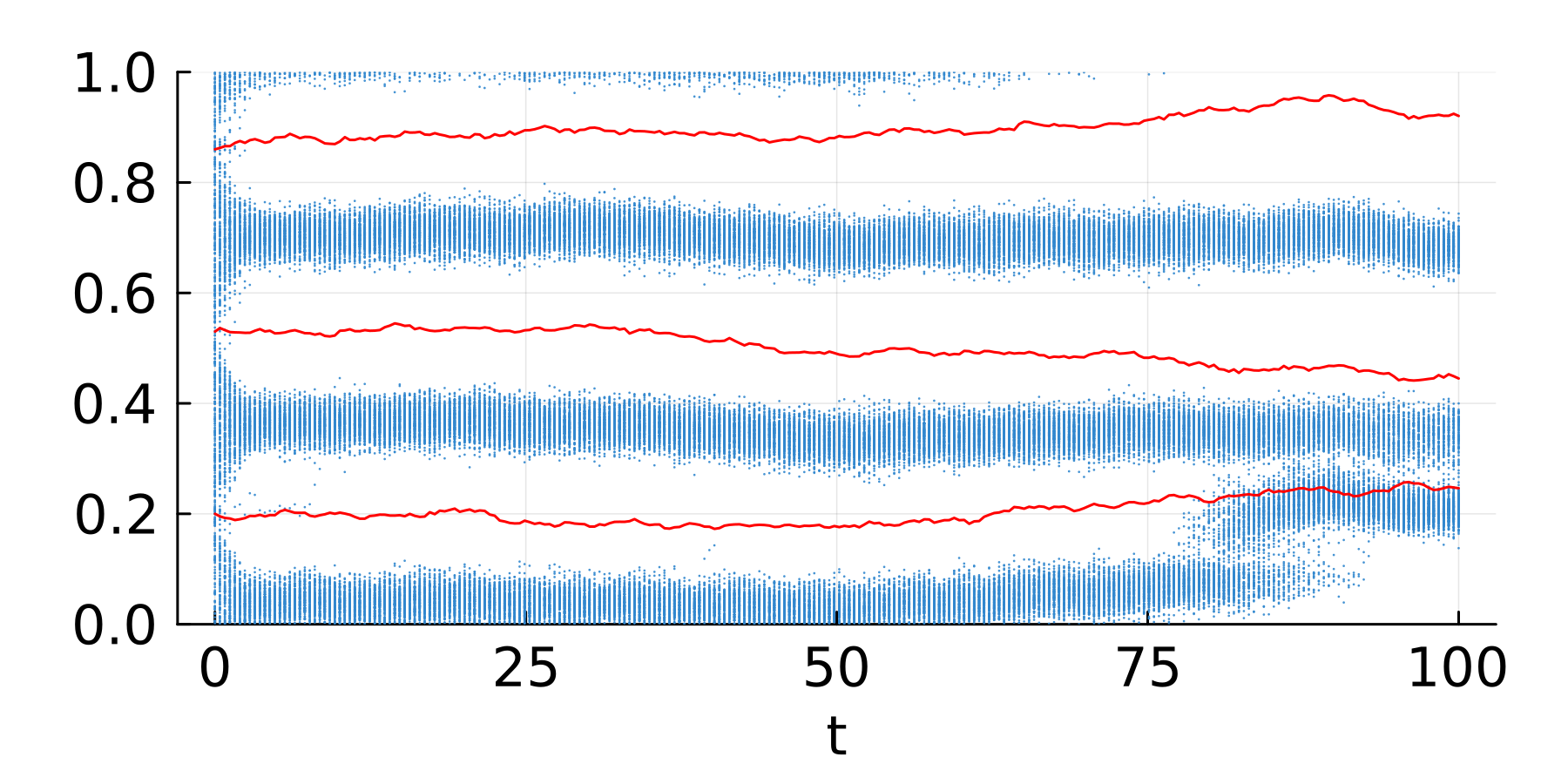}
        \caption{\textit{Swing voters}. Here $R_{pv} = 0.35$.}
        \label{fig:3party-swing_voters}
    \end{subfigure}
    \vskip\baselineskip
    \begin{subfigure}[htbp]{0.48\textwidth}
        \centering
        \includegraphics[width=\textwidth]{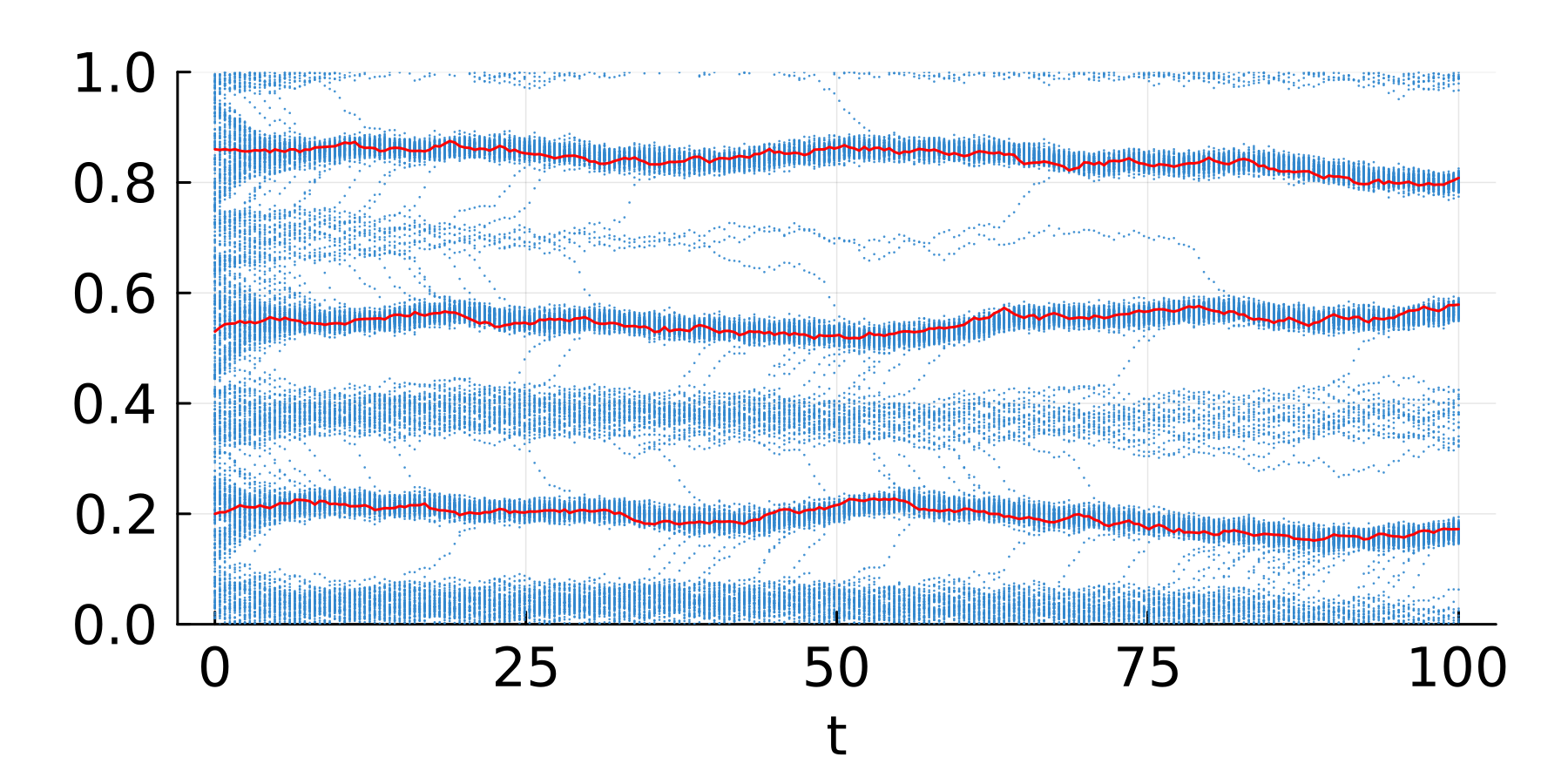}
        \caption{\textit{Political competition}. Here $\sigma_v = 0.01.$}
        \label{fig:3party-political_comp_1}
    \end{subfigure}
    \hfill
    \begin{subfigure}[htbp]{0.48\textwidth}  
        \centering
        \includegraphics[width=\textwidth]{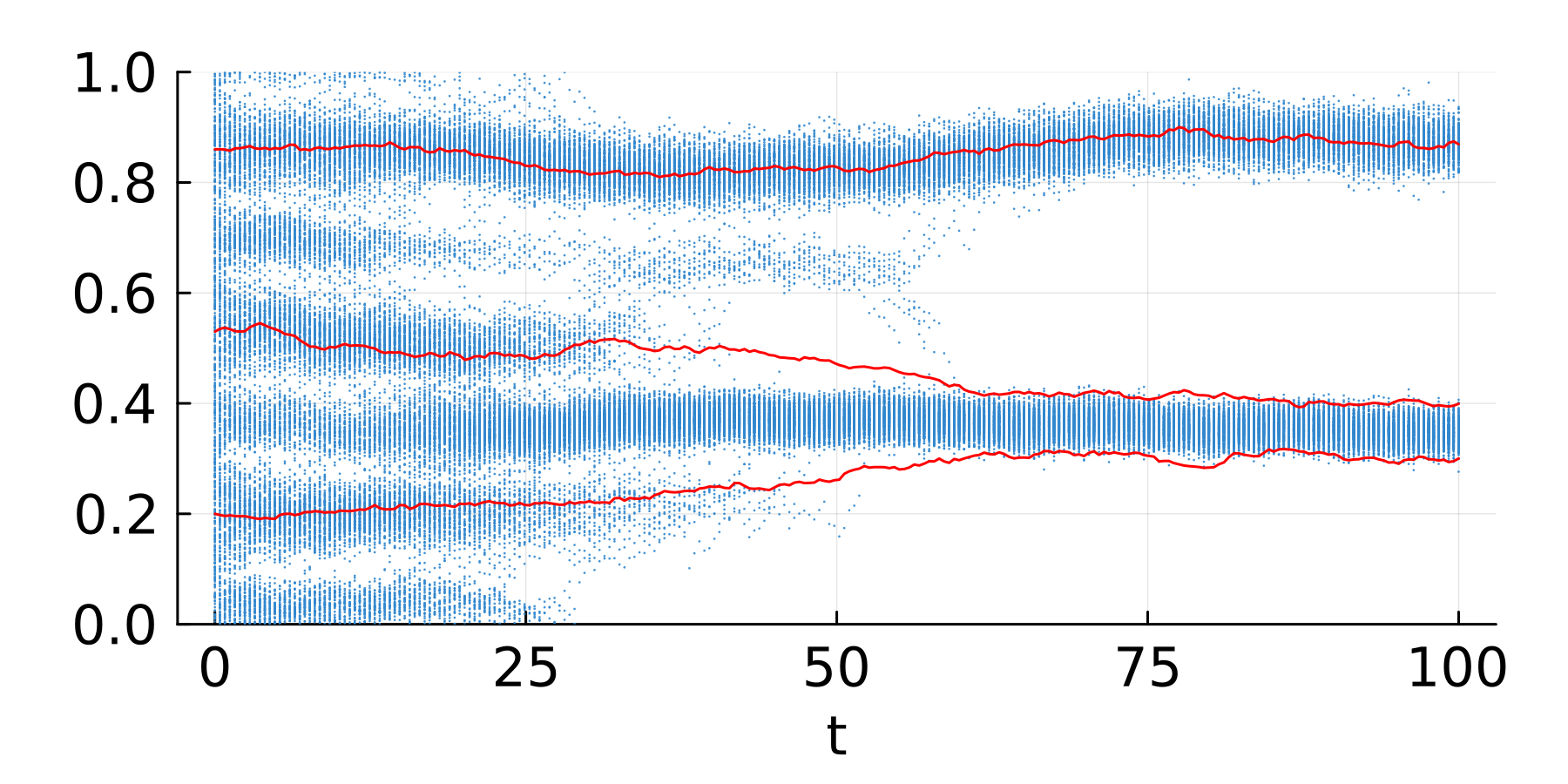}
        \caption{\textit{Political competition}: Here $R_{pv}= 0.23$.}
        \label{fig:3party-political_comp_2}
    \end{subfigure}
    \vskip\baselineskip
    \begin{subfigure}[htbp]{0.48\textwidth}   
        \centering 
        \includegraphics[width=\textwidth]{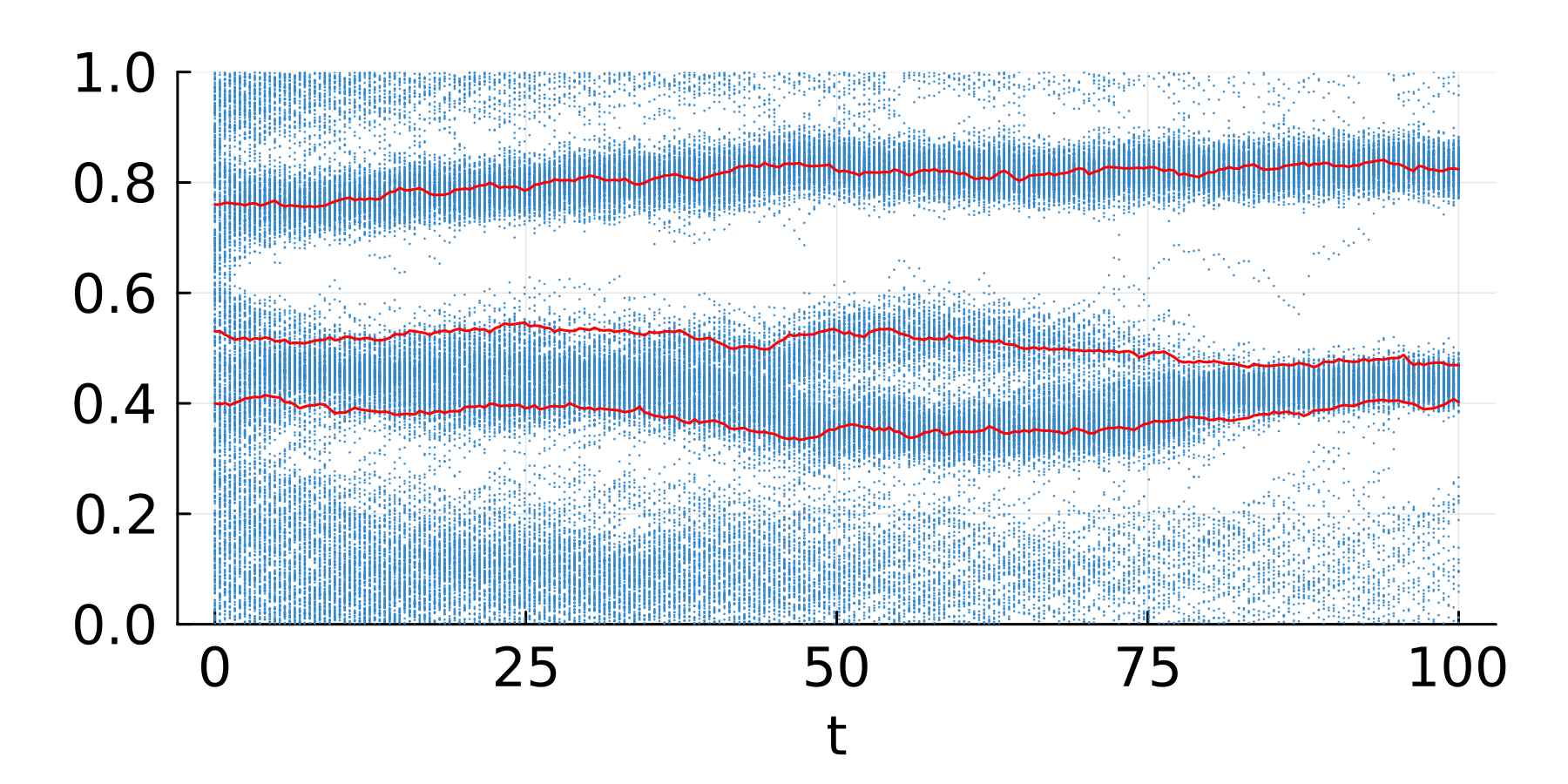}
        \caption{\textit{Disaffected voters}: Here $p_1(0)=0.4$, $p_2(0)=0.53$ and $p_3(0)=0.76$.} 
        \label{fig:3party-disaffected_voters}
    \end{subfigure}
    \hfill
    \begin{subfigure}[htbp]{0.48\textwidth}   
        \centering 
        \includegraphics[width=\textwidth]{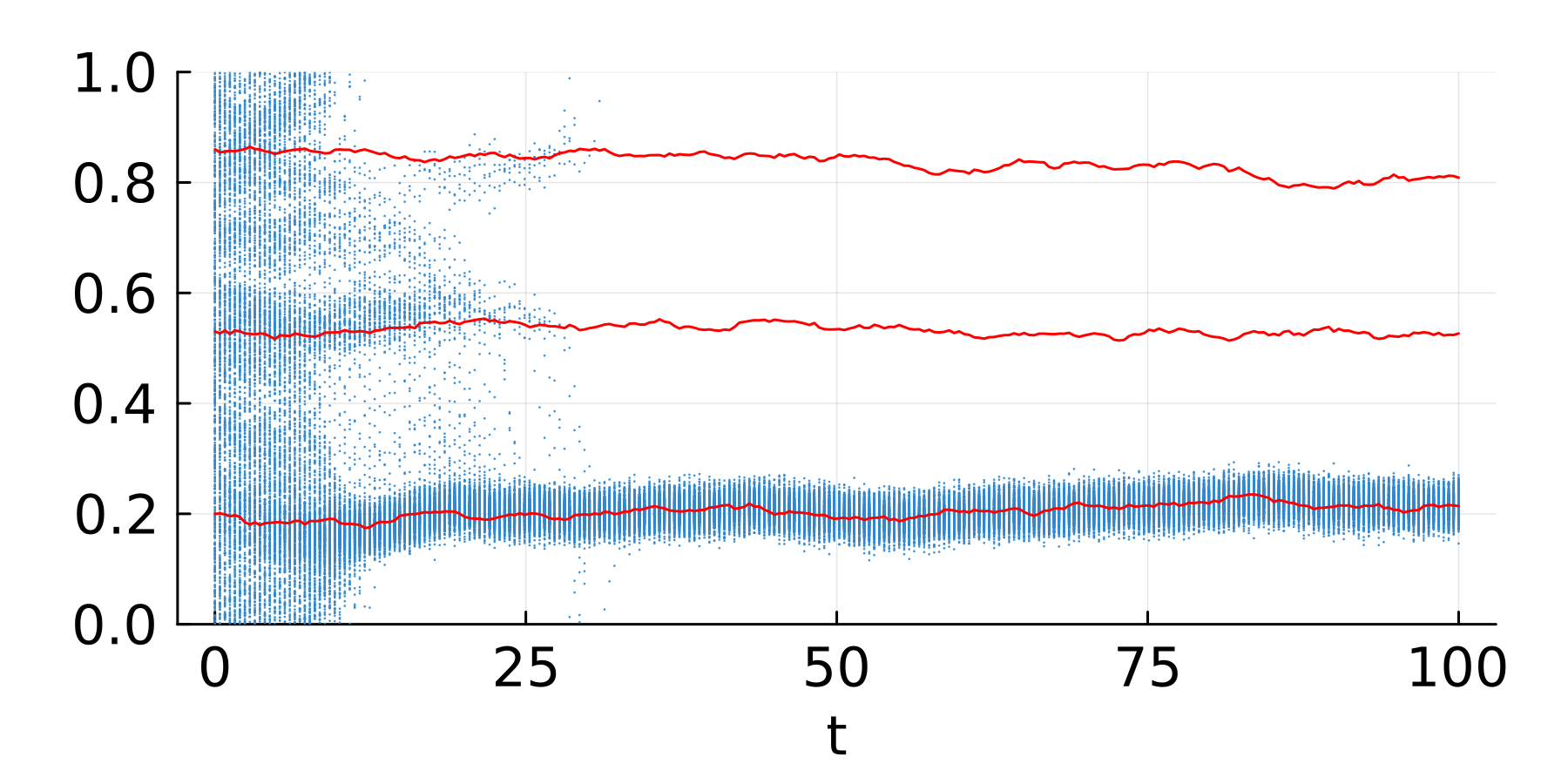}
        \caption{\textit{Voter consensus}. Here $R_{vv} = 0.3$} 
        \label{fig:3party-voter consensus}
    \end{subfigure}
    \caption{
    Prototypical political scenarios in a three-party system. At time $t=0$, $N_v = 1,000$ voters are distributed uniformly across $[0,1]$ and parties are initially at $p_1(0)=0.2$, $p_2(0)=0.53$ and $p_3(0)=0.86$  for Figures~\ref{fig:3party-swing_voters}-\ref{fig:3party-voter consensus}. The strengths of the voter forces are $\mu_{vv}=\mu_{pv}=1$. The strengths of the party forces are $\mu_{vp}=0.03$ and $\mu_{pp}=0.01$. The interaction radii are $R_{vv}=R_{pv}=R_{pp}=0.1$ and $ R_{vp}=0.2$. The noise strengths are $\sigma_v = 0.03$ and $\sigma_p = 0.005$.
    }
    \label{fig:3party-example_behaviours}
\end{figure}
\begin{figure}[htbp!]
    \centering
    \includegraphics[width=0.5\linewidth]{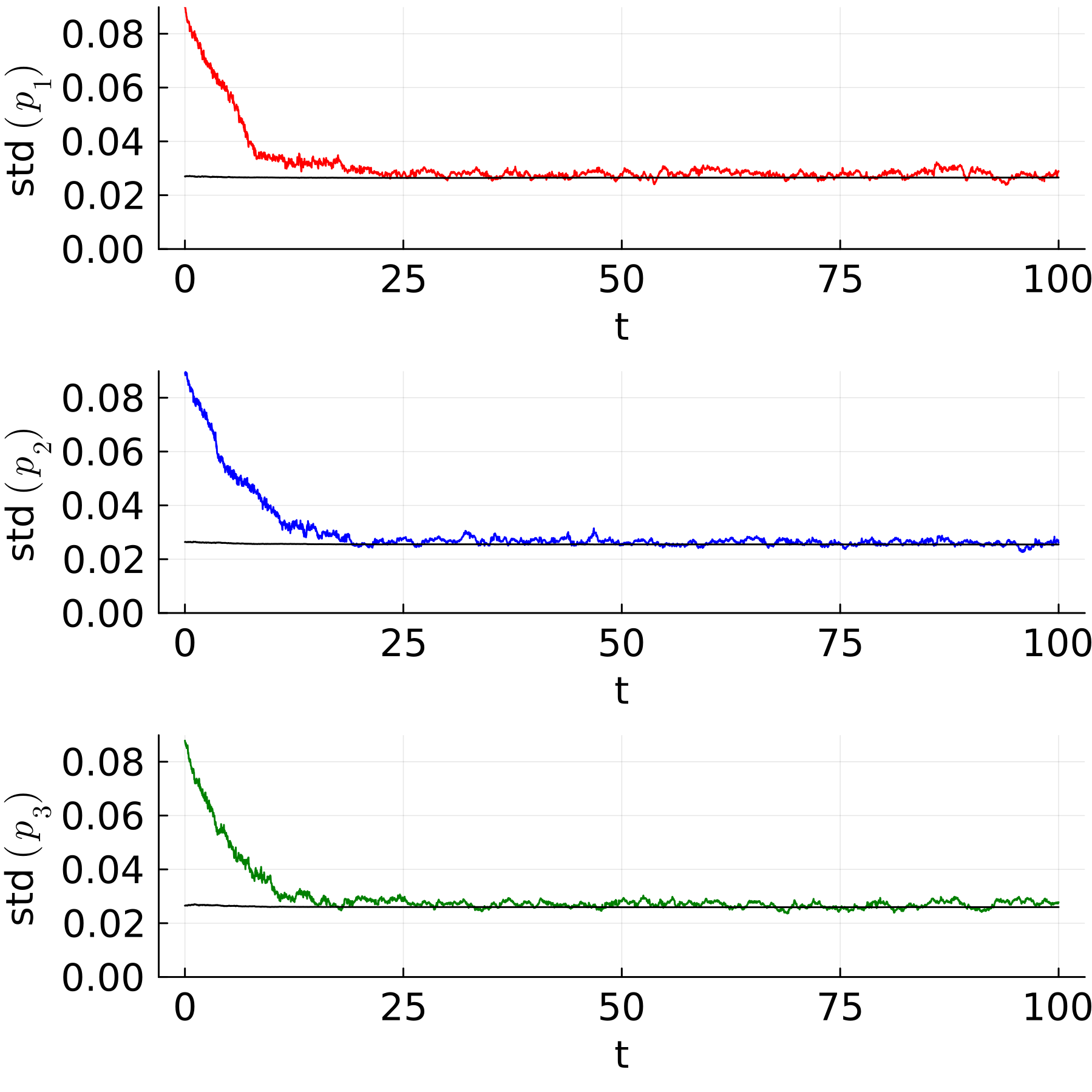}
    \caption{Standard deviations of the voter cluster centred around each of the three parties in Figure~\ref{fig:3party-political_base}. The black line denotes the analytical approximation \eqref{eq:stdOU}. Top: party $p_3$ with $p_3(0)=0.86$. Middle: party $p_2$ with $p_2(0)=0.53$. Bottom: party $p_1$ with $p_1(0)=0.2$.}
    \label{fig:3party-deltacluster_check}
\end{figure}

Figure~\ref{fig:3party-political_base} shows an example for a party base formation around each of the three parties  with $N_v = 1,000$ voters which initially are distributed uniformly across $[0,1]$ and with $3$ parties which are initially located in opinion space at $p_1(0)=0.2$, $p_2(0)=0.53$ and $p_3(0)=0.86$. The strengths of the voter forces were chosen as $\mu_{vv}=\mu_{pv}=1$ and those of the party forces as $\mu_{vp}=0.03$ and $\mu_{pp}=0.01$. The interaction radii are $R_{vv}=R_{pv}=R_{pp}=0.1$ and $R_{vp}=0.2$. The noise strengths are $\sigma_v = 0.03$ and $\sigma_p = 0.005$, which again ensures that the voter dynamics occurs much faster than the party dynamics. 

Figure~\ref{fig:3party-deltacluster_check} shows the standard deviations of voters in the three party base clusters corresponding to the party base case depicted in Figure~\ref{fig:3party-political_base}. We show results of the numerical simulation of the modified Hegselmann--Krause model \eqref{eq:HKa}-\eqref{eq:HKb} as well as the prediction of the cluster size $\delta_{\rm{cl}}$ given by \eqref{eq:deltacluster}. Since the voters form three approximately equal clusters, we expect that $N_{v}^{(c)} \approx \frac{N_v}{3}$ and $ N_{p}^{(c)}=1$ (as voters cluster around a single isolated party), implying $\delta_{\rm{cl}}=4\, \rm{std}{\rm OU}=0.104$. Note that here the cluster is covered by the interaction radii with $\delta_{\rm{cl}}<2R_{vv}=2R_{pv}=0.2$. To numerically estimate $\delta_{\rm cl}$ for the full modified Hegselmann--Krause model, we set $N_v^{(c)}$ to be the number of voters which are within a distance of $0.15$ away from the respective parties $p_\alpha$. The threshold $0.15$ is chosen because the region $[p_\alpha-0.15, p_\alpha+0.15]$ for $\alpha = 1,2,3$ contains the cluster around $p_\alpha$ since the interaction radii are sufficiently small with $R_{vv}=R_{pv}=0.1$. The region is also sufficiently small that it does not contain any other clusters. Figure~\ref{fig:3party-deltacluster_check} shows that our expression \eqref{eq:deltacluster} well approximates the observed cluster size as measured by the standard deviations.

As in the two-party system, the stable party bases break down if parties begin to compete for voters (or their interaction radius $R_{pv}$ is increased with $||p_\alpha-p_\beta|| < {\rm{max}}(R_{pv},R_{vv})$ (cf \ref{eq:condswing}), and swing voter states may occur as shown in Figure~\ref{fig:3party-swing_voters}. Here, the parameters are as for the party base scenario but for a larger party interaction radius $R_{pv}=0.35$. The size of swing voter clusters $\delta_{\rm{cl}}$ is also given by \eqref{eq:deltacluster} with $N^{(c)}_p=2$ because the party above and below are both within the interaction radius of the voter cluster. We find $\delta_{\rm{cl}}=0.085$ which approximates well the observed cluster size. The state of swing voters can be disturbed by the party dynamics as shown in Figure~\ref{fig:3party-swing_voters} around $t\approx 80$. The swing voter cluster between parties $p_1$ and $p_3$ is broken up after the middle party $p_2$ drifted sufficiently towards the swing voter cluster which now interacts with all three parties, eventually leading to the formation of a party base cluster around $p_1$. Note the coexistence of party-base dynamics and swing voters in Figure~\ref{fig:3party-swing_voters}. It is also possible for party base clusters to break up and form swing voter clusters when parties move closer to each other. An example of this is shown in Figure~\ref{fig:3party-disaffected_voters} around $t\approx 70$ when parties $p_1$ and $p_2$ move closer together and their respective party bases merge to form a single swing voter cluster. 

The three-party system also supports political competition through their party dynamics similar to the cases described in Section~\ref{sec:examples}. Figure~\ref{fig:3party-political_comp_1} shows clusters of swing voters that are slowly entrained by one of their two closest parties. We again observe transitory coexistence of party bases and swing voter clusters. Here $\sigma_v = 0.01$ ensures the voter behaviour is slower than in Figure~\ref{fig:3party-political_base}. Swing voters decide to join a particular party base either by their individual stochastic slow exploration of the opinion space or by parties moving towards them on a faster time scale. The latter scenario may be viewed as a form of party competition to attract more voters. 

Analogously to Figure~\ref{fig:2party-political_comp_2}, in Figure~\ref{fig:3party-political_comp_2}  the voters around the bottom two parties $p_1$ and $p_2$ alternate between party base and swing voter behaviour. A party base around the top party $p_3$. There is further competition between the top and middle parties $p_3$ and $p_2$ over a swing voter cluster that dissolves around $t\approx 60.$ 

Figure~\ref{fig:3party-disaffected_voters} shows an example with disaffected voters. Here the same parameters are used as for the party base clusters in Figure~\ref{fig:3party-political_base} but with different initial conditions for the parties allowing for unoccupied political space. As opposed to the example of disaffected voters in the two-party case in Section~\ref{sec:examples}, the disaffected voters do not form a localized cluster here of size $\delta{\rm{cl}}$ since $R_{vv}$ is too small.  

Lastly, voter consensus behaviour is depicted in Figure~\ref{fig:3party-voter consensus} where a single party base cluster forms around party $p_1$.

\end{document}